\documentclass[11pt]{article}

\usepackage[margin=1in]{geometry}
\usepackage{iftex}
\usepackage[T1]{fontenc}
\usepackage{lmodern}
\usepackage{microtype}
\usepackage{amsmath,amssymb,mathtools}
\usepackage{amsthm,thmtools}
\usepackage{etoolbox}
\makeatletter
\@ifundefined{newcounteralias}{}{%
  \renewcommand{\thmt@autorefsetup}{%
    \@xa\def\csname\thmt@envname autorefname\@xa\endcsname\@xa{\thmt@thmname}%
  }%
}
\makeatother
\usepackage{mathrsfs}
\usepackage{xifthen}
\usepackage{enumitem}
\usepackage{booktabs}
\usepackage[dvipsnames]{xcolor}
\usepackage{tikz}
\usetikzlibrary{arrows.meta}
\usepackage{tcolorbox}
\tcbuselibrary{breakable}
\usepackage[numbers]{natbib}

\usepackage{hyperref}
\hypersetup{
  colorlinks=true,
  pdfpagemode=UseNone,
  citecolor=OliveGreen,
  linkcolor=NavyBlue,
  urlcolor=Magenta,
  pdfstartview=FitW
}
\usepackage[capitalise,nameinlink]{cleveref}

\newcommand{\addsharedenv}[1]{\declaretheorem[sibling=theorem]{#1}}

\theoremstyle{plain}
\declaretheorem{theorem}
\forcsvlist{\addsharedenv}{observation,claim,fact,lemma,proposition,corollary}

\theoremstyle{definition}
\forcsvlist{\addsharedenv}{definition,remark,condition,assumption,example}

\def\*#1{\mathbf{#1}}
\def\+#1{\mathcal{#1}}
\def\-#1{\mathrm{#1}}
\def\^#1{\mathbb{#1}}
\def\!#1{\mathfrak{#1}}
\def\$#1{\mathscr{#1}}

\newcommand{\tuple}[2][]{\ifthenelse{\isempty{#1}}
  {\left(#2\right)}{\left(#2\right)_{#1}}}
\newcommand{\set}[2][]{\ifthenelse{\isempty{#1}}
  {\left\{#2\right\}}{\left\{#2\right\}_{#1}}}

\DeclareMathOperator{\oPr}{\mathbf{Pr}}
\renewcommand{\Pr}[2][]{\ifthenelse{\isempty{#1}}
  {\oPr\left[#2\right]}{\oPr_{#1}\left[#2\right]}}
\DeclareMathOperator{\oE}{\mathbf{E}}
\newcommand{\E}[2][]{\ifthenelse{\isempty{#1}}
  {\oE\left[#2\right]}{\oE_{#1}\left[#2\right]}}
\DeclareMathOperator{\oVar}{\mathbf{Var}}
\newcommand{\Var}[2][]{\ifthenelse{\isempty{#1}}
  {\oVar\left[#2\right]}{\oVar_{#1}\left[#2\right]}}

\newcommand{\per}[1]{\operatorname{per}\left(#1\right)}
\newcommand{\Cov}{\operatorname{Cov}}
\newcommand{\one}{\mathbf{1}}

\newcommand{\Dir}{\mathcal{E}}
\newcommand{\trel}{\tau_{\mathrm{rel}}}
\newcommand{\trelax}{\trel}
\newcommand{\Tburn}{T_{\mathrm{burn}}}
\newcommand{\tmix}{\tau_{\mathrm{mix}}}
\newcommand{\fpras}{\ensuremath{\mathsf{FPRAS}}}

\newcommand{\e}{\mathrm{e}}
\renewcommand{\epsilon}{\varepsilon}
\renewcommand{\emptyset}{\varnothing}

\newcommand{\supp}{\operatorname{supp}}
\newcommand{\Paths}{\operatorname{Paths}}
\newcommand{\En}{\mathsf{En}}

\title{Fast FPRAS for the Permanent}
\author{
Xiaoyu Chen\thanks{
  Email: \texttt{xiaoyu@mit.edu},
  Massachusetts Institute of Technology.
  Supported by the NSF CAREER grant CCF-2443045, and the Reed Fund at MIT.
}
\and
Heng Guo\thanks{
  Email: \texttt{hguo@inf.ed.ac.uk},
  School of Informatics, University of Edinburgh,
  Edinburgh, United Kingdom.
}
\and
Eric Vigoda\thanks{
  Emails: \texttt{\{vigoda, xiongxinyang\}@ucsb.edu},
  University of California, Santa Barbara.
  Supported in part by NSF grant CCF-2147094.
}
\and
Xiongxin Yang\footnotemark[3]
}
\date{\today}

\begin{document}
\pagenumbering{roman}
\maketitle

\begin{abstract}
We give an FPRAS for the permanent of an $n\times n$ $0/1$ matrix
with running time
$\widetilde{O}(n^{3.5}\varepsilon^{-2})$.
Our algorithm extends to a strongly polynomial FPRAS for arbitrary
nonnegative matrices, as in previous works.
Jerrum, Sinclair, and Vigoda (2004) gave the first FPRAS for the
permanent of a nonnegative matrix.  The running time was subsequently
improved to $\widetilde{O}(n^7)$ by Bez\'akov\'a, \v{S}tefankovi\v{c}, Vazirani,
and Vigoda (2008), and recently to $\widetilde{O}(n^6)$ by Chen, Vigoda, and
Yang (2026).

We introduce a multicommodity-flow bound inspired by electrical
flows, replacing the usual path-length factor by routing energy.
For a boosted version of the classical JSV chain, we prove a
relaxation-time bound of $O(n^3\log n)$ and show that stationary
trajectories of this length estimate all stationary hole-pattern
probabilities, yielding an
$\widetilde O(n^5)$-time FPRAS algorithm.
Our new hole-weighted slide (HWS) chain improves both bounds
to $O(n^2\log n)$, yielding an $\widetilde O(n^4)$-time algorithm.
Finally, we obtain the claimed $\widetilde O(n^{3.5})$ running time by using a subset of $\widetilde{O}(\sqrt{n})$ checkpoint temperatures in an iterated sequence of warm-starts to obtain initializations at every temperature.

\end{abstract}

\thispagestyle{empty}

\newpage


\pagenumbering{arabic}

\section{Introduction}

Evaluation of the permanent of a matrix is a fundamental problem in theoretical computer science.
For an $n\times n$ matrix $A=(a_{ij})$, its permanent is defined as
\[
\per{A}:=\sum_{\sigma\in S_n}\prod_{i=1}^n a_{i,\sigma(i)},
\]
where $S_n$ is the set of permutations of $\{1,\dots,n\}$.
For the special case of a $0/1$ matrix $A$, the permanent $\per{A}$ is the number of perfect matchings in the
 bipartite graph $G=(V_1\cup V_2,E)$ where $|V_1|=|V_2|=n$ and $A$ is the bipartite adjacency matrix for the edges in $E$.

Valiant \cite{Valiant79} proved that exact computation of the permanent, even restricted to $0/1$ matrices, is \#P-complete.  Subsequently the aim was the design of a fully polynomial randomized approximation scheme ($\fpras$).  Jerrum, Sinclair, and Vigoda~\cite{JSV04}
gave the first $\fpras$ for $0/1$ matrices, and more generally for every nonnegative matrix.  Bez\'akov\'a,
\v{S}tefankovi\v{c}, Vazirani, and Vigoda~\cite{BSVV08}
improved the running time of the JSV-algorithmic approach, achieving $O(n^7\log^{4}{n})$ for $0/1$ matrices.
Recent work of Chen, Vigoda, and Yang~\cite{CVY26} further improved the running time for $0/1$ matrices to $O(n^6\log^5{n})$.
In this paper, we improve the running time to $O(n^{3.5}\log^{9}{n})$ for $0/1$ matrices.

\begin{theorem}
\label{thm:main-improved}
For all $0<\varepsilon,\delta<1$, there is a randomized
algorithm that, given an $n\times n$ $0/1$ matrix $A$ as input,
outputs an estimate $\widehat p\geq0$ such that
\[
\Pr{
(1-\varepsilon)\per{A}
\leq
\widehat p
\leq
(1+\varepsilon)\per{A}
}
\geq1-\delta,
\]
and has running time
\begin{equation}
O\left(
n^{7/2}\varepsilon^{-2}\log^9(n)
\log(2/\delta)\right).
\label{eq:improved-runtime}
\end{equation}
\end{theorem}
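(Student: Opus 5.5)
The plan follows the JSV simulated-annealing framework. We use a sequence of $\ell=O(n^2\log n)$ activities (edge-weight schedules) $\lambda_0,\dots,\lambda_\ell$. At $\lambda_0$ the matrix is complete, so the partition function and hole weights are known exactly. At $\lambda_\ell$ the non-edges have weight $1/n!$, so the weight of perfect matchings is within a constant factor of $\per{A}$. At each temperature we maintain weights $w(u,v)$ on hole patterns, approximating the ideal weights $w^*(u,v)=\lambda(\+M)/\lambda(\+N(u,v))$ within a constant factor. The output is the telescoping product of ratio estimates $Z_{i+1}/Z_i$, corrected at the end by the estimated stationary probability of perfect matchings.

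The key ingredient is the hole-weighted slide (HWS) chain, with two facts.
\begin{enumerate}[label=(\roman*)]
\item \emph{Relaxation time.} With approximately correct hole weights, its relaxation time is $O(n^2\log n)$. We prove this with an electrical-flow multicommodity flow: routing energy replaces the path-length factor in the canonical-path congestion, so we lose no extra factor of $n$.
\item \emph{Trajectory estimation.} A stationary trajectory of length $\widetilde O(n^2)$ (each step costing $O(1)$ amortized) estimates all $n^2$ hole-pattern probabilities simultaneously to constant accuracy. For this we bound the variance of time averages by $\trel$ times the stationary variance. Each hole pattern has stationary mass $\Omega(1/n^2)$, so a union bound over the $n^2$ patterns costs only log factors.
\end{enumerate}
These facts let us refine $w$ at temperature $i+1$ from a run at temperature $i$. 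Used at all $\ell$ temperatures, they already give $\widetilde O(n^4)$.

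To reach $n^{3.5}$, we do not run a fresh burn-in at each of the $\ell=\widetilde O(n^2)$ temperatures. Instead we choose $\widetilde O(\sqrt n)$ checkpoint temperatures, spaced $\widetilde O(n^{1.5})$ apart. Consecutive temperatures change the stationary distribution only by a $1+O(1/n)$ factor in density. So a sample at one temperature is a warm start (bounded $\chi^2$-distance) for the next, and one HWS step per temperature advances a coupled walk along the schedule.

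The iterated warm-start argument is as follows. Starting from a checkpoint where we have a good sample and good weights, we run a single trajectory that slowly sweeps the temperature. The accumulated $\chi^2$ divergence is controlled as long as each block of $\trel$ steps changes the temperature by at most a constant factor in $\chi^2$; this is exactly where the $\sqrt n$ spacing balances the $n^2$ relaxation. Within the sweep we accumulate hole-pattern statistics at every temperature to produce the weights needed downstream. The total is $\widetilde O(\sqrt n)$ checkpoints, each costing $\widetilde O(n^3)$ HWS steps including the sweep and estimation, which gives $\widetilde O(n^{3.5})$. Final estimation with $\varepsilon$ accuracy costs an extra $\varepsilon^{-2}$ factor: we use a variance-of-product bound across the telescoping ratios, each ratio having relative variance $O(1/\ell)$ after averaging. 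Median-of-means supplies the $\log(1/\delta)$ factor.

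I expect the main obstacle to be the warm-start propagation. We must show that the errors in the weights $w$ estimated along the sweep, and the non-stationarity of the trajectory, do not compound multiplicatively over the $\widetilde O(n^{1.5})$ temperatures between checkpoints. The approach I would try first is this: show that the $\chi^2$-distance to stationarity contracts by $e^{-t/\trel}$ while each temperature step inflates it by at most $1+O(1/n)$. With steps interleaved appropriately, the distance then stays bounded by a constant. Constant-factor weight accuracy needs only $O(1)$ relative precision, which gives slack to absorb these errors.
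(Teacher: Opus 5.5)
Your item (i) matches the paper's \cref{lem:hws-relaxation}. Item (ii), however, does not follow from the tool you name. Bounding the variance of a time average by $\trel$ times the stationary variance only gives $\Var{\widehat p_S}\lesssim\trel\,p_S/T$, a relative variance of order $\trel/(Tp_S)$. With $p_S=\Theta(n^{-2})$, this certifies constant accuracy only for $T$ of order $\trel/p_S=\widetilde O(n^4)$ per learning trajectory, and no union bound repairs that. The paper needs the stronger $p_S^2$-scale estimate of \cref{lem:HWS-occupation-estimation}, proved in three steps:
couple $\mu_{u,v}$ with $\mu_\emptyset$ by switching the augmenting path;
use the energy-weighted flow to get $|m_S-m_{\mathcal P}|^2\le K\Dir_P(f,f)$ for the conditional means of any centered $f$, hence $(\E[\pi]{gf})^2\le4Kp_S^2\Dir_P(f,f)$ with $g=\one_S-p_S$;
apply this to $f_T=\sum_{r<T}P^rg$ to bound the covariance sum by $4Kp_S^2$.

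Your cost accounting also does not close. With $\ell=O(n^2\log n)$ temperatures, learning at every temperature costs $\ell\cdot\widetilde\Omega(n^2)=\widetilde\Omega(n^4)$. Each counting round needs a decorrelated sample at each of the $\ell$ temperatures, giving $\widetilde O(n^4\varepsilon^{-2})$. Your budget of $\widetilde O(n^3)$ per checkpoint interval leaves only $\widetilde O(n^{1.5})$ steps per temperature, which is below one relaxation time. The paper instead uses the $L=O(n\log^2n)$ schedule of \cref{lem:cooling-schedule}. Its whole-product estimator has relative second moment at most $64/27$, so $O(\varepsilon^{-2})$ rounds suffice.

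Finally, ``$O(1)$ amortized per step'' ignores how HWS pivots are drawn: with probabilities proportional to $\lambda(u,a)L_v(a)$. Complete sampling tables cost $O(n^3)$ per activity--weight pair, hence $\widetilde O(n^4)$ overall; this is the paper's second $n^4$ bottleneck. The paper removes it with top-$\sqrt n$ partial tables built by fast matrix multiplication, plus rejection sampling, at $\widetilde O(\sqrt n)$ expected time per transition. Complete tables are built only at checkpoints.

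Your warm-start mechanism, which is the heart of the $n^{3.5}$ bound, is not sound as stated. A step $\lambda\mapsto\lambda e^{-d}$ multiplies $\lambda(M)$ by $e^{-dH(M)}$, where $H(M)$, the number of nonedges of $M$, can be as large as $n$. A schedule of $O(n^2\log n)$ steps covering $\beta=-\log\lambda\in[0,\Theta(n\log n)]$ therefore changes densities by constant factors, not by $1+O(1/n)$. Even granting $1+O(1/n)$, one HWS step per temperature contracts $\chi^2$ only by $e^{-1/\trel}=1-\widetilde\Theta(n^{-2})$. Your bound on $1+\chi^2$ then grows like $e^{\Theta(\sqrt n)}$ across $n^{1.5}$ temperatures.

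More fundamentally, uniformly spaced checkpoints carry no guarantee. The quantity $1+\chi^2(\mu_c\Vert\mu_i)=Z(\beta_i)Z(2\beta_c-\beta_i)/Z(\beta_c)^2$ is governed by the curvature of $z=\log Z$, which can concentrate in short ranges of $\beta$. The missing idea is the SVV-style adaptive selection of \cref{lem:checkpoint-schedule}. Using constant-factor estimates $\widehat Z_i$ (prefix products of the whole-product estimator, produced as learning proceeds), it places checkpoints so that $\mathcal B(c_j,i)\le64$ for every $i$ up to the next checkpoint. It bounds their number by $O(\sqrt n\log^2n)$ using the measure $tz''(t)\,dt$, whose total mass is at most $\log n!$, together with Cauchy--Schwarz. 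Restricting warm starts to perfect matchings is what makes a single $\log Z$ the relevant budget.

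No sweep is then needed. Each requested sample at $\lambda_i$ comes from a fresh traversal $\lambda_0\to\widehat\lambda_1\to\cdots\to\widehat\lambda_j\to\lambda_i$. Each conversion is a block-and-inspect HWS run of $\widetilde O(n^2)$ steps from a warm start (\cref{lem:HWS-perfect-sampling}). This costs $\widetilde O(n^{5/2})$ per request and $\widetilde O(n^{7/2})$ over the $\widetilde O(n)$ requests. Fresh randomness per request supplies the independence needed for median amplification and for the counting lemma.
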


The bound in \cref{eq:improved-runtime} counts arithmetic operations
and comparisons at unit cost; the simpler algorithm in
\cref{thm:main-n4}, presented in \cref{sec:algorithm-n4},
admits the weaker bound of 
$\widetilde O(n^4+n^3\varepsilon^{-2})$ in bit operations,
as discussed in \cref{rem:implementation}.
A refinement of our algorithm improves the running-time bound in \cref{eq:improved-runtime} to
$\widetilde O(n^{7/2} + n^3\varepsilon^{-2})$, but we retain the stated
bound to simplify the algorithm and its presentation.

As in \cite{JSV04}, applying the matrix-scaling reduction of Linial, Samorodnitsky, and
Wigderson~\cite{LSW98}, we obtain a
strongly polynomial-time algorithm for arbitrary nonnegative matrices (again, we count arithmetic operations
and comparisons at unit cost as in \cite{LSW98,JSV04}).

\begin{corollary}
\label{cor:weighted-main}
For all $0<\varepsilon,\delta<1$, there is a strongly
polynomial randomized algorithm for the permanent of an arbitrary
$n\times n$ nonnegative matrix, with the same approximation
guarantee as in \cref{thm:main-improved}, and running time
\begin{equation}
O\left(
n^5\log n
+n^4\log^5 n\log(n/\delta)
+n^3\varepsilon^{-2}\log^4 n\log(2/\delta)
\right).
\label{eq:weighted-runtime}
\end{equation}
\end{corollary}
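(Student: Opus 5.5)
The plan follows \cite{JSV04}: reduce the nonnegative-matrix case to a matrix that is close to doubly stochastic, and then run the annealing algorithm on that matrix. The running time in \cref{eq:weighted-runtime} splits into three terms, one for each phase.

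\textbf{Phase 1: scaling.} First, in strongly polynomial time, I decide whether $\per{A}>0$ by running a bipartite matching algorithm on the support of $A$; if there is no perfect matching, the algorithm outputs $0$. Otherwise, I apply the Linial--Samorodnitsky--Wigderson scaling \cite{LSW98}. This gives positive diagonal matrices $X,Y$ such that $B=XAY$ has every row and column sum within $1\pm 1/n^{c}$. The permanent transforms as $\per{A}=\per{B}/(\prod_i x_i\prod_j y_j)$, so a $(1\pm\varepsilon)$-estimate of $\per{B}$ gives a $(1\pm\varepsilon)$-estimate of $\per{A}$. The scaling is strongly polynomial, and I expect it to account for the $O(n^5\log n)$ term.

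\textbf{Phase 2: annealing on $B$.} For a nearly doubly stochastic $B$, van der Waerden's bound (with a perturbation argument) gives $\per{B}\ge n!/n^n\cdot(1-o(1))\approx e^{-n}$. More importantly, the ratio between the weight of near-perfect matchings with holes at $(u,v)$ and the weight of perfect matchings is polynomially bounded. Hence the hole-weight ideal parameters $w(u,v)$ are within polynomial factors of their initial values. I then run the same edge-activity annealing as in \cref{thm:main-n4}/\cref{thm:main-improved}. The process starts from the all-ones activities, where $\per{J}=n!$ is known, and moves activities geometrically toward $B$'s entries. Here a new choice is needed. For weighted matrices the number of temperatures is not governed by a $0/1$ structure. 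Instead, as in JSV, I reduce each non-edge activity multiplicatively by $(1-1/n)$ steps, down to $\min_{ij}b_{ij}/n^{c}$-type floors, which can be truncated. I also need the relaxation time bounds for the HWS chain, and they must hold for general edge weights. I would take these from the paper's energy-flow argument, which is stated for weighted matchings.

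\textbf{Phase 3: accounting.} The learning phase has $\widetilde O(n)$ temperatures, each using trajectories of $O(n^2\log n)$ steps, with $\log(n/\delta)$ repetitions for the union bound. This gives the $n^4\log^5 n\log(n/\delta)$ term. The final product estimator is analyzed with the variance bound used in \cref{thm:main-n4} and gives $n^3\varepsilon^{-2}\log^4 n\log(2/\delta)$; median-of-means provides the $\delta$ dependence. The scaling in Phase 1 is what makes the number of temperatures independent of the bit size of the entries. Without it, the cooling schedule would depend on $\log(\max b/\min b)$.

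The main obstacle is the one raised in Phase 2: showing that after scaling, the weighted annealing schedule has only $\widetilde O(n)$ stages and that the warm start stays within $\mathrm{poly}(n)$ at every stage. The $\sqrt n$-checkpoint trick of \cref{thm:main-improved} is not needed, which is consistent with the corollary's $n^4$ term. I would try to prove this stage by stage: bound $\per{B}$ from below using the approximate doubly stochastic property, and bound each hole-pattern weight ratio by $O(n^2)$ using a permanent-minor inequality for doubly stochastic matrices.
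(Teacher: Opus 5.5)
Your outer structure matches the paper's sketch: LSW scaling, dividing out the scaling factor, and running the HWS framework of \cref{thm:main-n4} without checkpoints, relying on the HWS bounds holding for arbitrary positive activities. The gap is the step you flag as the main obstacle, and the route you propose for it does not work.

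\textbf{The uniform schedule is too long.} A uniform multiplicative schedule, with every activity multiplied by $1-1/n$ per stage down to a floor $\phi$, has about $n\log(1/\phi)$ stages. If the floor is like $\min_{ij}b_{ij}/n^{c}$, the stage count depends on the magnitudes of the entries, so the algorithm is not strongly polynomial. If you truncate, the floor you can justify from the crude bounds $\per{B}\gtrsim e^{-n}$ and $\per{B^{(i,j)}}\lesssim 1$ needs $\log(1/\phi)=\Omega(n)$. That gives $\Omega(n^2)$ stages, and with an $\widetilde O(n^3)$ burn-in per stage the total is $\widetilde O(n^5)$, not the $n^4$ term. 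A polynomial floor would need an inequality $\per{B^{(i,j)}}\le\mathrm{poly}(n)\per{B}$, which you do not prove.

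\textbf{Your supporting claim is false.} The conclusion you draw, that ideal hole weights stay within polynomial factors of $w_0=n$, fails. Take $B=(1-\gamma)I+\frac{\gamma}{n}J$ with $0<\gamma\le 1/n$, where $J$ is the all-ones matrix. Then $\per{B}\ge(1-1/n)^n\ge 1/4$. Expanding $\per{B^{(1,2)}}$ (row $1$ and column $2$ deleted) along row $2$, whose remaining entries all equal $\gamma/n$, gives $\per{B^{(1,2)}}\le\gamma$. Hence $w^*(1,2)\ge 1/(4\gamma)$, which is unbounded: infinite at $B=I$, and exponential once zeros are regularized. Even a bound on the net change of $w^*$ would not bound the number of stages. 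That number is governed by the drops of the log-partition functions.

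\textbf{What the paper does instead.} It needs no minor inequality, because the adaptive step rule of \cref{lem:general-cooling-schedule} has length $O(\Delta\log(2+BH_{\max}/\Delta))$, which depends on $H_{\max}$ only logarithmically. After scaling, rows sum to $1$ and $a'_{uu}=\max_v a'_{uv}\ge 1/n$; entries below an $n^{-O(n)}$ floor are regularized. Then the diagonal matching $P^*$ has weight at least $n^{-n}$. As in the proof of \cref{lem:cooling-schedule}, every hole-pattern class contains a local modification of $P^*$ of weight $n^{-O(n)}$, while $Z_S(0)\le n!$. So \cref{eq:general-cooling-assumption} holds with $\Delta=O(n\log n)$ and $H_{\max}=\mathrm{poly}(n)$. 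This gives $L=O(n\log^2 n)$ stages, constant-factor changes of the ideal weights, and the bounded whole-product second moment. A terminal correction $G$ with $p^*\ge 1/2$ absorbs the regularization.

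\textbf{Your accounting of the $n^4$ term is off.} $\widetilde O(n)$ stages times learning trajectories of length $O(n^2\log n)$ is only $\widetilde O(n^3)$. The $n^4\log^5 n\log(n/\delta)$ term comes from the cold-start burn-in $\Tburn=O(\tau\, n\log n)$ for each of the $O(\log(n/\delta))$ batches at each stage. That burn-in requires an explicit initial state with $\log(1/\pi(P^*))=O(n\log n)$ at every temperature. Your proposal never specifies one. Supplying $P^*$ is the second job of the scaling step; a van der Waerden lower bound on $\per{B}$ alone does not provide it.
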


\paragraph{High-level algorithmic contributions.}
We refer the reader to \cref{sec:algorithm-overview} for a more detailed overview of the algorithm and detailed statements of the following improved results.  

Consider a bipartite graph $G=(V_1\cup V_2,E)$ where $n=|V_1|=|V_2|$.  Let $\mathcal{P}^*$ denote the perfect matchings in $G$, and for $(u,v)\in V_1\times V_2$, let $\mathcal{N}^*(u,v)$ denote the near-perfect matchings with holes (i.e., unmatched vertices) at $u$ and $v$.
  Similarly, let $\mathcal{P}$ and $\mathcal{N}(u,v)$ for all $(u,v)\in V_1\times V_2$ denote the analogs for the complete bipartite graph $K_{n,n}$.
  Finally, let $\mathcal{N}=\bigcup_{(u,v)\in V_1\times V_2}\mathcal{N}(u,v)$, and let $\Omega=\mathcal{P}\cup\mathcal{N}$.

Our central task for estimating $\per{A}=|\mathcal{P}^*|$ is designing a Markov chain that quickly generates samples uniformly distributed on $\mathcal{P}^*$, the set of perfect matchings of the input bipartite graph.  To accomplish this sampling task, we use Markov chains defined on the enlarged space $\Omega$; the near-perfect matchings are needed to connect the state space by local moves.

One obstacle is that the input graph may have exponentially
more near-perfect than perfect matchings.
We use simulated annealing on $\Omega$: input edges have
activity one, while the activity of nonedges gradually
decreases from one toward zero.
At each temperature, ideal hole weights balance the total
stationary weights of all $(n^2+1)$ hole patterns.
Our boosted JSV chain gives perfect matchings stationary
probability $1/2$ under ideal weights, and constant probability
under accurate estimates.
Conditioned on being perfect, its stationary distribution is
proportional to the products of edge activities.

The ideal hole weights depend on the weighted totals of
$\mathcal P$ and the sets $\mathcal N(u,v)$, which are not known.
We learn accurate estimates along a cooling schedule of
$\widetilde O(n)$ temperatures, starting with all activities
equal to one, where the ideal weights are known.
See \cref{subsec:boosted-JSV} for details.

To obtain an $\widetilde{O}(n^5)$-time algorithm, as outlined in \cref{sec:algorithm-overview}, we present two improvements to the boosted JSV chain.  First we prove that the relaxation time of the boosted JSV chain is $\widetilde{O}(n^3)$, see \cref{lem:jsv-relaxation}; this improves the previous bound of \cite{BSVV08} by a factor of $\widetilde{O}(n)$.  Next, we show that if we start the chain at stationarity, then $\widetilde{O}(n^3)$ steps are sufficient for estimating the hole-frequencies in the stationary distribution, and hence for learning accurate estimates of the ideal hole weights. Note that this one-trajectory learning introduces correlations on the learned quantities, but we do not need them to be independent. Both results use the same new multicommodity flow technique described below.

To further improve the running time to $\widetilde{O}(n^4)$, we present a new chain, which we call the hole-weighted slide (HWS) chain, see \cref{sec:algorithm-n4} for details.  The transitions of the HWS chain are tailored to the subsequent congestion proofs.  As a consequence, we prove that the HWS chain improves the relaxation time and hole-frequency estimation time by a factor of $O(n)$.  There are two bottlenecks at $\widetilde{O}(n^4)$. The first is from the initialization of the HWS chain at each of the $\widetilde{O}(n)$-temperatures in our simulated annealing cooling schedule; each initialization requires a burn-in of the mixing time which is $\widetilde{O}(n^3)$.

To speed-up these initializations, we identify a subset of $\widetilde{O}(\sqrt{n})$ temperatures which we call checkpoints. 
The perfect-matching distribution at each checkpoint has bounded
$\chi^2$-divergence relative to the distributions at all cooling temperatures up to the next checkpoint, providing the required warm starts.  The construction of the $\widetilde{O}(\sqrt{n})$ checkpoints is directly related to the work of \cite{SVV09}; the key is that we are only focusing on perfect matchings for the purposes of a warm-start.  
To obtain an initialization at an arbitrary cooling temperature, we run the HWS chain successively through the preceding checkpoints and then at the requested temperature. This costs $\widetilde O(n^{2.5})$ time per initialization, or $\widetilde O(n^{3.5})$ across all requested initializations, saving a factor of $\widetilde O(\sqrt n)$. This improvement is explained in more detail in \cref{sec:improved-algorithm}.

The second bottleneck preventing us from going below $\widetilde{O}(n^4)$ is the implementation of the HWS chain. 
The transition probabilities of this chain is non-trivial and straightforward bookkeeping would make the overall runtime stay at $\widetilde{O}(n^4)$.
To achieve $\widetilde O(n^{3.5})$, we implement the chain in a way where for each hole pattern, only the top $\Theta(\sqrt{n})$ most likely proposals have their transition probabilities computed, and we use rejection sampling for the rest.
This is also explained in \cref{sec:improved-algorithm}.

\paragraph{Improved multicommodity flow technique}

Our proofs utilize a new multicommodity-flow technique, inspired by electrical flows.  Classical multicommodity-flow bounds incur a factor equal to the path length.  We replace this factor by the routing energy, which is the sum of the squared transition-usage probabilities.

We illustrate the idea with a near-perfect matching
$I\in\mathcal N(u,v)$ and a perfect matching $F\in\mathcal P$.  For simplicity, suppose their symmetric difference $I\oplus F$ consists of a single augmenting path between $u$ and $v$.
The standard canonical path processes this augmenting path from one end: starting at $u$, it successively moves the hole by a slide move, replacing an edge of $I$ with an adjacent edge of $F$, and finally adds the remaining edge of $F$.  Our routing instead interleaves slides from
the two endpoints, in a random manner.

We choose the probabilities so that, after $t$ slides, each of the $t+1$ possible splits between the two endpoints is equally likely. At this stage, each possible intermediate matching therefore has probability $1/(t+1)$, and there are at most $2(t+1)$ possible next slide transitions.  Each transition has usage probability at most $1/(t+1)$, so this stage contributes $O(1/(t+1))$ to the sum of squared transition usages.  Summing over the stages, and including
the final edge additions, gives routing energy $O(\log{n})$ (see \cref{sub:routing-energy-idea} for more details), even though every path in the routing may have $\Theta(n)$ transitions.

The improvement in the subsequent analysis comes from the order in which we apply Cauchy--Schwarz.  The classical argument applies it along each path and then averages, introducing the path-length factor.  We instead average the telescoping
identities over the routing first and then apply Cauchy--Schwarz, obtaining the routing energy in place of path length; see \cref{eq:routing-energy-bound} in the proof of \cref{lem:energy-flow}.

Our energy-weighted flow bound generalizes the classical
multicommodity-flow bound of Sinclair~\cite{Sinclair92} and the recent
coupled-flow bound of~\cite{CVY26}.  The general technique is presented
in \cref{sec:energy-flows-overview}.

We note that due to high-level similarities of the algorithmic framework, 
many of the acceleration ideas here for the permanent are present in the \fpras{} for the two-terminal reliability problem as well \cite{FFG26}.

 \paragraph{Organization.}
\Cref{sec:algorithm-overview} defines the JSV chain, presents the algorithmic framework, and details the weaker $\widetilde{O}(n^5)$ result.  In \cref{sec:algorithm-n4} we present the new HWS chain and detail the $\widetilde{O}(n^4)$-time result.  Finally, we present the $\widetilde{O}(n^{3.5})$-time result from \cref{thm:main-improved} in \cref{sec:improved-algorithm}.  

In \cref{sec:energy-flows-overview} we develop the
energy-weighted multicommodity flows and introduce our new technical tool, and \cref{sec:JSV-flow} proves the associated congestion bounds for the JSV and HWS chains.

\section{Markov chain preliminaries}
\label{sec:preliminaries}

Let $P$ be the transition matrix of a finite irreducible Markov chain
on $\Omega$, and let $\pi$ denote its unique stationary distribution.  Throughout this
section, we assume that $P$ is reversible and lazy:
\[
\pi(x)P(x,y)=\pi(y)P(y,x)
\quad\text{for all }x,y\in\Omega,
\qquad
P(x,x)\geq\frac12
\quad\text{for all }x\in\Omega.
\]

For a function $f:\Omega\to\mathbb R$, write
$\E[\pi]{f}:=\sum_{x\in\Omega}\pi(x)f(x)$.
Its Dirichlet form and variance are
\begin{align*}
  \Dir_P(f,f)
  &:=
  \frac12\sum_{x,y\in\Omega}
  \pi(x)P(x,y)(f(x)-f(y))^2,
\\
  \Var[\pi]{f}
  &:=
  \E[\pi]{f^2}-\E[\pi]{f}^2
  =
  \frac12\sum_{x,y\in\Omega}
  \pi(x)\pi(y)(f(x)-f(y))^2.
\end{align*}
The \emph{relaxation time} is the inverse of the spectral gap.  Equivalently, $\trel(P)$ is the smallest constant $C>0$ for which the
Poincar\'e inequality holds:
\begin{equation}
    \forall f:\Omega\to \mathbb R,\qquad
  \Var[\pi]{f}\leq C\Dir_P(f,f).
  \label{eq:poincare-characterization}
\end{equation}

For probability distributions $\mu$ and $\pi$ on $\Omega$, their
\emph{total variation distance} is
\[
\|\mu-\pi\|_{\mathrm{TV}}
:=
\frac12\sum_{x\in\Omega}|\mu(x)-\pi(x)|,
\]
and the \emph{$\chi^2$-divergence} of $\mu$ relative to $\pi$ is
\[
\chi^2(\mu\Vert\pi)
:=
\sum_{x\in\Omega}\frac{(\mu(x)-\pi(x))^2}{\pi(x)}.
\]
For $0<\eta<1$, the \emph{mixing time from an initial state $x$} is defined as
\[
\tmix(P,\eta;x)
:=
\min\left\{
t\in\mathbb Z_{\geq0}:
\|P^t(x,\cdot)-\pi\|_{\mathrm{TV}}\leq\eta
\right\}.
\]
The worst-case mixing time is
$\tmix(P,\eta):=\max_{x\in\Omega}\tmix(P,\eta;x)$.
When the error parameter is omitted, we take $\eta=1/4$.
We use the following standard bound; e.g., see~\cite[Chapter~12]{LPW17}:
\begin{equation}
\tmix(P,\eta;x)
\leq
1+\trel(P)\log\frac1{\eta\pi(x)}.
\label{eq:mixing-from-relaxation}
\end{equation}

More generally, for an initial distribution $\mu$, define
\[
\tmix(P,\eta;\mu)
:=
\min\left\{
t\in\mathbb Z_{\geq0}:
\|\mu P^t-\pi\|_{\mathrm{TV}}\leq\eta
\right\}.
\]
The variance-contraction inequality in~\cite[Section~12.2,
equation~(12.8)]{LPW17}, applied to $\mu/\pi$, together with
Cauchy--Schwarz gives
\[
\|\mu P^t-\pi\|_{\mathrm{TV}}
\leq
\frac12\sqrt{\chi^2(\mu P^t\Vert\pi)}
\leq
\frac12 e^{-t/\trel(P)}\sqrt{\chi^2(\mu\Vert\pi)}.
\]
Consequently, if $\chi^2(\mu\Vert\pi)\leq D$ for some $D\geq0$,
then
\begin{equation}
\tmix(P,\eta;\mu)
\leq
1+\trel(P)\log((1+D)/\eta).
\label{eq:mixing-from-relaxation-warm-start}
\end{equation}
If $D=O(1)$ then $\mu$ is called a {\emph warm-start} for $\pi$, and in this case we have $\tmix(P;\mu)=O(\trelax(P))$.  In other words, a warm-start has mixing time $O(\trelax)$, whereas a worst case initial state has mixing time $O(\trelax\log(1/\pi_{\min}))$ where $\pi_{\min}=\min_{x\in\Omega}\pi(x)$.  Avoiding this $O(\log(1/\pi_{\min}))$-factor is an important aspect of our improved algorithm.

\section{Algorithm overview and simpler $\widetilde{O}(n^5)$-time algorithm}
\label{sec:algorithm-overview}

We begin by providing an overview of a simpler $\widetilde{O}(n^5)$-time \fpras{} for the permanent, which improves the previously best known result of \cite{CVY26} by an $\widetilde{O}(n)$-factor.  More importantly, this result introduces the basic algorithmic framework, in a manner most similar to previous works.  Subsequently, we build upon this same algorithmic framework to obtain further improvements.

For simplicity, consider the permanent of an $n\times n$ $0/1$ matrix $A$,
and let $G=(V_1\cup V_2,E)$ be the corresponding (unweighted) bipartite graph where $|V_1|=|V_2|=n$.
We first check whether $G$ has a perfect matching.  If not, then $\per{A}=0$
and the algorithm returns zero; henceforth assume that $\per{A}\geq1$ and
denote by $M^*$ a perfect matching found by this check.  We will fix $M^*$ for the duration of the algorithm, it will serve as a high weight initial state for our Markov chains.

Recall, $\mathcal{P}^*$ denotes the set of perfect matchings of $G$.  Our goal is to estimate $per(A)=|\mathcal{P}^*|$.
Henceforth, we work on the complete bipartite graph on $(V_1,V_2)$, and let
$\mathcal P$ be its set of perfect matchings.  For $(u,v)\in V_1\times V_2$, let $\mathcal{N}(u,v)$ be the set of near-perfect matchings with holes at $u$ and $v$, and let $\mathcal{N}=\cup_{(u,v)}\mathcal{N}(u,v)$.  Finally, let $\Omega=\mathcal{P}\cup\mathcal{N}$.

For an \emph{activity} $\lambda\in[0,1]$, assign activity $1$ to every edge of $G$ and activity $\lambda$ to every nonedge.
Extend activities to matchings $M\in\Omega$, and sets of
matchings $S\subset\Omega$ by
\begin{equation*}
  \lambda(M):=\prod_{e\in M}\lambda(e),
  \qquad
  \lambda(S):=\sum_{M\in S}\lambda(M).
\end{equation*}
At $\lambda=1$, then $\lambda(\mathcal{P})=n!$ and for all $(u,v)\in V_1\times V_2$, $\lambda(\mathcal{N}(u,v))=(n-1)!$.  On the other end, if $\lambda=0$, then $\lambda(\mathcal{P})=|\mathcal{P}^*|=per(A)$.  
A cooling schedule is a sequence
$\lambda_0=1>\lambda_1>\cdots>\lambda_L>0$ with
$\lambda_L\leq1/n!$.
Every perfect matching of $G$ contributes one to
$\lambda_L(\mathcal P)$, while every other perfect matching
contributes at most $\lambda_L$.  Hence
\[
\per{A}\leq\lambda_L(\mathcal P)
\leq\per{A}+n!\lambda_L
\leq2\per{A}.
\]
For $0\leq i\leq L$, write
\[
Z_i:=\lambda_i(\mathcal P),
\qquad
\mu_i(M):=\frac{\lambda_i(M)}{Z_i}
\quad(M\in\mathcal P).
\]

Our goal is to generate samples from every $\mu_i$, and then we can estimate $per(A)$ using the telescoping product over the ratios $\lambda_{i+1}(\mathcal{P})/\lambda_{i}(\mathcal{P})$; this will be detailed in Phase~2 of the algorithm.

In Phase~1 of the algorithm, we devise a sequence of Markov chains to sample from $\mu_i$ for an appropriately chosen cooling schedule.
To sample from $\mu_i$, we use a Markov chain on $\Omega$
whose stationary distribution, conditional on being perfect,
is $\mu_i$.
We utilize a variant of the JSV chain, which was utilized in previous works \cite{JSV04,BSVV08,CVY26}.

\subsection{The boosted JSV chain}
\label{subsec:boosted-JSV}

The JSV chain, introduced in~\cite{JSV04}, uses the following local moves on $\Omega$.
\begin{itemize}
  \item \emph{Add/delete:} For $P\in\mathcal{P}$, for $(u,v)\in P$, delete $(u,v)$ from the perfect matching $P$ to
    obtain a matching in $\mathcal N(u,v)$.  Similarly, from $M\in \mathcal N(u,v)$, add the edge $(u,v)$ joining the
    two holes to obtain a perfect matching.
  \item \emph{Slide fixing the left hole:} From
    $M\in\mathcal N(u,v)$, replace an edge $(x,y)\in M$ by $(x,v)$,
    obtaining a matching in $\mathcal N(u,y)$.
  \item \emph{Slide fixing the right hole:} From
    $M\in\mathcal N(u,v)$, replace an edge $(x,y)\in M$ by $(u,y)$,
    obtaining a matching in $\mathcal N(x,v)$.
\end{itemize}
These reversible moves define an undirected adjacency relation
$M\sim M'$ with at most $2n$ non-loop neighbors per state.
Given positive hole weights $w(u,v)$, $(u,v)\in V_1\times V_2$, for $M\in\Omega$, let
\begin{equation*}
  \omega_{\lambda,w}(M):=
  \begin{cases}
    \lambda(M),&M\in\mathcal P,\\
    w(u,v)\lambda(M),&M\in\mathcal N(u,v),
  \end{cases}
\end{equation*}
and let $\pi=\pi_{\lambda,w}$ be the distribution on $\Omega$ where $\pi(M)\propto \omega_{\lambda,w}(M)$.  For adjacent states,
the transition matrix of the JSV chain is given by
\begin{equation}
    \forall M\sim M',\qquad
  P_{\mathrm{JSV}}(M,M')
  :=\frac1{4n}\min\set{1,\frac{\pi(M')}{\pi(M)}} = 
  \frac1{4n}\min\set{1,\frac{\omega_{\lambda,w}(M')}{\omega_{\lambda,w}(M)}}.
  \label{eq:metropolis-transition}
\end{equation}
Distinct nonadjacent states have transition probability zero, and all
remaining probability is assigned to the self-loop.  The chain is lazy
and reversible with respect to $\pi$.

The \emph{ideal hole weights} $w^*(u,v)$ for all $(u,v)\in V_1\times V_2$ are defined as
\begin{equation*}
  w^*(u,v):=\frac{\lambda(\mathcal P)}
                         {\lambda(\mathcal N(u,v))}.
\end{equation*}
The ideal hole weights give $\mathcal P$ and every $\mathcal N(u,v)$ the same total weight, and hence 
\[ \pi_{\lambda,w^*}(\mathcal{P}) = \pi_{\lambda,w^*}(\mathcal{N}(u,v)) = 1/(n^2+1);
\]
the stationary distribution $\pi=\pi_{\lambda,w^*}$ at the ideal hole weights is uniformly distributed over the $(n^2+1)$-hole patterns.
However, it is unclear how to compute the ideal hole weights, and hence we consider approximate hole weights.

\begin{definition}
\label{def:rough-hole-weights}
For $c\geq1$, the hole weights $w$ are \emph{$c$-rough} if
\begin{equation*}
    \forall (u,v)\in V_1\times V_2,\qquad
  c^{-1}\leq\frac{w(u,v)}{w^*(u,v)}\leq c.
\end{equation*}
We call them \emph{rough} when $c=2$, and \emph{accurate} when
$c=\sqrt2$.
\end{definition}

As noted before, under the ideal weights, $\pi_{\lambda,w^*}(\mathcal P)=1/(n^2+1)$,
so the expected return time to $\mathcal P$ under stationarity is $\Theta(n^2)$ steps.  To make these visits more frequent, we boost each perfect-matching weight by $n^2$; let
\begin{equation*}
  \omega^{\mathrm{B}}_{\lambda,w}(M):=
  \begin{cases}
    n^2\lambda(M),&M\in\mathcal P,\\
    w(u,v)\lambda(M),&M\in\mathcal N(u,v).
  \end{cases}
\end{equation*}
Write $\pi^{\mathrm{B}}_{\lambda,w}$ for the corresponding distribution after normalization, where the superscript $\mathrm{B}$ indicates boosting.
With ideal weights,
$\pi^{\mathrm{B}}_{\lambda,w^*}(\mathcal P)=1/2$.  Moreover, with rough weights $w$, then $\pi^{\mathrm{B}}_{\lambda,w}(\mathcal P)=\Theta(1)$, and $\pi^{\mathrm{B}}_{\lambda,w}(\mathcal{N}(u,v))=\Theta(n^{-2})$.  

The boosted JSV chain uses $\pi^{\mathrm{B}}_{\lambda,w}$ in 
\cref{eq:metropolis-transition} for its transition probabilities. Let $P^{\mathrm{B}}_{\mathrm{JSV}}=P^{\mathrm{B}}_{\mathrm{JSV}}(\lambda,w)$ denote its transition matrix, and note its stationary distribution is $\pi=\pi^{\mathrm{B}}_{\lambda,w}$.

For any hole weights $w(u,v)$ for all $(u,v)\in V_1\times V_2$, the stationary distribution $\pi=\pi^{\mathrm{B}}_{\lambda,w}$ satisfies the following identity:
\begin{equation}
\label{eq:hole-weights-boosting}
w^*(u,v) = \frac{w(u,v)}{n^2}\frac{\pi(\mathcal{P})}{\pi(\mathcal{N}(u,v))}.
\end{equation}

As a consequence of \cref{eq:hole-weights-boosting}, if we can estimate the ratio $\pi(\mathcal{P})/\pi(\mathcal{N}(u,v))$ then we can boost rough weights $w$ into accurate weights $w'$.
To estimate this ratio we need to estimate the stationary frequency of the hole-pattern $(u,v)$ and the frequency of perfect matchings.  The following yields a fast sampler of these frequencies.

\begin{restatable}{lemma}{jsvoccupationlemma}
\label{lem:occupation-estimation}
Fix positive edge activities $\lambda$ and rough hole weights $w$.
Let $X_0\sim\pi_{\lambda,w}^B$, and let
$X_1,X_2,\ldots$ evolve according to $P^B_{\mathrm{JSV}}(\lambda,w)$.
For every
\[
S\in\{\mathcal P\}\cup
\{\mathcal N(u,v):u\in V_1,\ v\in V_2\},
\]
write $p_S:=\pi^{B}_{\lambda,w}(S)$.
There is a universal constant $C>0$ such that, for every
integer $T\geq1$,
\begin{equation}
\Var{\frac1T\sum_{t=0}^{T-1}\one_{\{X_t\in S\}}}
\leq
\frac{Cn^3\log n}{T}\,p_S^2.
\label{eq:JSV-occupation-estimation}
\end{equation}
\end{restatable}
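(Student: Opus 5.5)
We bound the variance of the time average through the spectral representation of a stationary reversible chain. The key point is to avoid the generic $\trel$-based bound and instead use a test-function-specific Poincaré-type estimate supplied by the energy-weighted multicommodity flow.

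Let $f:=\one_S-p_S$ and $P:=P^B_{\mathrm{JSV}}$, which is lazy, so its spectrum lies in $[0,1]$. Since $X_0\sim\pi$, we have
\[
\Var{\frac1T\sum_{t=0}^{T-1}f(X_t)}=\frac1{T^2}\sum_{s,t}\langle f,P^{|t-s|}f\rangle_\pi
\leq \frac{2}{T}\langle f,(I-P)^{-1}f\rangle_\pi .
\]
So it suffices to show
\[
\langle f,(I-P)^{-1}f\rangle_\pi\leq Cn^3\log n\,p_S^2 .
\]
By the variational characterization
\[
\langle f,(I-P)^{-1}f\rangle_\pi=\sup_g\frac{\langle f,g\rangle_\pi^2}{\Dir_P(g,g)},
\]
this reduces to showing, for every $g$,
\[
\Cov_\pi(\one_S,g)^2\leq Cn^3\log n\,p_S^2\,\Dir_P(g,g).
\]

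Write
\[
\Cov_\pi(\one_S,g)=\sum_{x\in S}\sum_{y}\pi(x)\pi(y)\bigl(g(x)-g(y)\bigr),
\]
which is a weighted sum of differences from states of $S$ to all of $\Omega$. For each pair $(x,y)$ we route a unit of flow along a random canonical path distribution $\Gamma_{xy}$: the interleaved two-ended slide routing described in the introduction, with the usual JSV treatment of cycles and of $\mathcal N$-to-$\mathcal N$ pairs via perfect matchings. We then average the telescoping identities over $\Gamma_{xy}$ \emph{before} applying Cauchy--Schwarz, as in the energy-flow lemma (\cref{lem:energy-flow}). This yields
\[
\Cov^2\leq \Bigl(\sum_{x\in S,y}\pi(x)\pi(y)\Bigr)\cdot \max_e\frac{1}{Q(e)}\sum_{x\in S,y}\pi(x)\pi(y)\,\E{\text{energy usage of }e}\cdot \Dir(g,g).
\]
Here the first factor is $p_S$, and the path-length factor is replaced by the routing energy $O(\log n)$.

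The remaining task is the congestion estimate: for each transition $e=(z,z')$,
\[
\frac{1}{Q(e)}\sum_{x\in S,y}\pi(x)\pi(y)\Pr{e\in\gamma}\cdot(\text{usage}) \leq Cn^3\log n\, p_S .
\]
We prove it with the standard injective encoding $\eta_e(x,y)=x\oplus y\oplus z$, complemented to a near-perfect or perfect matching. With rough weights this gives
\[
\pi(x)\pi(y)\lesssim n^{O(1)}\pi(z)\pi(\eta).
\]
Summing over encodings is restricted by $x\in S$, which fixes the hole pattern, and this restriction produces the $p_S$ factor. Then $Q(e)\geq\pi(z)/(4n)$ accounts for one factor $n$. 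The boosting of $\mathcal P$ by $n^2$ contributes a further $n^2$ for paths between $\mathcal N$ and $\mathcal P$, giving $n^3$ in total. The $\log n$ comes from the energy.

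The main obstacle is this congestion step. The difficulty is to get simultaneously the $p_S$ (rather than $\sqrt{p_S}$ or $1$) dependence, uniformly over both $S=\mathcal P$ and $S=\mathcal N(u,v)$, and the energy-weighted counting for the randomized interleaved paths. I would first handle $S=\mathcal N(u,v)$: there I would split pairs by the hole pattern of $y$ and show that the encoding's hole pattern, together with $e$, determines $(u,v)$ up to $O(n)$ choices. If that loses a factor of $n$, I would fall back on routing through $\mathcal P$, using $\pi(\mathcal P)=\Theta(1)$.
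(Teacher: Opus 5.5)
\textbf{There is a genuine gap.} Your reduction is correct and matches the paper. The spectral bound $\Var{\frac1T\sum_{t<T}f(X_t)}\leq\frac2T\langle f,(I-P)^{-1}f\rangle_\pi$ and the variational formula reduce the lemma to $\Cov_\pi(\one_S,g)^2\leq Cn^3\log n\,p_S^2\,\Dir_P(g,g)$, which is exactly \cref{eq:occupation-indicator-comparison}. (The paper reaches it with the test function $f_T=\sum_{r<T}P^r(\one_S-p_S)$ instead of $(I-P)^{-1}$.) You never prove this inequality, and the independent-pair flow you sketch cannot give it, because its routing energy is not $O(\log n)$.

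For independent $x\in S$ and $y\in\Omega$, $x\oplus y$ contains alternating cycles besides the augmenting path. Switching them one after another, in a context fixed by $(x,y)$, costs energy at least $1$ per cycle: $m+1$ for the JSV path, $1+1/m$ for the all-roots routing. Typical pairs can have $\Theta(n)$ cycles. For instance:
\begin{itemize}
\item Let $G$ be a disjoint union of $n/2$ copies of $K_{2,2}$, take $\lambda$ tiny with ideal weights, and let $S=\mathcal N(u,v)$ with $u,v$ in one block, so $p_S\approx1/(2n^2)$.
\item A deletion that starts switching another block lies in the support of about $2^{n/2-2}$ pairs from $S\times\mathcal P$. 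A direct count gives load over capacity of order $n$, and these pairs differ on $\Theta(n)$ blocks.
\item Your $S$-restricted energy-weighted congestion is therefore of order $n^2$, whereas you need $O(n^3\log n\,p_S)=O(n\log n)$.
\end{itemize}
Randomizing the cycle order would lower the energy. But it makes support congestion charge each pair to exponentially many partially switched states. Separately, you need the factor $p_S$ at every transition simultaneously. An encoding estimate with $n^{O(1)}$ slack cannot certify this, since any polynomial loss destroys the $n^3$.

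The missing idea is that the $p_S^2$ factor comes from algebra, not from an $S$-dependent congestion bound. For centered $g$, let $m_R$ be its conditional mean on hole pattern $R$. Then $\Cov_\pi(\one_S,g)=p_Sm_S$ and $\sum_Rp_Rm_R=0$, so
\[
|m_S|\leq|m_S-m_{\mathcal P}|+\sum_Rp_R|m_R-m_{\mathcal P}|\leq2\max_R|m_R-m_{\mathcal P}|.
\]
By Jensen, each difference of means is bounded by the coupled squared difference under the augmenting-path coupling of \cref{lem:augmenting-path-coupling}. Its endpoints differ only along the augmenting path, so no cycles are ever routed. \Cref{lem:JSV-flow-summary} then gives energy $O(\log n)$ and ordinary, unrestricted support congestion $O(n^3)$. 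This also removes any need to route $\mathcal N$-to-$\mathcal N$ pairs. Your fallback of routing through $\mathcal P$ points in this direction, but without this coupling it inherits the same cycle-energy problem.
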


The above lemma says that for rough weights $w(u,v)$, if we start the boosted JSV chain at a stationary state, run for $T=O(n^3\log{n})$ steps, and let $\widehat{p}_S$ be the fraction of times that we are in hole-pattern $S$, then, by Chebyshev's inequality, $\widehat{p}_S$ is an accurate estimate of $\pi(S)$.  Hence, we can apply \cref{eq:hole-weights-boosting} to boost rough hole weights into accurate estimates.

But how do we obtain an initial state from the stationary distribution?
To do this we need to upper bound the mixing time of the boosted JSV chain.  It will be more convenient for us to upper bound the relaxation time and then apply \cref{eq:mixing-from-relaxation} (this allows to apply \cref{eq:mixing-from-relaxation-warm-start} when we have a warm-start).

The following result shows that the relaxation time of the boosted JSV chain is $\widetilde{O}(n^3)$; this improves the previous bound of \cite{BSVV08} by a factor of $\widetilde{O}(n)$.

\begin{restatable}{lemma}{jsvrelaxlemma}
\label{lem:jsv-relaxation}
For every fixed $c\geq1$, there is a constant $C_c$ such that,
for every choice of positive edge activities and every set of
$c$-rough hole weights,
\begin{equation}
\trel(P^{\mathrm B}_{\mathrm{JSV}})
\leq C_c n^3\log n.
\label{eq:jsv-relaxation}
\end{equation}
\end{restatable}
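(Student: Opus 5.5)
We prove the Poincaré inequality \cref{eq:poincare-characterization} with $C=C_cn^3\log n$ via an energy-weighted multicommodity flow, as outlined in the introduction. For each ordered pair $(I,F)\in\Omega\times\Omega$ we route one unit of $\pi(I)\pi(F)$ not along a single canonical path but along a \emph{random} path $\gamma$ from $I$ to $F$, with law $\nu_{I,F}$. For each transition $e=(M,M')$, let $q_{I,F}(e):=\Pr[\gamma\sim\nu_{I,F}]{e\in\gamma}$.

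Fix $f$ and write $(f(I)-f(F))^2$ as the square of the telescoping sum along $\gamma$. We first average the telescoping identity over $\nu_{I,F}$, which gives $f(F)-f(I)=\sum_e q_{I,F}(e)\,\nabla_ef$. Only then do we apply Cauchy--Schwarz with weights $Q(e)=\pi(M)P(M,M')$:
\[
(f(I)-f(F))^2\le\Big(\sum_e \frac{q_{I,F}(e)^2\,\pi(I)\pi(F)}{Q(e)}\Big)\Big(\sum_e\frac{Q(e)}{\pi(I)\pi(F)}\,\mathbf 1[q_{I,F}(e)>0]\,(\nabla_ef)^2\Big).
\]
Summing over $(I,F)$ and reorganising reduces the claim to bounding the energy congestion
\[
\max_e\frac1{Q(e)}\sum_{I,F}\pi(I)\pi(F)\,q_{I,F}(e)\cdot\mathcal{E}_{I,F},
\qquad
\mathcal{E}_{I,F}:=\sum_{e'}q_{I,F}(e')^2 .
\]
This is the analogue of Sinclair's bound, with the routing energy $\mathcal{E}_{I,F}$ in place of the path length. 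It is the lemma announced as \cref{lem:energy-flow}. If I cannot use it directly, I will split the Cauchy--Schwarz more carefully, placing the factor $q/Q$ on one side.

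Next we construct the routing. As in JSV, $I\oplus F$ decomposes into alternating cycles and paths, processed in a fixed order determined by a canonical ordering of vertices. An augmenting path between holes $u,v$ is unwound by interleaved slides from both ends. The split is chosen uniformly at random: after $t$ slides, each of the $t+1$ splits has probability $1/(t+1)$. A cycle is opened by a deletion and then handled in the same way. Since each stage $t$ uses at most $2(t+1)$ transitions, each with usage at most $1/(t+1)$, the energy is $\mathcal E_{I,F}=O(\sum_t 1/(t+1))=O(\log n)$ per component. Because component sizes sum to at most $2n$, the total is still $O(\log n)$ by the harmonic sum, and the add/delete steps contribute $O(1)$.

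It remains to show the unnormalised congestion $\max_e Q(e)^{-1}\sum_{I,F}\pi(I)\pi(F)q_{I,F}(e)=O(n^3)$. I would use the standard JSV encoding: from $e=(M,M')$ and an encoding $\eta=I\oplus F\oplus(M\cup M')$, which is a near-near-perfect matching with at most two extra holes, one recovers $(I,F)$ up to a bounded amount of side information. Summing $q_{I,F}(e)$ over the pairs consistent with $e$ gives a fractional injection rather than an exact one. The weight inequality $\omega(I)\omega(F)\lesssim\omega(M)\,\omega(\eta)$ holds because $c$-rough ideal weights satisfy the JSV inequality $w^*(u,v)w^*(x,y)\lambda(\mathcal N(u,y))\ldots$ up to constants. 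Combined with the boost by $n^2$ and $1/P=O(n)$, this yields the $n^3$ factor: $n$ from the $1/(4n)$ transition rate and $n^2$ from summing encodings over hole patterns.

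The main obstacle is this last step. The randomised interleaving means the intermediate state $M$ no longer determines which prefix of each component has been processed, so $q_{I,F}(e)$ summed over $(I,F)$ is controlled only fractionally. One must also verify that the boosted weights of perfect matchings, which appear at both endpoints, do not cost an extra factor $n^2$. I would handle the former by charging, for a given $e$, at most $O(1)$ total $q$-mass per encoding $\eta$, using that the uniform split law makes $q$ exactly $1/(t+1)$ times the number of consistent splits. I would handle the latter by treating the four cases $I,F\in\mathcal P$ or $\mathcal N$ separately, using $\pi(\mathcal P)=\Theta(1)$ and $\pi(\mathcal N(u,v))=\Theta(n^{-2})$.
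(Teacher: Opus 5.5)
\paragraph{Main gap: the energy bound fails across components.}
Your energy estimate for the concatenated route is false, and this is the crux. You switch the components of $I\oplus F$ one after another in a fixed order. So the matching reached after the first $j-1$ components is deterministic, and the split law in component $j$ restarts at level $0$. Its level-$0$ transitions (the canonical opening deletion of a cycle, or the two possible first slides of a path) carry total usage $1$ and contribute at least $1/2$ to $\En(\theta_{I,F})$. The harmonic sums therefore do not merge across components: $\En(\theta_{I,F})$ is of order $\sum_j(1+\log k_j)$, which is at least half the number of components. For example, let $G$ be a disjoint union of $n/2$ copies of $K_{2,2}$ and take $\lambda$ small. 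Two independent perfect matchings typically differ on about $n/4$ alternating $4$-cycles, so a constant fraction of the demand has energy $\Theta(n)$, and you are back to the classical $O(n^4)$ bound. Randomizing the order does not help. Only one component can be open at a time, and a random order leaves the set of already-switched components undetermined by the transition, which the support congestion charges in full.

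The missing idea is never to route a pair through many components. The paper compares every sector only with $\mathcal P$, via $\Var[\pi]{f}\le\E[\pi]{(f-m)^2}$ with $m=\E[\mu_\emptyset]{f}$.
\begin{itemize}
\item For $\mathcal N(u,v)$, it replaces the independent $F$ by the coupled $J=I\oplus\mathcal A$ of \cref{lem:augmenting-path-coupling}. The endpoints then differ on a single augmenting path.
\item For $\mathcal P$, it conditions on $I\uplus F$ and applies variance tensorization to the independent fair orientation bits of the cycles. This reduces $(f(I)-f(F))^2$ to single-cycle flips with no factor for the number of cycles, each routed with energy at most $2$ (\cref{lem:perfect-sector-comparison}).
\end{itemize}

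\paragraph{The stated flow lemma is not valid.}
Your quantity $\max_e Q(e)^{-1}\sum_{I,F}\pi(I)\pi(F)q_{I,F}(e)\mathcal E_{I,F}$ does not follow from your Cauchy--Schwarz step, which produces a support indicator rather than $q_{I,F}(e)$. It is also not a valid bound. Take unit demand between two states joined by $d$ internally disjoint paths of length $\ell$, with unit capacities and uniform routing. Your quantity equals $\ell/d^2$, while the optimal constant is the effective resistance $\ell/d$. The correct statement, \cref{lem:energy-flow}, charges the full demand of every pair whose routing support contains $e$. So you cannot spend the fractional $q$-mass (the $1/(t+1)$ split probabilities) to absorb ambiguity in reconstructing $(I,F)$.

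\paragraph{The reconstruction needs unbounded side information.}
Consider a slide $N\to N'$ inserting $(u,a)$ with fixed right hole $v$, on a two-ended route from source holes $(r,s)$. The frontier edge $(y,v)$ at the other end is not determined by the transition. The complement is a four-hole matching $K\in\mathcal N^{(2)}(r,y;s,a)$, with $y$ ranging over $V_1$. Summing over $y$ without a weighted identity loses up to a factor $n$. The paper instead uses $Z_{r,v}\sum_y\lambda(y,v)Z^{(2)}_{r,y;s,a}\le2Z_{r,s}Z_{r,a}$ from \cref{lem:matching-switching}. It then applies $Z_{r,a}/Z_{r,v}\le2L^*_v(a)$ and $\lambda(u,a)L^*_v(a)\le w^*(u,v)$ from \cref{eq:HWS-score-ratio}. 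This gives load $O(m_T/Z_\emptyset)$ and, with \cref{eq:boosted-JSV-capacity}, support congestion $O_c(n^3)$.
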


The final ingredient is the following non-adaptive cooling schedule, specialized
from a recent work~\cite{GLYYZ26}.

\begin{lemma}
    \label{lem:cooling-schedule}
    There is an explicit schedule
    \begin{equation*}
      1=\lambda_0>\lambda_1>\cdots>\lambda_L
      =\exp\bigl(-\lceil\log_2(n!)\rceil\bigr)
      \leq\frac1{n!},
      \qquad L=O(n\log^2n),
    \end{equation*}
    such that the ideal hole weights satisfy
    \begin{equation}
    \label{cooling-ratio}
      \forall i\in\set{0,1,\ldots,L-1}, \forall (u,v)\in V_1\times V_2,\qquad
      \frac34\leq\frac{w_{i+1}^*(u,v)}{w_i^*(u,v)}\leq\frac43.
    \end{equation}
    If $M_i\sim\mu_i$ independently for $0\leq i<L$, then
    \begin{equation}
    \label{cooling-variance}
      Y:=\prod_{i=0}^{L-1}
           \frac{\lambda_{i+1}(M_i)}{\lambda_i(M_i)},
      \qquad
      \E{Y}=\frac{Z_L}{Z_0},
      \qquad
      \frac{\E{Y^2}}{\E{Y}^2}\leq\frac{64}{27}.
    \end{equation}
\end{lemma}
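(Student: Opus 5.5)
The schedule is taken to be geometric in the exponent, but uneven. The natural parametrization is $\lambda=e^{-\beta}$, with $\beta$ running from $0$ to $\beta_L=\lceil\log_2(n!)\rceil=O(n\log n)$. The key quantities are the log-partition functions $\log\lambda(\mathcal P)$ and $\log\lambda(\mathcal N(u,v))$. Each is a convex function of $\beta$ whose derivative is minus the expected number of nonedges, which lies in $[-n,0]$. For the hole weights, $\log w^*(u,v)=\log\lambda(\mathcal P)-\log\lambda(\mathcal N(u,v))$, so the requirement \cref{cooling-ratio} amounts to controlling the change of a difference of two such functions.

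\textbf{Step 1: bound the drift of the hole weights.} A perfect matching minus one edge lies in $\mathcal N(u,v)$, and a near-perfect matching in $\mathcal N(u,v)$ plus the edge $(u,v)$ is perfect. These maps relate the two sums edge by edge, and they should show that the derivatives of $\log\lambda(\mathcal P)$ and $\log\lambda(\mathcal N(u,v))$ differ by at most $1$ in absolute value. Equivalently, the expected nonedge counts under the two Gibbs measures differ by at most one nonedge, up to an $O(1)$ coupling error. Hence $|\tfrac{d}{d\beta}\log w^*(u,v)|\le 1$ (or $O(1)$). Any step of size $\Delta\beta\le\log(4/3)$ then gives \cref{cooling-ratio}. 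This alone is too weak: it would give $L=O(n\log n)$ steps, which happens to be fine for the ratio condition. The variance condition \cref{cooling-variance}, however, forces smaller steps where the energy fluctuates.

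\textbf{Step 2: variance condition.} For independent $M_i$ one has $\E{Y^2}/\E{Y}^2=\prod_i Z(\beta_i+2\Delta_i)Z(\beta_i)/Z(\beta_i+\Delta_i)^2$, where $Z(\beta)=\lambda(\mathcal P)$ at $\lambda=e^{-\beta}$. Write $f=\log Z$. Each factor is then $\exp\bigl(f(\beta_i+2\Delta_i)-2f(\beta_i+\Delta_i)+f(\beta_i)\bigr)$, a second difference of the convex function $f$. The task is to choose the $\Delta_i$ so that the sum of these second differences is at most $\log(64/27)$. This is exactly the setting of the non-adaptive schedules of \cite{GLYYZ26} (and, in adaptive form, of \cite{SVV09}). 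The function $f$ is convex and decreasing, with $f(0)=\log n!$ and total slope range at most $n$. Their construction splits the $\beta$-range into dyadic slope classes and takes steps of size roughly $1/(n\cdot\text{polylog})$ scaled geometrically. The total second difference is then bounded by $\tfrac{1}{\text{polylog}}$ times the total change of the derivative, which is at most $n$ divided by the number of steps per class. This yields $L=O(n\log^2 n)$. I would quote their lemma specialized to $f$ and take the common refinement with the Step 1 schedule. Refining only decreases second differences, by convexity, and only adds $O(n\log n)$ points.

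\textbf{Step 3.} Unbiasedness is immediate: $\E{\lambda_{i+1}(M_i)/\lambda_i(M_i)}=Z_{i+1}/Z_i$, and the product telescopes by independence. The endpoint bound $\lambda_L\le1/n!$ is immediate from $\log_2\ge\ln$ applied to $\log_2(n!)$.

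The main obstacle is Step 2, namely obtaining a \emph{non-adaptive} schedule of length $O(n\log^2n)$ without knowing $f$. The only available information is the a priori structure of $f$ (convexity, slope bounds, and range), and the dyadic-class argument of \cite{GLYYZ26} is what must be invoked carefully there.
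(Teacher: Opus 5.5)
The gap is in Step 1, and Step 1 is your only argument for \cref{cooling-ratio}. You assert that $|\tfrac{d}{d\beta}\log w^*(u,v)|=O(1)$ but do not prove it. The add/delete correspondence relates the partition functions only at a fixed $\beta$: it gives $\lambda(u,v)Z_{u,v}\le Z_\emptyset$ and the cofactor identity. It says nothing about their derivatives. Writing $H$ for the number of nonedges, $\tfrac{d}{d\beta}\log w^*(u,v)=\mathbf{E}_{\mathcal N(u,v)}[H]-\mathbf{E}_{\mathcal P}[H]$. Pushing $\mathcal N(u,v)$ forward by adding $(u,v)$, this equals
$\mathbf{E}_{\mu_\emptyset}[H\mid (u,v)\in M]-\mathbf{E}_{\mu_\emptyset}[H]-\one[(u,v)\notin E]$.
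So you are claiming that conditioning a weighted random perfect matching on one edge shifts its mean nonedge count by $O(1)$, uniformly in $G$ and $\beta$. That is a correlation statement, and you give no argument for it. Under the augmenting-path coupling it becomes the mean of $H(I\cap\mathcal A)-H(F\cap\mathcal A)$ along a path of length up to $n$, so an $O(1)$ bound is not automatic. Your phrase ``up to an $O(1)$ coupling error'' is exactly the unproven point.

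The missing idea is that you never need to compare $\mathcal P$ with $\mathcal N(u,v)$: bound each log-partition function separately. For every hole-pattern set $S$, the function $z_S(\beta)=\log\sum_{M\in S}e^{-\beta H(M)}$ is convex and nonincreasing, with $-z_S'\le n$. It also satisfies $z_S(0)-z_S(B)\le\log(n!)+B\le\Delta:=2B$, a fact you never use. This holds because every $S$, including each $\mathcal N(u,v)$, contains a matching with at most one nonedge, built from $M^*$. Convexity then gives $\beta(-z_S'(\beta))\le z_S(0)-z_S(\beta)\le\Delta$. Take the step $d_i=\min\{B-\beta_i,\tfrac14\max\{1/n,\beta_i/\Delta\}\}$. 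Then every $z_S$ drops by at most $d_i(-z_S'(\beta_i))\le 1/4$ per step, simultaneously for all $n^2+1$ sets $S$. Hence $w^*$ changes by a factor in $[e^{-1/4},e^{1/4}]\subseteq[3/4,4/3]$. Without this range bound for the near-perfect classes, a schedule tuned to $z_{\mathcal P}$ alone would not control $Z_{\mathcal N(u,v)}$.

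Your Step 2 accounting is also off. Bounding the summed second differences by the total change of $z'$ (which is at most $n$) supports only steps of size $O(1/n)$, that is, $L=O(n^2\log n)$. What makes geometric steps work is the total log-drop $\sum_i q_i\le\Delta$, where $q_i=z(\beta_i)-z(\beta_{i+1})$. The schedule satisfies $d_{i+1}\le r d_i$ with $r=1+\tfrac1{4\Delta}$. Concavity of $t\mapsto z(\beta_{i+1})-z(\beta_{i+1}+t)$ then gives $\log(\mathbf{E}[Y_i^2]/\mathbf{E}[Y_i]^2)\le q_i-q_{i+1}/r$ for $i<L-1$. Summing telescopes to at most $q_0+(1-1/r)\sum_i q_i\le\tfrac12$, and $e^{1/2}<64/27$. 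Your second-moment formula, the refinement-monotonicity observation, and Step 3 are all correct. They become unnecessary once a single schedule handles both conditions.
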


Condition \cref{cooling-variance} will be used in Phase~2 of the algorithm to bound the variance of the estimator for $\lambda_L(\mathcal{P})\approx per(A)$.  For our purposes in Phase~1, the key is \cref{cooling-ratio}.

By \cref{cooling-ratio}, accurate weights $w_i$ at $\lambda_i$ remain rough at
$\lambda_{i+1}$.  That means if we have accurate weights $w_i$ at $\lambda_i$, then run the boosted JSV chain at $\lambda_{i+1}$ with $w_i$, starting from $X_0$ which is distributed according to the stationary distribution for $\lambda_{i+1}$, for $T=O(n^3\log{n})$ steps.  By \cref{lem:occupation-estimation}, we can obtain accurate estimates of $\widehat{p}_S$ for every hole pattern $S$, and hence by \cref{eq:hole-weights-boosting} we can boost $w_i$ into accurate weights $w_{i+1}$ for $\lambda_{i+1}$.

We start with the known ideal weights $w_0(u,v)=n$.
At each temperature, take medians of the empirical frequencies
from $O(\log(n/\delta))$ independent batches, each with its own
burn-in.  This gives relative error at most $1/10$
simultaneously for all hole patterns and all temperatures,
except with probability at most $\delta/2$.

To obtain an initial state $X_0$ at a particular activity $\lambda_i$, we apply \cref{lem:jsv-relaxation} with \cref{eq:mixing-from-relaxation}.  Take the fixed matching $M^*$ as our initial state, then $\pi(M^*)\geq 1/(3n!)$, and hence $\log(1/\pi(M^*))=O(n\log{n})$. Thus, $O(n^4\log^2{n})$ steps of the boosted JSV chain suffice to obtain a sample from the stationary distribution (with small variation distance).   Since there are $O(n\log^2{n})$ temperatures, it takes a total of $\widetilde{O}(n^5)$ steps to obtain initial states from stationarity at every temperature.  
This yields an $\widetilde{O}(n^5)$-time algorithm to obtain accurate weights at every temperature in the cooling schedule prescribed by \cref{lem:cooling-schedule}.

\subsection{Phase~2: whole-product estimator of the permanent}
\label{subsec:phase-two}

Phase~2 freezes the learned weights and uses fresh randomness.
Write
\begin{equation}
Q:=\frac{Z_L}{Z_0}
=\prod_{i=0}^{L-1}\frac{Z_{i+1}}{Z_i}.
\label{eq:annealing-telescoping}
\end{equation}
Since $Z_0=n!$ and
$\per{A}\leq Z_L\leq2\per{A}$, we also have
\begin{equation}
p^*:=\Pr[M\sim\mu_L]{M\subseteq E}
=\frac{\per{A}}{Z_L}\geq\frac12,
\qquad
\per{A}=n!\,Qp^*.
\label{eq:terminal-correction}
\end{equation}
Thus it suffices to estimate $Q$ and $p^*$.

For rough weights,
\begin{equation}
\pi^{\mathrm B}_{\lambda_i,w_i}(\mathcal P)\in[1/3,2/3],
\qquad
\pi^{\mathrm B}_{\lambda_i,w_i}(\,\cdot\mid\mathcal P)=\mu_i.
\label{eq:perfect-conditional-law}
\end{equation}
For each execution, independent auxiliary burn-in runs from
$M^*$, retaining only perfect final states, provide initial
matchings approximately distributed according to $\mu_i$
at every activity.
These runs are separate from weight learning, and their
cost is included in Phase~1.

Initialize independent boosted JSV chains at
$(\lambda_i,w_i)$ from these perfect matchings.
To obtain the next sample, run each chain in blocks of
$Cn^3\log n$ transitions, for a sufficiently large constant
$C$, until a block ends at a perfect matching.
Retain that matching and continue from it in the next round;
no additional burn-in is needed.

In each round, one retained matching $M_i$ from each activity
$\lambda_0,\ldots,\lambda_{L-1}$ gives the whole-product
estimator
\begin{equation}
Y:=\prod_{i=0}^{L-1}
\frac{\lambda_{i+1}(M_i)}{\lambda_i(M_i)}.
\label{eq:whole-product}
\end{equation}
At $\lambda_L$, record $\one_{\{M_L\subseteq E\}}$.
Under independent stationary perfect starts,
\cref{cooling-variance} gives $\E{Y}=Q$ and constant relative
variance, while the terminal indicator has mean $p^*\geq1/2$.
The relaxation bound in \cref{lem:jsv-relaxation} controls
correlations between rounds.
Consequently, averaging $O(\varepsilon^{-2})$ rounds gives
estimates $\widehat Q$ and $\widehat p^{\,*}$, each with
relative error at most $\varepsilon/4$, with constant
success probability.
Sufficiently accurate initial distributions preserve this
guarantee.

One execution returns
\begin{equation}
\widehat p_{\mathrm{run}}
:=n!\,\widehat Q\,\widehat p^{\,*}.
\label{eq:final-estimator}
\end{equation}
With the learned weights fixed, take the median of
$O(\log(2/\delta))$ independent executions.
Conditional on accurate weights, the resulting estimate has
relative error at most $\varepsilon$ with probability at
least $1-\delta/2$.

Excluding initialization, Phase~2 takes
$\widetilde O(n^4\varepsilon^{-2})$ time.
Together with Phase~1, this gives total time
$\widetilde O(n^5+n^4\varepsilon^{-2})$, and success probability
at least $1-\delta$.
The initialization accuracy, running-time cutoffs, and detailed
proofs are given in \cref{sec:missing}.

\section{The $\widetilde O(n^4)$-Algorithm: Hole-Weighted Slide (HWS) Chain}
\label{sec:algorithm-n4}

By introducing a new Markov chain, which we call the HWS chain, we obtain the following result which improves the running time to $\widetilde{O}(n^4)$.

\begin{theorem}
\label{thm:main-n4}
For all $0<\varepsilon,\delta<1$, there is a randomized
algorithm that, given an $n\times n$ $0/1$ matrix $A$ as input, outputs an
estimate $\widehat p\geq0$ such that
\[
\Pr{
(1-\varepsilon)\per{A}
\leq
\widehat p
\leq
(1+\varepsilon)\per{A}
}
\geq
1-\delta,
\]
and has running time
\begin{equation}
O\left(
n^4\log^5 n\log(n/\delta)
+
n^3\varepsilon^{-2}\log^4 n\log(2/\delta)
\right).
\label{eq:main-runtime}
\end{equation}
\end{theorem}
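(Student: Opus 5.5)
We prove \cref{thm:main-n4} by running the two-phase framework of \cref{sec:algorithm-overview} verbatim, with the boosted JSV chain replaced by the HWS chain. The only inputs that change are the two chain-specific estimates. For every fixed $c\ge1$ and $c$-rough hole weights, we need the HWS analogue of \cref{lem:jsv-relaxation},
\[
\trel(P_{\mathrm{HWS}})\le C_c n^2\log n,
\]
and the HWS analogue of \cref{lem:occupation-estimation}: the stationary-trajectory occupation variance should be at most $Cn^2\log n\,p_S^2/T$ for every hole pattern $S$. We also need the stationary distribution of the HWS chain, conditioned on $\mathcal P$, to be $\mu_i$, and its mass on $\mathcal P$ to be $\Theta(1)$ under rough weights. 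Finally, we need an identity of the form \cref{eq:hole-weights-boosting} relating $w^*$ to $w$ and the stationary ratio $\pi(\mathcal P)/\pi(\mathcal N(u,v))$.

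Both estimates come from the energy-weighted multicommodity-flow bound of \cref{sec:energy-flows-overview}. For a pair $(I,F)$, the flow routes along randomly interleaved two-ended slides with routing energy $O(\log n)$ in place of the path length $\Theta(n)$. The congestion analysis in \cref{sec:JSV-flow} then shows that the hole-weighted slide transitions carry a factor $n$ more probability on the relevant edges than the $\frac1{4n}$ Metropolis moves. This yields congestion $O(n^2\log n)$. The occupation bound follows from the same flow applied to the indicator $\one_S$. The idea is to bound $\sum_t\Cov(\one_S(X_0),\one_S(X_t))$ through the Poincaré-type inequality restricted to functions supported near $S$, which gains $p_S$ rather than $p_S(1-p_S)$ per unit of relaxation.

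Granting these, Phase~1 proceeds as follows. At each of the $L=O(n\log^2 n)$ temperatures of \cref{lem:cooling-schedule}, the accurate weights $w_i$ are rough at $\lambda_{i+1}$ by \cref{cooling-ratio}. We obtain a near-stationary start by burning in from $M^*$; by \cref{eq:mixing-from-relaxation} with $\log(1/\pi(M^*))=O(n\log n)$, this costs $O(n^3\log^2 n)$ steps. We then run $T=O(n^2\log n)$ further steps, and Chebyshev's inequality gives each $\widehat p_S$ to relative error $1/10$ with constant probability. To make this hold simultaneously for all $n^2+1$ patterns and all $L$ temperatures with failure probability $\delta/2$, we take the median over $O(\log(n/\delta))$ independent batches and apply a union bound.

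We must also account for the total-variation error of the burn-in. We choose $\eta=1/\mathrm{poly}(n)$ per batch and absorb it by a coupling argument, at the cost of one extra $\log n$. Each step is implemented in $O(n)$ time by maintaining per-hole-pattern sums of the hole-weighted proposals, which explains the extra $\log n$ factors. With this cost, Phase~1 totals $L\cdot O(\log(n/\delta))\cdot O(n^3\log^2n)\cdot\widetilde O(1)=O(n^4\log^5 n\log(n/\delta))$.

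Phase~2 is \cref{subsec:phase-two} with blocks of length $Cn^2\log n$. In each round, every one of the $L$ chains runs geometrically many blocks until it ends in $\mathcal P$, which has probability $\Theta(1)$. A round therefore costs $O(L n^2\log n)$ steps, and $O(\varepsilon^{-2})$ rounds give total $O(n^3\varepsilon^{-2}\log^4 n)$ per execution. Medians over $O(\log(2/\delta))$ executions give the second term of \cref{eq:main-runtime}.

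Correctness uses \cref{cooling-variance}, which gives $\E{Y}=Q$ with constant relative variance. Between-round correlations are controlled because a block of length $C\trel$ contracts the covariance of the bounded functions $\lambda_{i+1}/\lambda_i$ by a constant factor. Combined with \cref{eq:terminal-correction}, this shows that $n!\,\widehat Q\,\widehat p^{\,*}$ is within a factor $1\pm\varepsilon$ of $\per{A}$.

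The main obstacle is the HWS relaxation bound. The chain-independent bookkeeping above is routine given the $\widetilde O(n^5)$ argument. The hard part is designing the HWS transition probabilities so that the flow's congestion at every transition is simultaneously $O(n^2)$ times the routing energy, uniformly over rough weights. I would first attempt it by setting each slide's rate proportional to the target hole weight, and then checking the congestion bound against the injection-style encoding used in \cref{sec:JSV-flow}.
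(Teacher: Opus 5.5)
Your plan is the paper's plan: HWS inside the two-phase framework, $O(n^3\log^2 n)$ burn-ins from $M^*$, medians over $b=O(\log(n/\delta))$ batches of $O(n^2\log n)$ learning steps, Phase~2 blocks of length $Cn^2\log n$, and medians over $r=O(\log(2/\delta))$ executions. But the running-time accounting, which is the real content of \cref{thm:main-n4}, has a gap. You state that each step costs $O(n)$, yet you multiply the transition counts by $\widetilde O(1)$. Phase~1 uses $O(n^4\log^4 n\log(n/\delta))$ transitions and Phase~2 uses $O(n^3\log^3 n\,\varepsilon^{-2}\log(2/\delta))$. At $O(n)$ per step this gives only $\widetilde O(n^5+n^4\varepsilon^{-2})$, no better than the boosted JSV algorithm; both terms of \cref{eq:main-runtime} need $O(\log n)$ (expected) time per transition. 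The missing observation is that the HWS move probabilities depend only on the hole pair and the pivot vertex $a$ or $x$, not on the rest of the matching. At $N\in\mathcal N(u,v)$ they are $n\lambda(u,v)/(2W(u,v))$, $\lambda(u,a)L_v(a)/(2W(u,v))$ and $\lambda(x,v)R_u(x)/(2W(u,v))$, and each pivot vertex has a unique mate in $N$. So for each activity--weight pair you can compute, in $O(n^3)$ time, all scores, $D^L$, $D^R$, $W$, and a prefix-sum table over the $O(n)$ pivot vertices of each of the $n^2$ hole pairs. A transition then samples its move and pivot in $O(\log n)$ and finds the pivot's mate in a mate array. Only $O(L)$ pairs occur ($w_{i-1}$ at $\lambda_i$ for learning, $w_i$ at $\lambda_i$ for counting), so the tables cost $O(n^4\log^2 n)$ in total, inside the first term.

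Second, the theorem asserts a worst-case time bound, but your Phase~2 accounting holds only in expectation. Each retained sample takes a geometric number of blocks, pivot sampling is expected-time, and on histories with inaccurate learned weights $\pi(\mathcal P)$ need not even be bounded below. You need cutoffs fixed in advance. The paper stops counting after $C(L+1)\tau\varepsilon^{-2}$ transitions and caps sampling time as in \cref{eq:complete-table-time-cap}. Markov's inequality then charges the stopping probability to each execution's constant failure probability before the median is taken. Handle the total-variation errors the same way. Do not couple all $bL$ burn-in outputs to exact samples and union-bound: the accumulated $bL\eta$ with $\eta=1/\mathrm{poly}(n)$ exceeds $\delta$ when $\delta$ is tiny. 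Instead, absorb $\eta$ into each batch's constant failure probability ($1/8$ becomes $1/4$) before the median. Likewise absorb the Phase~2 start error into each execution's failure probability; after conditioning separate burn-in outputs on $\mathcal P$, it is at most $2L\eta/\kappa\le1/32$. The $O(L(b+r))$ extra burn-ins for those starts must also be charged; they fit in the first term.

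Finally, your sketch of the occupation bound misidentifies the gain. The relaxation time alone already gives variance $O(\trel\,p_S/T)$, which would force $T=\widetilde\Omega(n^4)$ for patterns with $p_S=\Theta(n^{-2})$. \cref{lem:HWS-occupation-estimation} obtains the needed $p_S^2$ scaling from coupled flows comparing each hole-pattern conditional law with $\mu_\emptyset$. Since you may cite that lemma, this does not affect the proof of the theorem.
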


\subsection{The hole-weighted slide (HWS)  chain}
\label{subsec:HWS-chain}

The \emph{hole-weighted slide} (HWS) chain uses the same state space $\Omega$ and the same add/delete and slide moves as the JSV chain. It changes both the transition probabilities and the weights assigned to near-perfect matchings.

Fix positive edge activities $\lambda$ and positive hole weights
$w$.  The ideal weights remain
\[
w^*(u,v):=
\frac{\lambda(\mathcal P)}{\lambda(\mathcal N(u,v))},
\]
and rough and accurate weights are defined the same as in
\cref{def:rough-hole-weights}.

\paragraph{Hole weights.}
For distinct $v,a\in V_2$ and distinct $u,x\in V_1$, define
the \emph{slide scores}:
\begin{equation}
L_v(a):=
\left(
\sum_{z\in V_1}\frac{\lambda(z,a)}{w(z,v)}
\right)^{-1},
\qquad
R_u(x):=
\left(
\sum_{b\in V_2}\frac{\lambda(x,b)}{w(u,b)}
\right)^{-1}.
\label{eq:HWS-scores}
\end{equation}
Set $L_v(v)=0$ and $R_u(u)=0$.
The score $L_v(a)$ is used for slides that move the left hole
while fixing the right hole $v$; $R_u(x)$ is used for slides
that move the right hole while fixing the left hole $u$.
All scores are computable from the supplied activities and
hole weights, without evaluating partition functions.

For each hole pair $(u,v)$, put
\begin{align}
D^L(u,v)
&:=
\sum_{a\in V_2\setminus\{v\}}\lambda(u,a)L_v(a),
\nonumber
\\
D^R(u,v)
&:=
\sum_{x\in V_1\setminus\{u\}}\lambda(x,v)R_u(x),
\nonumber \\
W(u,v)
&:=
w(u,v)+n\lambda(u,v)+D^L(u,v)+D^R(u,v).
\label{eq:HWS-effective-weights}
\end{align}
We call $W(u,v)$ the \emph{effective hole weight}.
It determines the unnormalized matching weights
\begin{equation}
\omega^{\mathrm{HWS}}_{\lambda,w}(M):=
\begin{cases}
n^2\lambda(M),&M\in\mathcal P,\\
W(u,v)\lambda(M),&M\in\mathcal N(u,v).
\end{cases}
\label{eq:HWS-state-weights}
\end{equation}
Write
\begin{equation}
\mathcal Z^{\mathrm{HWS}}_{\lambda,w}
:=
\sum_{M\in\Omega}\omega^{\mathrm{HWS}}_{\lambda,w}(M),
\qquad
\pi^{\mathrm{HWS}}_{\lambda,w}(M)
:=
\frac{\omega^{\mathrm{HWS}}_{\lambda,w}(M)}
     {\mathcal Z^{\mathrm{HWS}}_{\lambda,w}}.
\label{eq:HWS-stationary-distribution}
\end{equation}
Our goal is still to learn accurate estimates of the ideal hole
weights $w^*(u,v)$; the effective weights $W(u,v)$ are computed
from the current estimates $w$.

\paragraph{Transitions.}
Let $P_{\mathrm{HWS}}=P_{\mathrm{HWS}}(\lambda,w)$ denote the
transition matrix.  Its non-loop transitions are as follows.
\begin{itemize}
\item \emph{Add/delete.}
For $M\in\mathcal P$ and $(u,v)\in M$,
\[
P_{\mathrm{HWS}}(M,M\setminus\{(u,v)\})=\frac1{2n}.
\]
For $N\in\mathcal N(u,v)$,
\[
P_{\mathrm{HWS}}(N,N\cup\{(u,v)\})
=
\frac{n\lambda(u,v)}{2W(u,v)}.
\]

\item \emph{Slide fixing the right hole.}
For $N\in\mathcal N(u,v)$ and $(x,a)\in N$,
\[
P_{\mathrm{HWS}}
\bigl(N,N\setminus\{(x,a)\}\cup\{(u,a)\}\bigr)
=
\frac{\lambda(u,a)L_v(a)}{2W(u,v)}.
\]
The resulting matching belongs to $\mathcal N(x,v)$.

\item \emph{Slide fixing the left hole.}
For $N\in\mathcal N(u,v)$ and $(x,a)\in N$,
\[
P_{\mathrm{HWS}}
\bigl(N,N\setminus\{(x,a)\}\cup\{(x,v)\}\bigr)
=
\frac{\lambda(x,v)R_u(x)}{2W(u,v)}.
\]
The resulting matching belongs to $\mathcal N(u,a)$.
\end{itemize}
All other non-loop transition probabilities are zero, and the
remaining probability is assigned to the self-loop.
These are the transition probabilities themselves; there is
no additional Metropolis--Hastings acceptance step.

The total non-loop probability is $1/2$ at a perfect matching.
At $N\in\mathcal N(u,v)$, it is
\[
\frac{n\lambda(u,v)+D^L(u,v)+D^R(u,v)}{2W(u,v)}
=
\frac{W(u,v)-w(u,v)}{2W(u,v)}
\leq\frac12.
\]
Thus the chain is lazy.
It is also irreducible, since every JSV add/delete and slide
move has positive probability.

The factors $W(u,v)$ cancel in the detailed-balance equations.
For example, if
$N'=N\setminus\{(x,a)\}\cup\{(u,a)\}$, then
\[
\lambda(N)\lambda(u,a)=\lambda(N')\lambda(x,a),
\]
and both stationary transition probabilities equal
\[
\pi^{\mathrm{HWS}}_{\lambda,w}(N)P_{\mathrm{HWS}}(N,N')
=
\pi^{\mathrm{HWS}}_{\lambda,w}(N')P_{\mathrm{HWS}}(N',N)
=
\frac{\lambda(N)\lambda(u,a)L_v(a)}
     {2\mathcal Z^{\mathrm{HWS}}_{\lambda,w}}.
\]
The other slide direction and the add/delete moves satisfy
detailed balance similarly.  Hence the stationary distribution
is $\pi^{\mathrm{HWS}}_{\lambda,w}$.  (See \cref{rem:capacity-comparison} for further intuition on the design of the HWS chain transitions.)

\paragraph{Perfect-matching probability.}
The perfect matchings still have constant stationary probability
when $w$ is rough.  At ideal weights, we have:
\[
\sum_{u,v}\lambda(\mathcal N(u,v))D^L(u,v)
=
\sum_{u,v}\lambda(\mathcal N(u,v))D^R(u,v)
=
n(n-1)\lambda(\mathcal P).
\]
The perfect-matching weights, the $w^*(u,v)$ terms, and the
$n\lambda(u,v)$ terms each contribute
$n^2\lambda(\mathcal P)$ to the normalizing constant.
Consequently,
\[
\mathcal Z^{\mathrm{HWS}}_{\lambda,w^*}
=
(5n^2-2n)\lambda(\mathcal P).
\]
Replacing ideal weights by rough weights changes each effective hole weight by at most a factor of two.
In particular,
\begin{equation}
\pi^{\mathrm{HWS}}_{\lambda,w}(\mathcal P)\geq\frac1{32},
\quad
\mbox{ and for every $M\in\mathcal P$, we have }
\pi^{\mathrm{HWS}}_{\lambda,w}
(M\mid\mathcal P)
=
\frac{\lambda(M)}{\lambda(\mathcal P)}.
\label{eq:HWS-perfect-conditional}
\end{equation}
Moreover, at activity $\lambda_i$, this conditional distribution $\pi^{\mathrm{HWS}}_{\lambda,w}
(\cdot\mid\mathcal P)$ is exactly
$\mu_i$.

For the HWS chain, we improve the relaxation-time bound by a factor of $n$.

\begin{restatable}{lemma}{rsrelaxlemma}
\label{lem:hws-relaxation}
For every fixed $c\geq1$, there is a constant $C_c$ such that,
for every positive activity $\lambda$ and every set of
$c$-rough hole weights $w$,
\begin{equation}
\trel(P_{\mathrm{HWS}}(\lambda,w))
\leq C_c n^2\log n.
\label{eq:rs-relaxation}
\end{equation}
\end{restatable}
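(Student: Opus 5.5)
We plan to prove the relaxation bound by the energy-weighted multicommodity flow method advertised in the introduction. The method gives a Poincaré inequality through a randomized routing $\Gamma_{x,y}$ for every ordered pair of states, a distribution over paths from $x$ to $y$. For each transition $e=(z,z')$, let $q_{x,y}(e)$ be the probability that the routing for $(x,y)$ uses $e$. The key step is to average the telescoping identity $f(x)-f(y)=\sum_{e}(\pm)\nabla_e f$ over the routing before applying Cauchy--Schwarz, with weights $Q(e)=\pi(z)P(z,z')$. This should yield
\[
\Var[\pi]{f}\leq \max_e \frac{1}{Q(e)}\sum_{x,y}\pi(x)\pi(y)\,\frac{q_{x,y}(e)^2}{\text{(local normalizer)}}\cdot(\text{energy})\ \Dir_P(f,f).
\]
Concretely, we apply Cauchy--Schwarz in the form $\bigl(\sum_e q_{x,y}(e)\nabla_e f\bigr)^2\leq \bigl(\sum_e q_{x,y}(e)^2/Q(e)\bigr)\bigl(\sum_e Q(e)(\nabla_e f)^2\bigr)$, or a weighted variant of it. The path-length factor of Sinclair's bound is thereby replaced by the routing energy $\sum_e q_{x,y}(e)^2$. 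We take this general lemma as the tool developed in \cref{sec:energy-flows-overview}, and the task reduces to building the routing and bounding the congestion for $P_{\mathrm{HWS}}$.

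First, we reduce by symmetry and by the fact that $\pi(\mathcal P)=\Theta(1)$ (by \cref{eq:HWS-perfect-conditional}). It suffices to route between perfect matchings and arbitrary states, since pairs of near-perfect states can be routed through an intermediate perfect matching drawn from $\pi(\cdot\mid\mathcal P)$, at constant cost. For $I\in\mathcal N(u,v)$ or $I\in\mathcal P$, and $F\in\mathcal P$, we decompose $I\oplus F$ into cycles and at most one augmenting path. Each cycle is opened by a delete, processed by slides, and closed by an add. The augmenting path is processed by the randomized two-ended interleaving of \cref{sub:routing-energy-idea}, in which after $t$ slides each of the $t+1$ splits is uniform. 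Cycles are likewise opened at a uniformly random edge and unwound from both ends. This gives energy $O(\log n)$ per component, and $O(\log n)$ in total if we randomize the order of the components, or process them so that per-stage usage decays harmonically.

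Second, the congestion bound. For a fixed transition $e$ from $Z\in\mathcal N(a,b)$, we use the usual injective encoding $\eta=I\oplus F\oplus Z$, which is a near-perfect or perfect matching up to $O(1)$ edges. This gives the weight identity $\lambda(I)\lambda(F)\approx\lambda(Z)\lambda(\eta)\cdot\lambda(\text{$O(1)$ edges})$. Here the HWS design enters. A slide fixing hole $v$ that moves the left hole, through edge $(u,a)$, has capacity $Q(e)\propto\lambda(Z)\lambda(u,a)L_v(a)$. The score $L_v(a)^{-1}=\sum_z\lambda(z,a)/w(z,v)$ is exactly the sum over the possible left-hole positions $z$ of the encoding's hole weight. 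Summing the flow over the ambiguous endpoint $z$ should therefore telescope against $L_v(a)$ without losing a factor of $n$. The weights $w\approx w^*=\lambda(\mathcal P)/\lambda(\mathcal N(\cdot,\cdot))$ convert sums of $\lambda(\eta)$ over $\mathcal N(z,v)$ into $\lambda(\mathcal P)$. Combining this with the boosting factor $n^2$ on perfect matchings and with $\mathcal Z^{\mathrm{HWS}}=\Theta(n^2\lambda(\mathcal P))$, we aim for congestion $O(n^2)$ per edge. Multiplied by the energy factor $O(\log n)$, this gives $\trel=O(n^2\log n)$, with constants depending on $c$ through roughness factors $c^{O(1)}$.

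The main obstacle is the second step: checking that the per-edge quantity is $O(n^2)$ rather than $O(n^3)$. This requires the squared usages $q_{x,y}(e)^2$ to sum correctly over all pairs sharing an encoding. The randomization spreads usage over many splits, and we must show that the usage probability at stage $t$ (about $1/(t+1)$) squared, summed over the $t+1$ pairs with the same $(Z,\eta)$ differing in the split, still sums to $O(1/(t+1))$ per stage. We would first handle the single-augmenting-path case carefully, tracking which hole scores $L_v(a)$ and $R_u(x)$ appear in $Q(e)$ versus the stationary weights. Cycles and multiple components would then follow by the same bookkeeping.
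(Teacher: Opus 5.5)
\textbf{Main gap: routing through all cycles of $I\oplus F$.} Your two-ended routing of the augmenting path and your use of \cref{lem:energy-flow} match the paper. The problem is your first step, which switches \emph{every} component of $I\oplus F$ and gets total energy $O(\log n)$ by randomizing the component order. This cannot work with that tool, because $\rho_{\mathrm{supp}}$ charges the full demand of every pair whose routing support contains a transition. States have at most two holes, so cycles are switched one at a time.

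\emph{Fixed order.} The component supports are disjoint, so energies add. A cycle of bounded length contributes $\Omega(1)$, since each of its $O(1)$ stages carries usage $1$ on $O(1)$ transitions. Hence $\En_{\max}=\Omega(n)$. For example, when $G$ is a disjoint union of $4$-cycles and $\lambda$ is small, such pairs carry most of the demand, so passing to $\mathcal R_{\En}$ does not help and you get only $O(n^3)$.

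\emph{Random order.} The current matching and $I\uplus F$ no longer determine which of the other $k-1$ cycles are already switched. All $2^{k-1}$ compatible pairs have equal weight and contain the transition in their support, so $\rho_{\mathrm{supp}}$ gains a factor $2^{k-1}$. Heuristically, averaging over the equally weighted orientations suggests that spreading a stage over $s$ possible switched sets costs about a factor $s$ in $\rho_{\mathrm{supp}}$, so harmonic decay would not remove the factor $k$ either.

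\emph{What the paper does instead.} It never routes through cycles. It compares $\mu_{u,v}$ with $\mu_\emptyset$ via the augmenting-path coupling of \cref{lem:augmenting-path-coupling}, whose endpoints $I$ and $J=I\oplus\mathcal A$ differ only on $\mathcal A$; the two-ended routing is then the whole routing. It controls $\Var[\mu_\emptyset]{f}$ separately: condition on $U=I\uplus F$ and tensorize over the uniform orientation vector $x\in\{0,1\}^k$ (\cref{eq:Walsh-reduction}). This reduces to single-cycle comparisons, with energy at most $2$ and congestion $O_c(n^2)$, and no factor for the number of cycles. With $m=\E[\mu_\emptyset]{f}$, it combines the two via $\Var[\pi]{f}\leq\E[\pi]{(f-m)^2}\leq2\sum_{u,v}p_{u,v}\E[(I,J)\sim\kappa_{u,v}]{(f(I)-f(J))^2}+2\Var[\mu_\emptyset]{f}$.

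\textbf{The load step also lacks its key inequality.} Your displayed Cauchy--Schwarz with weights $Q(e)$ does not give a maximum over transitions. Summed over pairs, it gives $\sum_{x,y}\pi(x)\pi(y)\sum_e q_{x,y}(e)^2/Q(e)$, which is at least an average effective resistance and is exponentially large here. \Cref{lem:energy-flow} uses unit weights, so energy is a per-pair quantity and squared usages never enter the load. The obstacle you single out, summing $q^2$ over splits that share an encoding, therefore does not arise: for a fixed transition and complement, the pair, and hence the split, is unique.

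What must be proved is the load bound itself. For a slide $N\to N'$ inserting $(u,a)$ with fixed hole $v$, the two-ended routing has $I\uplus F=N\uplus K\uplus\{(u,a),(y,v)\}$, where $K\in\mathcal N^{(2)}(r,y;s,a)$ is a \emph{four-hole} matching. In the fixed-pattern normalization the load is at most $\frac{\lambda(N)\lambda(u,a)}{Z_{r,s}Z_\emptyset}\sum_y\lambda(y,v)Z^{(2)}_{r,y;s,a}$. The frontier factor here is $\lambda(y,v)$, not the $\lambda(z,a)$ appearing in $L_v(a)^{-1}=\sum_z\lambda(z,a)/w(z,v)$, so nothing telescopes against $L_v(a)$. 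Bounding the sum by $4L_v^*(a)Z_{r,s}$ requires the switching inequality \cref{eq:two-hole-switching} and its consequences \cref{eq:transposed-two-hole,eq:one-row-four-hole}, which your plan does not contain.
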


Moreover, the HWS chain satisfies the following analog of
\cref{lem:occupation-estimation}.

\begin{restatable}{lemma}{rsoccupationlemma}
\label{lem:HWS-occupation-estimation}
Fix a positive activity $\lambda$ and rough hole weights $w$.
Let $X_0\sim\pi^{\mathrm{HWS}}_{\lambda,w}$, and let
$X_1,X_2,\ldots$ evolve according to $P_{\mathrm{HWS}}(\lambda,w)$.
For every
\[
S\in\{\mathcal P\}\cup
\{\mathcal N(u,v):u\in V_1,\ v\in V_2\},
\]
write $p_S:=\pi^{\mathrm{HWS}}_{\lambda,w}(S)$.
There is a universal constant $C>0$ such that, for every
integer $T\geq1$,
\begin{equation}
\Var{\frac1T\sum_{t=0}^{T-1}\one_{\{X_t\in S\}}}
\leq
\frac{Cn^2\log n}{T}\,p_S^2.
\label{eq:HWS-occupation-estimation}
\end{equation}
\end{restatable}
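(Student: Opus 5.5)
The plan is to reduce the variance of the occupation average to a weighted Dirichlet-type quantity for the indicator $f=\one_S-p_S$, and then bound that quantity by the same energy-weighted flow used for \cref{lem:hws-relaxation}, but applied to $f$ directly rather than through the global relaxation time. A naive approach via $\trel$ alone gives $\Var{\frac1T\sum_t f(X_t)}\le \frac{2\trel}{T}\Var[\pi]{f}=O(\frac{n^2\log n}{T}p_S(1-p_S))$. This is too weak for $S=\mathcal N(u,v)$, where $p_S=\Theta(n^{-2})$ and we need $p_S^2$ rather than $p_S$. So the plan uses the spectral representation more finely. Write $P$ for $P_{\mathrm{HWS}}$; since $P$ is lazy and reversible, its spectrum lies in $[0,1]$, and
\[
T\,\Var{\tfrac1T\textstyle\sum_{t<T} f(X_t)}\le 2\,\langle f,(I-P)^{-1}f\rangle_\pi ,
\]
where $(I-P)^{-1}$ denotes the Green's operator on mean-zero functions. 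We then bound $\langle f,(I-P)^{-1}f\rangle_\pi=\sup_g\{2\langle f,g\rangle_\pi-\Dir_P(g,g)\}$ by a variational/flow argument.

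The key step is a dual (flow) bound: $\langle f,(I-P)^{-1}f\rangle_\pi$ equals the minimum energy $\sum_{e}\theta(e)^2/Q(e)$ over flows $\theta$ with divergence $\pi f$, where $Q(x,y)=\pi(x)P(x,y)$. Since $\pi f=\pi|_S-p_S\pi$, such a flow is obtained by routing, for each $x\in S$ and $y\in\Omega$, an amount $\pi(x)\pi(y)$ from $x$ to $y$. We use the randomized two-ended interleaved routings from \cref{sec:energy-flows-overview} (the same ones underlying \cref{lem:hws-relaxation}), averaged into a fractional flow. By Cauchy--Schwarz applied after averaging, as in \cref{lem:energy-flow}, the energy is at most $\max_e \frac{\text{load}_S(e)}{Q(e)}\cdot(\text{routing energy})$. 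Here $\text{load}_S(e)$ is the flow through $e$ from pairs with source in $S$; the routing energy is $O(\log n)$.

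The main obstacle is showing that $\max_e \mathrm{load}_S(e)/Q(e)=O(n^2)\,p_S$ for every hole pattern $S$; equivalently, that restricting sources to $S$ scales the congestion of \cref{lem:hws-relaxation} by the factor $p_S$. For $S=\mathcal P$ this is immediate from the global bound, since $p_{\mathcal P}=\Theta(1)$ by \cref{eq:HWS-perfect-conditional}. For $S=\mathcal N(u,v)$, I would redo the injective-encoding count from \cref{sec:JSV-flow} with the source hole pattern fixed. The encoding of $(x,y)$ through a transition $e$ determines the source's hole pattern up to the state of $e$. Summing over sources in $\mathcal N(u,v)$ then picks up the total weight $W(u,v)\lambda(\mathcal N(u,v))\propto p_S$ instead of the sum over all patterns. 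The slide scores $L_v,R_u$ in $W$ are designed so that this per-pattern accounting closes (cf.\ \cref{rem:capacity-comparison}), and roughness costs only constant factors. Combining the two steps gives $\langle f,(I-P)^{-1}f\rangle_\pi\le O(n^2\log n)\,p_S\cdot p_S$, where the second factor $p_S$ comes from $\pi(x)$ summing over $x\in S$. This yields \cref{eq:HWS-occupation-estimation}.
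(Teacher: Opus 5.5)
Your reduction is sound. For $g=\one_S-p_S$, laziness and reversibility give $T\,\Var{\frac1T\sum_{t<T}g(X_t)}\le2\langle g,(\mathrm{Id}-P)^{-1}g\rangle_\pi$, and any flow with divergence $\pi g$ bounds this quantity by its energy. The case $S=\mathcal P$ also follows directly from \cref{lem:hws-relaxation}, since $p_{\mathcal P}\ge1/32$.

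The gap is the flow you choose for $S=\mathcal N(u,v)$. You route $\pi(x)\pi(y)$ directly from every $x\in S$ to every $y\in\Omega$. You then assert per-pair routing energy $O(\log n)$ and $\max_e\mathrm{load}_S(e)/Q(e)=O(n^2)p_S$, using routings ``underlying \cref{lem:hws-relaxation}''. No such routings exist in the paper:
\begin{itemize}
\item The two-ended routing of \cref{lem:two-ended-energy-overview} only joins $I\in\mathcal N(r,s)$ to the \emph{coupled} perfect matching $J=I\oplus\mathcal A$, which differs from $I$ on a single augmenting path. Independent endpoints also differ on alternating cycles. When $y$ is near-perfect with other holes, they differ on two alternating paths joining four holes. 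You give neither routings nor encoding counts for these configurations.
\item Processing cycles one after another with random roots, as in \cref{lem:cycle-energy-overview}, costs energy at least one per cycle. The number of cycles can be $\Theta(n)$, e.g.\ for an input graph that is a disjoint union of $4$-cycles at small $\lambda$. So you lose exactly the factor $n$ you want to gain. Randomizing the cycle order would lower the energy, but then the current state no longer identifies which cycles are switched, and the support congestion is no longer controlled by the encoding.
\item The paper removes the cycle count only in the perfect sector, through the tensorization (Walsh) reduction in \cref{lem:perfect-sector-comparison}. That is a Poincar\'e-type inequality, not a flow with prescribed divergence, so it does not plug into your Thomson-principle bound.
\item \cref{lem:hws-relaxation} is not proved by a single global flow, so there is no congestion bound to ``restrict to sources in $S$''. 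Your re-encoding step is therefore the entire unproved claim.
\end{itemize}

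The missing idea is to use $\mathcal P$ as a hub, so that independent endpoints are never routed. With $R$ ranging over hole patterns and $\mu_R=\pi(\,\cdot\mid R)$, we have, as measures,
\[
\pi g=p_S(\mu_S-\mu_\emptyset)-p_S\sum_Rp_R(\mu_R-\mu_\emptyset).
\]
Route each $\mu_R-\mu_\emptyset$ by the augmenting-path coupling $\kappa_R$ of \cref{lem:augmenting-path-coupling} together with the two-ended routing. Its energy is at most $K=O(n^2\log n)$ by \cref{lem:HWS-flow-summary}, and no cycle is ever switched. The triangle inequality for the energy norm then gives $\langle g,(\mathrm{Id}-P)^{-1}g\rangle_\pi\le(2p_S)^2K$. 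The factor $p_S^2$ comes from linearity alone, with no per-pattern recount.

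This is the paper's proof in dual form:
\begin{itemize}
\item Jensen on $\kappa_R$ gives $|m_R-m_{\mathcal P}|^2\le K\Dir_P(f,f)$ for conditional means of any $f$.
\item For centered $f$, $\sum_Rp_Rm_R=0$ then yields $(\E[\pi]{gf})^2=p_S^2m_S^2\le4Kp_S^2\Dir_P(f,f)$.
\item The truncated test function $f_T=\sum_{r<T}P^rg$ turns this into $\sum_{r<T}\Cov(g(X_0),g(X_r))\le4Kp_S^2$.
\end{itemize}
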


Compared with \cref{lem:occupation-estimation}, \cref{lem:HWS-occupation-estimation} saves a
factor of $O(n)$ in the length of each learning trajectory.

For any fixed hole pattern, a stationary trajectory of length
$O(n^2\log n)$ gives a constant-relative-error estimate with
constant success probability.
The same trajectory supplies the empirical frequencies for all
hole patterns.  As in Phase~1 of the simpler algorithm, medians
over independent batches amplify these guarantees simultaneously.

The different stationary weights require a corresponding change
to the weight update.  Writing
$p_{\mathcal P}:=\pi^{\mathrm{HWS}}_{\lambda,w}(\mathcal P)$ and
$p_{u,v}:=\pi^{\mathrm{HWS}}_{\lambda,w}(\mathcal N(u,v))$, we have
\begin{equation}
\frac{W(u,v)}{n^2}\frac{p_{\mathcal P}}{p_{u,v}}
=
\frac{\lambda(\mathcal P)}{\lambda(\mathcal N(u,v))}
=
w^*(u,v).
\label{eq:HWS-weight-recovery}
\end{equation}
Thus we use the same median-frequency update as before, with
$W(u,v)$ replacing the prefactor $w(u,v)$:
\begin{equation}
w_{\mathrm{new}}(u,v)
:=
\frac{W(u,v)}{n^2}
\frac{\widehat p_{\mathcal P}}{\widehat p_{u,v}}.
\label{eq:HWS-empirical-weight-update}
\end{equation}
As before, the algorithm returns zero if any median is zero.
Relative error at most $1/10$ in every median makes the resulting
hole weights accurate.

\subsection{The $\widetilde{O}(n^4)$-time algorithm}

Use the HWS chain throughout both phases of
\cref{sec:algorithm-overview}, with the revised weight update
in \cref{eq:HWS-empirical-weight-update}.
For rough weights, the normalizing-constant bound above gives
$\pi^{\mathrm{HWS}}_{\lambda,w}(M^*)\geq1/(10n!)$.
Thus \cref{lem:hws-relaxation,eq:mixing-from-relaxation}
give an $O(n^3\log^2 n)$-step burn-in to inverse-polynomial
total variation error.
The learning trajectories and the sampling blocks in Phase~2
have length $Cn^2\log n$.
Since the perfect-matching conditional distribution remains
$\mu_i$ and perfect matchings have constant stationary
probability, the preceding correctness arguments apply with
adjusted constants.

At each activity--weight pair, construct complete sampling
tables in $O(n^3)$ time, allowing each HWS transition to be
sampled in $O(\log n)$ expected time.
These tables are shared by all runs using that pair.
The preceding repetition and cutoff arguments then give
\cref{thm:main-n4}; details are provided in \cref{sec:missing}.

\begin{remark}[Model of computation]
\label{rem:implementation}
The displayed running-time bounds in our results count arithmetic operations and
comparisons at unit cost.  For $0/1$ inputs, the algorithm in
\cref{thm:main-n4} can also be implemented using
\[
\widetilde O(n^4+n^3\varepsilon^{-2})
\]
bit operations, where the notation suppresses polylogarithmic
factors in $n$, $\varepsilon^{-1}$, and $\delta^{-1}$.
Thus the bit-model implementation preserves the polynomial
exponents, although it may introduce additional logarithmic
factors beyond those displayed in \cref{eq:main-runtime}.
We omit the detailed finite-precision accounting.
\end{remark}

\section{The improved $\widetilde O(n^{3.5})$-time algorithm}
\label{sec:improved-algorithm}

We now improve the HWS algorithm from \cref{sec:algorithm-n4}
to obtain \cref{thm:main-improved}.
There are two further ingredients: checkpoint-based initialization
and faster implementation of the HWS transitions.
We describe one execution with constant success probability,
and amplify its success probability at the end.

The algorithm in \cref{sec:algorithm-n4} initializes each
learning batch by running HWS from the fixed matching $M^*$.
Since $\log(1/\pi^{\mathrm{HWS}}(M^*))=O(n\log n)$,
\cref{lem:hws-relaxation,eq:mixing-from-relaxation}
give a burn-in of $\widetilde O(n^3)$ transitions per batch.
Across all $\widetilde O(n)$ cooling temperatures,
these initializations take $\widetilde O(n^4)$ time.

A perfect matching sampled from $\mu_i$ is already a
constant warm-start for HWS at activity $\lambda_i$.
Indeed, \cref{eq:HWS-perfect-conditional} gives
$\mu_i(M)\leq32\pi^{\mathrm{HWS}}_{\lambda_i,w}(M)$.
Thus the frequency-estimation guarantee applies from this
initial distribution with adjusted constants, without
additional burn-in.
A sufficiently small total variation error in the initial
sample is harmless.
Our task is to obtain these perfect samples more efficiently.

\paragraph{Checkpoint-based initialization.}
We use the same cooling schedule $\lambda_0,\ldots,\lambda_L$
from \cref{lem:cooling-schedule}, with $L=O(n\log^2 n)$.
Using ideas from~\cite{SVV09}, we select a subsequence of
\emph{checkpoint temperatures}
\[
1=\widehat\lambda_0>
\widehat\lambda_1>\cdots>
\widehat\lambda_{\widehat L}=\lambda_L,
\qquad
\widehat\lambda_j=\lambda_{c_j},
\qquad
\widehat L=O(\sqrt n\log^2 n).
\]
The checkpoints are chosen using only the perfect-matching
partition function.  Their key property is that
\begin{equation}
\mathcal B(c_j,i)
:=
\sum_{M\in\mathcal P}\frac{\mu_{c_j}(M)^2}{\mu_i(M)}
=
1+\chi^2(\mu_{c_j}\Vert\mu_i)
\leq64
\qquad(c_j\leq i\leq c_{j+1}).
\label{eq:checkpoint-coverage}
\end{equation}
In particular, this bounds the overlap both between successive
checkpoints and between a checkpoint and every cooling
temperature that it serves.

Whenever a perfect sample is needed at $\lambda_i$, generate
a fresh uniform perfect matching at $\lambda_0=1$.
Let $\widehat\lambda_1,\ldots,\widehat\lambda_j$ be the
checkpoints strictly preceding $\lambda_i$.
Carry the matching through the sequence
\[
\lambda_0
\longrightarrow\widehat\lambda_1
\longrightarrow\cdots
\longrightarrow\widehat\lambda_j
\longrightarrow\lambda_i.
\]
At each visited temperature, run HWS using the available rough
hole weights.  Inspect the chain after suitably chosen blocks
of transitions and retain the first perfect matching observed
before proceeding to the next temperature.
The block lengths and inspection cutoffs give inverse-polynomial
total variation error using $\widetilde O(n^2)$ transitions
per conversion; their precise choices are given in
\cref{sec:missing}.
Return zero if a sampling cutoff is exceeded.

During weight learning, the final conversion at $\lambda_i$
uses $w_{i-1}$, which is rough by \cref{lem:cooling-schedule}.
All intermediate checkpoints have already been processed,
so their hole weights are available.
At $\lambda_0=1$, samples are generated directly.
Each requested initialization uses a fresh uniform start and
fresh randomness; no reservoir of matchings is maintained.

Conditional on the information available before a request,
its checkpoint sequence and transition rules are fixed.
As long as the preceding estimates are accurate, the overlap
bound justifies every conversion.
Total variation errors add along the sequence, so taking each
conversion sufficiently accurate controls the error in the
returned sample, and then the total error over all requests.
The detailed argument is deferred to \cref{sec:missing}.

\paragraph{Selecting checkpoints while learning weights.}
The checkpoints are selected as Phase~1 proceeds, using
constant-relative-error estimates of the perfect-matching
partition functions $Z_i$.
These estimates use fresh perfect samples and prefixes of the
whole-product estimator in \cref{eq:whole-product}, together
with median amplification.
Every decision uses only estimates already available, and
every new checkpoint is at a previously processed cooling
temperature.
Thus neither checkpoint selection nor a traversal requires
hole weights that have not yet been learned.

At each cooling temperature, use separate sample requests
for the learning batches and the prefix estimates.
There are $O(\log n)$ requests per temperature.
The learning batches run for $Cn^2\log n$ transitions and
recalibrate the weights using
\cref{eq:HWS-empirical-weight-update}, as before.
We still learn hole weights at every cooling temperature;
only the initialization traversals use the shorter checkpoint
schedule.

\paragraph{Faster implementation of HWS transitions.}
A second bottleneck is the cost of sampling the HWS transitions.
There are $O(n)$ possible pivots for each of the $n^2$ hole
patterns, so constructing complete sampling tables costs
$O(n^3)$ time per temperature, or $\widetilde O(n^4)$ overall.
Although an individual transition probability is inexpensive
to evaluate once the scores are known, sampling from the
distribution over all pivots is not.

We avoid constructing complete tables at every temperature.
Instead, compute the $L$- and $R$-scores and their normalizing
sums using fast matrix multiplication, and retain only the
$k$ heaviest eligible true-edge pivots for each hole pattern.
A batched construction produces these partial tables in
$\widetilde O(n^\omega+n^2k)$ time per temperature, where
$n^\omega$ is the matrix-multiplication time and we may take
$\omega<5/2$.
Combining the partial tables with rejection sampling implements
each transition in $\widetilde O(1+n/k)$ expected time,
without changing its probabilities.

Across the $\widetilde O(n)$ cooling temperatures, the final
sample conversions and learning trajectories use
$\widetilde O(n^3)$ transitions.
Their total expected time, including preprocessing, is
\[
\widetilde O\left(
n^{1+\omega}+n^3k+\frac{n^4}{k}
\right).
\]
Choosing $k=\sqrt n$ balances preprocessing and sampling,
giving $\widetilde O(n^{7/2})$ time for these tasks.

At each selected checkpoint, we instead fix its learned hole
weights and construct complete HWS sampling tables.
These tables are retained and reused whenever a subsequent
initialization visits that checkpoint.
Their total construction time is
$\widetilde O(\widehat L n^3)=\widetilde O(n^{7/2})$,
and each checkpoint transition then takes
$\widetilde O(1)$ expected time.

Excluding table construction, a fresh initialization therefore
takes at most
\[
\widetilde O\left(
\widehat L n^2+n^2\sqrt n
\right)
=
\widetilde O(n^{5/2})
\]
expected time: the first term accounts for the preceding
checkpoints and the second for the final conversion at the
requested cooling temperature.
Across all $\widetilde O(n)$ requested initializations,
the total is $\widetilde O(n^{7/2})$.

\paragraph{Phase~2.}
After all weights have been learned, use fresh checkpoint
traversals to obtain an initial perfect matching at each
cooling temperature.
These samples are separate from those used for weight learning
and checkpoint selection.
Run Phase~2 as in \cref{sec:algorithm-n4}, using independent
HWS chains with the learned weights fixed and sampling blocks
of $Cn^2\log n$ transitions.
Only the initializations traverse checkpoints; subsequent
samples are obtained by continuing the chains at their
respective cooling temperatures.

Averaging $C\varepsilon^{-2}$ rounds of the whole-product
estimator and terminal indicator gives
$\widehat Q$ and $\widehat p^{\,*}$.
One execution returns
\[
\widehat p_{\mathrm{run}}
=
n!\,\widehat Q\,\widehat p^{\,*}.
\]
With partial sampling tables, the counting trajectories take
$\widetilde O(n^{7/2}\varepsilon^{-2})$ expected time.

\paragraph{Running-time summary.}

We cap the running time of each execution and return the median
of $O(\log(2/\delta))$ independent executions of both phases.
The proof of \cref{thm:main-improved}, including the sampling
guarantees, cutoffs, and detailed running-time analysis,
is given in \cref{sec:missing}.

\section{Energy-weighted multicommodity flows}
\label{sec:energy-flows-overview}

This section develops the multicommodity flow method used for the occupation
and relaxation bounds of the boosted JSV chain and the HWS chain.

The common tool in this section is a comparison between weighted squared differences
of function values at pairs of states and the Dirichlet form, which
measures differences across single-step transitions.
To obtain it, we connect the endpoint pairs by paths of allowed moves,
expand each endpoint difference along its path, and control the
weight charged to each transition.
We build on classical canonical-path and multicommodity-flow
bounds~\cite{DS91,Sinclair92}.
Coupled and transport flows~\cite{CVY26,CFJ25,CCFV2025} give additional
freedom in choosing the endpoint pairs.
Our contribution refines the routing analysis by replacing the usual
path-length cost with \emph{routing energy}.
For our routings, paths may have length $\Theta(n)$ while
the routing energy is only $O(\log n)$.

\subsection{Classical multicommodity-flow bounds}
\label{sec:classical-flows}
Let $P$ be a finite lazy irreducible reversible Markov chain on $\Omega$, with stationary distribution $\pi$.
Its undirected transition graph has edge set $\mathcal T:=\{\{x,y\}:x\neq y,\ P(x,y)>0\}$.
For $I,F\in\Omega$,
a path from $I$ to $F$ is a sequence $\gamma=(I=X_0,X_1,\ldots,X_\ell=F)$ such that $\{X_{j-1},X_j\}\in\mathcal T$ for every $1\leq j\leq\ell$.
We write $|\gamma|=\ell$ for the length of $\gamma$.
Let $\Paths(x,y)$ be the set of paths from $x$ to $y$ in the transition
graph with no repeated vertices, including the empty path when $x=y$.
\begin{definition}
  \label{def:fractional-routing}
  A \emph{routing} from $I$ to $F$ is a function $\theta: \Paths(I,F) \to [0,1]$ such that
  \[
  \sum_{\gamma\in\Paths(I,F)}\theta(\gamma)=1.
  \]
  Equivalently, $\theta$ is a probability distribution on $\Paths(I,F)$.
\end{definition}
Under this definition, routing $d$ units of flow sends $d\,\theta(\gamma)$ units along each path $\gamma$.
This generalizes the classical \emph{canonical-path} construction, recovered as the special case in which the routing associated with each pair $(I,F)$ assigns all the weight to a single path.
Recall that the variance of a function $f$ can be written as
\begin{equation}
  \Var[\pi]{f}
  =
  \frac12
  \sum_{I,F\in\Omega}
  \pi(I)\pi(F)(f(I)-f(F))^2.
  \label{eq:flow-variance-identity}
\end{equation}
The standard multicommodity flow therefore sends $\pi(I)\pi(F)$ units of flow from $I$ to $F$ for every ordered pair $(I,F)$.
In particular, it also sends $\pi(I)\pi(F)$ units from $F$ to $I$;
the factor $1/2$ in \cref{eq:flow-variance-identity} accounts for this double counting.
For $e=\{x,y\}\in\mathcal T$, define its \emph{capacity} by
\begin{equation}
\label{eq:flow-capacity}
  Q(e):=\pi(x)P(x,y)=\pi(y)P(y,x).
\end{equation}
Then the Dirichlet form of $f$ can be written as
\begin{equation*}
  \Dir_P(f,f)=\sum_{e=\{x,y\}\in\mathcal T} Q(e)(f(x)-f(y))^2.
\end{equation*}
Fix a routing $\theta_{I,F}$ for every ordered pair $(I,F)$.
The amount of flow through a transition $e$ and the \emph{global congestion} are defined by
\begin{equation}
  \label{eq:classical-congestion}
  L(e)
  :=
  \sum_{I,F\in\Omega}\pi(I)\pi(F)
  \sum_{\substack{\gamma\in\Paths(I,F)\\e\in\gamma}}
  \theta_{I,F}(\gamma),
  \qquad
  \rho
  :=
  \max_{e\in\mathcal T}
  \frac{L(e)}{2Q(e)}.
\end{equation}
The corresponding \emph{length-weighted congestion} is
\begin{equation*}
  \mathcal R_{\mathrm{len}}
  :=
  \max_{e\in\mathcal T}
  \frac1{2Q(e)}
  \sum_{I,F\in\Omega}\pi(I)\pi(F)
  \sum_{\substack{\gamma\in\Paths(I,F)\\e\in\gamma}}
  \theta_{I,F}(\gamma)|\gamma|.
\end{equation*}
Finally, let $\ell:=\max\{|\gamma|:\theta_{I,F}(\gamma)>0\text{ for some }I,F\}$.
\begin{lemma}
\label{lem:length-weighted-flow}
For every $f:\Omega\to\mathbb R$,
\begin{equation*}
  \Var[\pi]{f}
  \leq
  \mathcal R_{\mathrm{len}}\Dir_P(f,f)
  \leq
  \ell\rho\,\Dir_P(f,f).
\end{equation*}
Consequently,
\begin{equation*}
  \trel(P)
  \leq
  \mathcal R_{\mathrm{len}}
  \leq
  \ell\rho.
\end{equation*}
\end{lemma}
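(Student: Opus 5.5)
We prove the first inequality by the standard telescoping argument with Cauchy--Schwarz applied along each path. The second inequality and the relaxation-time consequence then follow directly.

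Fix $f$ and an ordered pair $(I,F)$. For any path $\gamma=(X_0,\ldots,X_\ell)\in\Paths(I,F)$ we have the telescoping identity
\[
f(I)-f(F)=\sum_{j=1}^{|\gamma|}\bigl(f(X_{j-1})-f(X_j)\bigr).
\]
Cauchy--Schwarz over the $|\gamma|$ terms gives
\[
(f(I)-f(F))^2\leq|\gamma|\sum_{e=\{x,y\}\in\gamma}(f(x)-f(y))^2 .
\]
This holds for every path, so averaging against the probability distribution $\theta_{I,F}$ yields
\[
(f(I)-f(F))^2\leq\sum_{\gamma\in\Paths(I,F)}\theta_{I,F}(\gamma)\,|\gamma|\sum_{e\in\gamma}(f(x)-f(y))^2 .
\]
If $I=F$, the empty path contributes nothing, which is consistent.

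Substitute this bound into \cref{eq:flow-variance-identity} and exchange the order of summation, collecting terms by transition $e\in\mathcal T$. Since paths in $\Paths(I,F)$ have no repeated vertices, each edge appears at most once in a given path. Hence the coefficient of $(f(x)-f(y))^2$ is exactly
\[
\frac12\sum_{I,F}\pi(I)\pi(F)\sum_{\gamma\ni e}\theta_{I,F}(\gamma)|\gamma|,
\]
and by definition of $\mathcal R_{\mathrm{len}}$ this is at most $\mathcal R_{\mathrm{len}}\,Q(e)$. Summing over $e$ and using $\Dir_P(f,f)=\sum_e Q(e)(f(x)-f(y))^2$ gives $\Var[\pi]{f}\leq\mathcal R_{\mathrm{len}}\Dir_P(f,f)$. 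The one point requiring care is the factor $1/2$. It comes from \cref{eq:flow-variance-identity} and matches the $2Q(e)$ in the denominator of $\mathcal R_{\mathrm{len}}$, because the Dirichlet form is written with each undirected edge counted once.

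For the second inequality, bound $|\gamma|\leq\ell$ for every path with $\theta_{I,F}(\gamma)>0$; paths of zero weight contribute nothing. Pulling $\ell$ out leaves $\ell L(e)/(2Q(e))\leq\ell\rho$ for each $e$, so $\mathcal R_{\mathrm{len}}\leq\ell\rho$. Finally, the Poincaré characterization \cref{eq:poincare-characterization} says $\trel(P)$ is the smallest constant $C$ with $\Var[\pi]{f}\leq C\,\Dir_P(f,f)$ for all $f$. Since $\mathcal R_{\mathrm{len}}$ is such a constant, $\trel(P)\leq\mathcal R_{\mathrm{len}}\leq\ell\rho$.
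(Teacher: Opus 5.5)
Your proof is correct and matches the paper's argument step for step. You apply Cauchy--Schwarz along each path to get the factor $|\gamma|$, average over $\theta_{I,F}$, substitute into the variance identity, and compare each transition's coefficient with $2Q(e)$; you then bound $|\gamma|\leq\ell$ and invoke the Poincar\'e characterization. Your remarks on the factor $1/2$, and on each transition appearing at most once in a vertex-simple path, make explicit what the paper leaves implicit.
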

\begin{proof}
For every $\gamma\in\Paths(I,F)$, write
$\gamma=(x_0,x_1,\ldots,x_\ell)$, where $x_0=I$, $x_\ell=F$,
and $\ell=|\gamma|$. Then telescoping and Cauchy--Schwarz give
\begin{equation}
\begin{aligned}
  (f(I)-f(F))^2
  &=
  \left(
    \sum_{i=0}^{\ell-1}
    \bigl(f(x_i)-f(x_{i+1})\bigr)
  \right)^2 \\
  &\leq
  \left(\sum_{i=0}^{\ell-1}1\right)
  \left(
    \sum_{i=0}^{\ell-1}
    \bigl(f(x_i)-f(x_{i+1})\bigr)^2
  \right) \\
  &=
  |\gamma|
  \sum_{e=\{x,y\}\in\gamma}
  (f(x)-f(y))^2.
\end{aligned}
\label{eq:classical-path-bound}
\end{equation}
Multiplying \cref{eq:classical-path-bound} by
$\pi(I)\pi(F)\theta_{I,F}(\gamma)/2$, summing over all ordered pairs
$(I,F)$ and all $\gamma\in\Paths(I,F)$, and using
$\sum_{\gamma\in\Paths(I,F)}\theta_{I,F}(\gamma)=1$, we obtain
\begin{equation*}
\begin{aligned}
  \Var[\pi]{f}
  &=\frac12\sum_{I,F\in\Omega}\pi(I)\pi(F)\E[\gamma\sim\theta_{I,F}]{(f(I)-f(F))^2}\\
  &\leq
  \frac12
  \sum_{I,F\in\Omega}
  \pi(I)\pi(F)
  \E[\gamma\sim\theta_{I,F}]{|\gamma|\sum_{e=\{x,y\}\in\gamma}
  (f(x)-f(y))^2}\\
  &=
  \sum_{e=\{x,y\}\in\mathcal T}
  \left[
    \frac12
    \sum_{I,F\in\Omega}
    \pi(I)\pi(F)
    \sum_{\substack{\gamma\in\Paths(I,F)\\e\in\gamma}}
    \theta_{I,F}(\gamma)|\gamma|
  \right]
  (f(x)-f(y))^2
  \\
  &\leq
  \mathcal R_{\mathrm{len}}
  \sum_{e=\{x,y\}\in\mathcal T}
  Q(e)(f(x)-f(y))^2
  =
  \mathcal R_{\mathrm{len}}\Dir_P(f,f).
\end{aligned}
\end{equation*}
Since $|\gamma|\leq\ell$,
$\mathcal R_{\mathrm{len}}\leq\ell\rho$.
The relaxation-time bound follows from the Poincar\'e
characterization.
\end{proof}
Thus the classical comparison has two components:
the maximum amount of flow using a single transition, measured by $\rho$,
and the lengths of the paths carrying that flow.
For matching chains, natural paths may have length $\Theta(n)$, leaving an additional factor of $n$ even after the transition loads have been bounded.

\subsection{An energy-based analysis of multicommodity flow}
\label{subsec:energy-flows}
In this subsection, we adopt the probabilistic view of routings,
regarding each routing as a distribution over paths.
Thus, for each transition $e$,
the total weight assigned to paths containing $e$ is exactly the probability that a path sampled from $\theta_{I,F}$ uses $e$, i.e.,
\begin{equation*}
  \sum_{\substack{\gamma\in\Paths(I,F)\\e\in\gamma}}
  \theta_{I,F}(\gamma)
  =
  \Pr[\gamma\sim\theta_{I,F}]{e\in\gamma}.
\end{equation*}
The \emph{transition support} of a routing $\theta$ is the set of transitions used with positive probability:
\begin{equation*}
  \supp(\theta)
  :=
  \left\{
    e\in\mathcal T:
    \Pr[\gamma\sim\theta]{e\in\gamma}>0
  \right\}.
\end{equation*}
The key quantity here is the \emph{energy} of a routing that measures how its flow is spread over transitions.
\begin{definition}
\label{def:routing-energy}
The \emph{energy} of a routing $\theta$ is the sum of its
squared transition-usage probabilities:
\begin{equation*}
  \En(\theta):=\sum_{e\in\mathcal T}
    \Pr[\gamma\sim\theta]{e\in\gamma}^2.
\end{equation*}
\end{definition}
To compare routing energy with the path-length cost in the classical bound,
observe that the energy is at most the expected path length.
Indeed, since probabilities lie in $[0,1]$ and the paths have no
repeated vertices,
\begin{equation*}
  \En(\theta)
  \leq
  \sum_e\Pr[\gamma\sim\theta]{e\in\gamma}
  =
  \E[\gamma\sim\theta]{|\gamma|}
  \leq
  \max_{\theta(\gamma)>0}|\gamma|.
\end{equation*}
Thus, for a routing supported on a single path, the energy equals the
length of that path; when the flow is distributed among multiple paths,
however, the energy can be substantially smaller.
Now define the \emph{support congestion} as
\begin{equation*}
  \rho_{\mathrm{supp}}
  :=\max_{e\in\mathcal T}\frac1{2Q(e)}
    \sum_{\substack{I,F\in\Omega\\e\in\supp(\theta_{I,F})}}
    \pi(I)\pi(F).
\end{equation*}
Unlike the congestion $\rho$ in \cref{eq:classical-congestion},
$\rho_{\mathrm{supp}}$ charges the full demand of each pair whose routing can use $e$, rather than weighting it by $\Pr[\gamma\sim\theta_{I,F}]{e\in\gamma}$.
Both $\rho$ and $\rho_{\mathrm{supp}}$ measure congestion alone;
$\mathcal R_{\mathrm{len}}$ also includes the path-length cost.
Fix a routing $\theta_{I,F}$ for every ordered pair $(I,F)$.
Let $\En_{\max}:=\max_{I,F}\En(\theta_{I,F})$.
We have the following analogue of \cref{lem:length-weighted-flow}.
\begin{lemma}
\label{lem:energy-flow}
For every $f:\Omega\to\mathbb R$,
\begin{equation*}
  \Var[\pi]{f}
  \leq\En_{\max}\rho_{\mathrm{supp}}\Dir_P(f,f).
\end{equation*}
Consequently,
\begin{equation*}
  \trel(P)
  \leq\En_{\max}\rho_{\mathrm{supp}}.
\end{equation*}
\end{lemma}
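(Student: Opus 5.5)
We follow the proof of \cref{lem:length-weighted-flow}, but average the telescoping identity over the routing before applying Cauchy--Schwarz. Fix an ordered pair $(I,F)$ and write $\theta=\theta_{I,F}$. Orient every transition $e=\{x,y\}$ in a fixed reference direction and set $\Delta_e f:=f(x)-f(y)$. A path $\gamma\in\Paths(I,F)$ has no repeated vertices, so it uses each transition at most once, in a well-defined direction. Telescoping then gives $f(I)-f(F)=\sum_{e\in\gamma}s_\gamma(e)\Delta_e f$, where $s_\gamma(e)\in\{\pm1\}$ records the direction in which $\gamma$ traverses $e$. Taking expectations over $\gamma\sim\theta$, which is legitimate because the left side does not depend on $\gamma$, we get
\[
f(I)-f(F)=\sum_{e\in\mathcal T}\E[\gamma\sim\theta]{s_\gamma(e)\one_{\{e\in\gamma\}}}\,\Delta_e f .
\]
The coefficient $\phi(e):=\E[\gamma\sim\theta]{s_\gamma(e)\one_{\{e\in\gamma\}}}$ is the net flow through $e$. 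It satisfies $|\phi(e)|\le\Pr[\gamma\sim\theta]{e\in\gamma}$, and it vanishes outside $\supp(\theta)$.

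Next, apply Cauchy--Schwarz over $e\in\supp(\theta)$:
\[
(f(I)-f(F))^2\le\Bigl(\sum_{e}\phi(e)^2\Bigr)\Bigl(\sum_{e\in\supp(\theta)}(\Delta_e f)^2\Bigr)\le\En(\theta)\sum_{e\in\supp(\theta_{I,F})}(\Delta_e f)^2 .
\]
This is the step that replaces $|\gamma|$ by the energy, since $\sum_e\phi(e)^2\le\sum_e\Pr{e\in\gamma}^2=\En(\theta)$. It is the only real idea in the proof. The point to check carefully is that the signs are handled consistently, i.e.\ that $|\phi(e)|$ is indeed bounded by the usage probability even if different paths traverse $e$ in opposite directions. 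That bound is immediate from the triangle inequality.

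Finally, multiply by $\pi(I)\pi(F)/2$ and sum over ordered pairs using \cref{eq:flow-variance-identity}. Bounding $\En(\theta_{I,F})\le\En_{\max}$ and exchanging the order of summation gives
\[
\Var[\pi]{f}\le\En_{\max}\sum_{e=\{x,y\}\in\mathcal T}\Bigl[\frac12\sum_{I,F:\,e\in\supp(\theta_{I,F})}\pi(I)\pi(F)\Bigr](f(x)-f(y))^2 .
\]
The bracket is at most $\rho_{\mathrm{supp}}Q(e)$ by the definition of $\rho_{\mathrm{supp}}$. Hence $\Var[\pi]{f}\le\En_{\max}\rho_{\mathrm{supp}}\Dir_P(f,f)$, and the relaxation-time bound follows from the Poincar\'e characterization \cref{eq:poincare-characterization}.
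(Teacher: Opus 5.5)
Your proof is correct and follows the paper's argument. You average the telescoping over the routing before applying Cauchy--Schwarz over $\supp(\theta_{I,F})$, which produces $\En(\theta_{I,F})$, and then sum against $\pi(I)\pi(F)/2$ to obtain $\rho_{\mathrm{supp}}$. The only cosmetic difference is that you keep signs and bound the net flow $|\phi(e)|$ by the usage probability, whereas the paper applies the triangle inequality along each path before averaging; both give the same bound.
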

\begin{proof}
Fix an ordered pair $(I,F)$.  For every
$\gamma\in\Paths(I,F)$, orient $\gamma$ from $I$ to $F$.
Telescoping and the triangle inequality give
\begin{equation*}
  |f(I)-f(F)|
  \leq
  \sum_{e=\{x,y\}\in\gamma}|f(x)-f(y)|.
\end{equation*}
Taking expectation over $\gamma\sim\theta_{I,F}$ before applying
Cauchy--Schwarz, we obtain
\begin{equation*}
\begin{aligned}
  |f(I)-f(F)|
  &\leq
  \E[\gamma\sim\theta_{I,F}]
  {\sum_{e=\{x,y\}\in\gamma}|f(x)-f(y)|}\\
  &=
  \sum_{e=\{x,y\}\in\supp(\theta_{I,F})}
  \Pr[\gamma\sim\theta_{I,F}]{e\in\gamma}|f(x)-f(y)|.
\end{aligned}
\end{equation*}
Cauchy--Schwarz now gives
\begin{equation}
\begin{aligned}
  (f(I)-f(F))^2
  &\leq
  \left(
    \sum_{e\in\supp(\theta_{I,F})}
    \Pr[\gamma\sim\theta_{I,F}]{e\in\gamma}^2
  \right)
  \left(
    \sum_{e=\{x,y\}\in\supp(\theta_{I,F})}
    (f(x)-f(y))^2
  \right)\\
  &=
  \En(\theta_{I,F})
  \sum_{e=\{x,y\}\in\supp(\theta_{I,F})}
  (f(x)-f(y))^2.
\end{aligned}
\label{eq:routing-energy-bound}
\end{equation}
Since $\En(\theta_{I,F})\leq\En_{\max}$, multiplying
\cref{eq:routing-energy-bound} by $\pi(I)\pi(F)/2$, summing over all
ordered pairs $(I,F)$, and exchanging the order of summation, we obtain
\begin{equation*}
\begin{aligned}
  \Var[\pi]{f}
  &\leq
  \frac{\En_{\max}}2
  \sum_{I,F\in\Omega}\pi(I)\pi(F)
  \sum_{e=\{x,y\}\in\supp(\theta_{I,F})}
  (f(x)-f(y))^2\\
  &=
  \En_{\max}
  \sum_{e=\{x,y\}\in\mathcal T}
  \left[
    \frac12
    \sum_{\substack{I,F\in\Omega\\e\in\supp(\theta_{I,F})}}
    \pi(I)\pi(F)
  \right]
  (f(x)-f(y))^2\\
  &\leq
  \En_{\max}\rho_{\mathrm{supp}}
  \sum_{e=\{x,y\}\in\mathcal T}
  Q(e)(f(x)-f(y))^2
  =
  \En_{\max}\rho_{\mathrm{supp}}\Dir_P(f,f).
\end{aligned}
\end{equation*}
The relaxation-time bound follows from
\cref{eq:poincare-characterization}.
\end{proof}
If each routing consists of one path, then
$\En_{\max}=\ell$ and $\rho_{\mathrm{supp}}=\rho$, recovering the
classical bound $\trel(P)\leq\ell\rho$.
For a fractional routing, a smaller energy yields a better bound
when the support congestion is also controlled.
In \cref{sub:routing-energy-idea}, we show how two-ended routing
achieves logarithmic energy even when every path has linear length.

\begin{remark}
The key difference between the proofs of
\cref{lem:length-weighted-flow,lem:energy-flow} is the order in which
we take expectation and apply Cauchy--Schwarz.
In the classical proof, we apply Cauchy--Schwarz separately to each
path, obtaining the path-length factor $|\gamma|$ in
\cref{eq:classical-path-bound}, and then average over the routing.
In the energy-based proof, we first average the sum of absolute
increments along the path. This gives a sum over transitions whose
coefficients are the usage probabilities
$\Pr[\gamma\sim\theta_{I,F}]{e\in\gamma}$.
Applying Cauchy--Schwarz to this averaged sum therefore yields
the routing-energy factor $\En(\theta_{I,F})$ as in \cref{eq:routing-energy-bound}.
Thus, taking expectation first allows the bound to capture how
the routing spreads its flow over transitions.
\end{remark}

\subsection{Coupled flows}
\label{subsec:coupled-flows}

In the preceding two subsections, the demand between $I$ and $F$ is
determined by two independent samples from $\pi$.
We now allow the endpoints to have different distributions and to be
dependent.  This is the setting of coupling-based canonical paths and
transport flows~\cite{CCFV2025,CFJ25,CVY26}.

Let $\mu$ and $\nu$ be probability distributions on $\Omega$.
A \emph{coupling} of $\mu$ and $\nu$ is a probability distribution
$\kappa$ on $\Omega\times\Omega$ such that
\[
  \sum_{F\in\Omega}\kappa(I,F)=\mu(I),
  \qquad
  \sum_{I\in\Omega}\kappa(I,F)=\nu(F).
\]
Thus, if $(I,F)\sim\kappa$, then $I\sim\mu$ and $F\sim\nu$,
but $I$ and $F$ need not be independent.
We regard $\kappa(I,F)$ as the amount of flow to be sent from $I$ to $F$.
Choosing the coupling allows us to place this demand on pairs that
are easier to connect in the transition graph.
The resulting flow will bound
\[
  \E[(I,F)\sim\kappa]{(f(I)-f(F))^2}
  =\sum_{I,F\in\Omega}\kappa(I,F)(f(I)-f(F))^2.
\]
In particular, Jensen's inequality gives
\[
  \bigl(\E[\mu]{f}-\E[\nu]{f}\bigr)^2
  =\bigl(\E[(I,F)\sim\kappa]{f(I)-f(F)}\bigr)^2
  \leq\E[(I,F)\sim\kappa]{(f(I)-f(F))^2},
\]
so a flow comparison also controls the difference between the two means.

Fix a routing $\theta_{I,F}$ for every pair with $\kappa(I,F)>0$.
As in \cref{sec:classical-flows}, define the congestion and the
length-weighted congestion by
\begin{align*}
  \rho(\kappa)
  &:=\max_{e\in\mathcal T}\frac1{Q(e)}
    \sum_{I,F\in\Omega}\kappa(I,F)
    \Pr[\gamma\sim\theta_{I,F}]{e\in\gamma},\\
  \mathcal R_{\mathrm{len}}(\kappa)
  &:=\max_{e\in\mathcal T}\frac1{Q(e)}
    \sum_{I,F\in\Omega}\kappa(I,F)
    \sum_{\substack{\gamma\in\Paths(I,F)\\e\in\gamma}}
    \theta_{I,F}(\gamma)|\gamma|.
\end{align*}
Here and below, pairs with zero demand are omitted from the sums.
There is no factor $1/2$ in these definitions because we are bounding
the coupled squared difference itself.
Let $\ell$ denote the maximum length of a path receiving positive flow.

\begin{lemma}
\label{lem:coupled-length-flow}
For every $f:\Omega\to\mathbb R$,
\[
  \E[(I,F)\sim\kappa]{(f(I)-f(F))^2}
  \leq\mathcal R_{\mathrm{len}}(\kappa)\Dir_P(f,f)
  \leq\ell\rho(\kappa)\Dir_P(f,f).
\]
\end{lemma}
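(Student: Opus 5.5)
The plan is to follow the proof of \cref{lem:length-weighted-flow} line by line, with the demand $\pi(I)\pi(F)/2$ replaced by $\kappa(I,F)$.

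Fix a pair $(I,F)$ with $\kappa(I,F)>0$ and a path $\gamma=(x_0,\ldots,x_\ell)\in\Paths(I,F)$. Telescoping and Cauchy--Schwarz give, exactly as in \cref{eq:classical-path-bound},
\[
(f(I)-f(F))^2\leq|\gamma|\sum_{e=\{x,y\}\in\gamma}(f(x)-f(y))^2 .
\]
The bound does not depend on the orientation of $\gamma$, since $(f(x)-f(y))^2$ is symmetric in $x$ and $y$. Because $\theta_{I,F}$ is a probability distribution on $\Paths(I,F)$, the left-hand side equals its own average over $\gamma\sim\theta_{I,F}$. Averaging the inequality therefore gives
\[
(f(I)-f(F))^2\leq\sum_{\gamma}\theta_{I,F}(\gamma)\,|\gamma|\sum_{e\in\gamma}(f(x)-f(y))^2 .
\]

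Next I multiply by $\kappa(I,F)$, sum over all pairs with positive demand, and exchange the order of summation to collect the coefficient of each transition $e=\{x,y\}\in\mathcal T$. That coefficient is
\[
\sum_{I,F}\kappa(I,F)\sum_{\gamma\ni e}\theta_{I,F}(\gamma)|\gamma| ,
\]
and by the definition of $\mathcal R_{\mathrm{len}}(\kappa)$ it is at most $\mathcal R_{\mathrm{len}}(\kappa)Q(e)$. Summing $Q(e)(f(x)-f(y))^2$ over $e\in\mathcal T$ then gives $\mathcal R_{\mathrm{len}}(\kappa)\Dir_P(f,f)$, which is the first inequality. There is no factor $1/2$ to track: the left-hand side is the plain expectation over $\kappa$, and the definitions of $\rho(\kappa)$ and $\mathcal R_{\mathrm{len}}(\kappa)$ carry no $1/2$ either.

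For the second inequality, every path with $\theta_{I,F}(\gamma)>0$ and $\kappa(I,F)>0$ satisfies $|\gamma|\leq\ell$. So the inner sum defining $\mathcal R_{\mathrm{len}}(\kappa)$ is at most $\ell\Pr[\gamma\sim\theta_{I,F}]{e\in\gamma}$. Taking the maximum over $e$ gives $\mathcal R_{\mathrm{len}}(\kappa)\leq\ell\rho(\kappa)$.

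There is no real obstacle here. The only points to check are that pairs with zero demand, and paths with zero routing weight, are excluded consistently from every sum. Everything else is routine bookkeeping identical to \cref{lem:length-weighted-flow}.
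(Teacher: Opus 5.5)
Your proposal is correct and is essentially the paper's proof. You multiply the per-path bound \cref{eq:classical-path-bound} by $\kappa(I,F)\theta_{I,F}(\gamma)$, exchange the order of summation so that each transition's coefficient is at most $\mathcal R_{\mathrm{len}}(\kappa)Q(e)$, and then use $|\gamma|\leq\ell$ to obtain $\mathcal R_{\mathrm{len}}(\kappa)\leq\ell\rho(\kappa)$.
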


\begin{proof}
Multiplying \cref{eq:classical-path-bound} by
$\kappa(I,F)\theta_{I,F}(\gamma)$, summing over all endpoint pairs
and paths, and exchanging the order of summation, we obtain
\begin{align*}
  \E[(I,F)\sim\kappa]{(f(I)-f(F))^2}
  &\leq\sum_{I,F\in\Omega}\kappa(I,F)
    \E[\gamma\sim\theta_{I,F}]
    {|\gamma|\sum_{e=\{x,y\}\in\gamma}(f(x)-f(y))^2}\\
  &=\sum_{e=\{x,y\}\in\mathcal T}
    \left[
      \sum_{I,F\in\Omega}\kappa(I,F)
      \sum_{\substack{\gamma\in\Paths(I,F)\\e\in\gamma}}
      \theta_{I,F}(\gamma)|\gamma|
    \right](f(x)-f(y))^2\\
  &\leq\mathcal R_{\mathrm{len}}(\kappa)
    \sum_{e=\{x,y\}\in\mathcal T}Q(e)(f(x)-f(y))^2\\
  &=\mathcal R_{\mathrm{len}}(\kappa)\Dir_P(f,f).
\end{align*}
Since $|\gamma|\leq\ell$, we also have
$\mathcal R_{\mathrm{len}}(\kappa)\leq\ell\rho(\kappa)$.
\end{proof}

In~\cite[Lemma~10]{CVY26}, coupled comparisons between blocks of a
partition are combined to obtain a restricted Poincar\'e inequality.
For our occupation estimates, we will use the individual comparisons
between a hole pattern and the perfect matchings.

Fix $u\in V_1$ and $v\in V_2$, and write
\[
  \mu_{u,v}:=\pi^{\mathrm B}_{\lambda,w}
    (\,\cdot\mid\mathcal N(u,v)),
  \qquad
  \mu_\emptyset:=\pi^{\mathrm B}_{\lambda,w}
    (\,\cdot\mid\mathcal P).
\]
Independently sample $I\sim\mu_{u,v}$ and $F\sim\mu_\emptyset$.
Their symmetric difference $I\oplus F$ consists of one alternating
$u$--$v$ augmenting path $\mathcal A$ and a collection of alternating
cycles.  Define $J:=I\oplus\mathcal A$, which is a perfect matching.

\begin{restatable}[\cite{CVY26}, Lemma 13]{lemma}{augmentingpathcoupling}
\label{lem:augmenting-path-coupling}
The pair $(I,J)$ is a coupling of $\mu_{u,v}$ and $\mu_\emptyset$.
\end{restatable}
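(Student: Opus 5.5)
The first marginal is immediate, since $I\sim\mu_{u,v}$ by construction. The work is to show $J\sim\mu_\emptyset$. Because $I$ and $F$ are independent, I would compute
\[
\Pr{J=M}=\sum_{(I,F):\,I\oplus\mathcal A(I,F)=M}\mu_{u,v}(I)\mu_\emptyset(F)
\]
for each fixed $M\in\mathcal P$, and show that it equals $\lambda(M)/\lambda(\mathcal P)$.

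The key is a bijection argument with weights. Fix $M$ and consider the set of pairs $(I,F)$ mapping to $M$. Each such pair determines its $u$--$v$ alternating path $\mathcal A$, which alternates between edges of $F$ and $I$; its $F$-edges are exactly the edges of $M$ on $\mathcal A$. Off $\mathcal A$, $I$ coincides with $M$. So the pair is encoded by the path $\mathcal A$, viewed as an $M$-alternating path from $u$ to $v$ that starts and ends with $M$-edges, together with $F$ restricted off $\mathcal A$.

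Conversely, given $M$ and an $M$-alternating $u$--$v$ path $\mathcal A$, the near-perfect matching is forced to be $I=M\oplus\mathcal A$. The perfect matching $F$ must contain $M\cap\mathcal A$ and must match the vertices off $\mathcal A$ arbitrarily. The consistency requirement, that the augmenting component of $I\oplus F$ is exactly $\mathcal A$, holds automatically, because the vertices of $\mathcal A$ are saturated by $F\cap\mathcal A$ and the rest of $F$ avoids them.

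Next I would match up the weights. Write $F=(M\cap\mathcal A)\cup F'$ with $F'$ a perfect matching of $V\setminus V(\mathcal A)$, and $I=(\mathcal A\setminus M)\cup (M\setminus\mathcal A)$. Then
\[
\lambda(I)\lambda(F)=\lambda(\mathcal A\setminus M)\,\lambda(M)\,\lambda(F'),
\]
since $\lambda(M\setminus\mathcal A)\lambda(M\cap\mathcal A)=\lambda(M)$. Summing over $F'$ and $\mathcal A$ gives
\[
\Pr{J=M}=\frac{\lambda(M)}{\lambda(\mathcal N(u,v))\lambda(\mathcal P)}\sum_{\mathcal A}\lambda(\mathcal A\setminus M)\,\lambda(\mathcal P(V\setminus V(\mathcal A))).
\]
The conditional distributions here are proportional to $\lambda$, because hole weights and boosting are constant on each hole pattern.

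It remains to show that $\Phi(M):=\sum_{\mathcal A}\lambda(\mathcal A\setminus M)\lambda(\mathcal P(V\setminus V(\mathcal A)))$ equals $\lambda(\mathcal N(u,v))$ for every $M$. For this I would apply the same decomposition in reverse. Given $M$, a near-perfect $N\in\mathcal N(u,v)$ is the same thing as a choice of the alternating path in $N\oplus M$ through $u,v$ together with $N$ restricted to the complement, which is a perfect matching of $V\setminus V(\mathcal A)$. The weight of such an $N$ is $\lambda(\mathcal A\setminus M)\lambda(N')$. Summing yields $\Phi(M)=\lambda(\mathcal N(u,v))$, so $\Pr{J=M}=\lambda(M)/\lambda(\mathcal P)=\mu_\emptyset(M)$.

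The main obstacle is the bookkeeping in this reverse decomposition. In $N\oplus M$, the cycles and the augmenting path are disjoint, and the complement of $\mathcal A$ carries both the cycles and the common edges. So it must be checked that the pair ($\mathcal A$, arbitrary perfect matching on the complement) is in bijection with $\mathcal N(u,v)$. This holds because the complement part of $N$ is an arbitrary perfect matching of $V\setminus V(\mathcal A)$, and no constraint relative to $M$ is imposed off $\mathcal A$.
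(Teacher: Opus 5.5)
Your argument is correct. The paper does not prove this lemma itself; it cites \cite{CVY26}. The standard proof, in the style of the paper's own switching arguments (e.g.\ \cref{lem:matching-switching}), uses a single weight-preserving bijection. The map $(I,F)\mapsto(I\oplus\mathcal A,\,F\oplus\mathcal A)$ sends $\mathcal N(u,v)\times\mathcal P$ onto $\mathcal P\times\mathcal N(u,v)$. It is its own inverse, since the image pair has the same symmetric difference $I\oplus F$ and hence the same augmenting path. It also preserves the multiset $I\uplus F$, so $\lambda(I)\lambda(F)=\lambda(J)\lambda(F\oplus\mathcal A)$. The fiber over $J=M$ is then $\{M\}\times\mathcal N(u,v)$, which gives $\lambda(M)Z_{u,v}/(Z_{u,v}Z_\emptyset)=\mu_\emptyset(M)$ in one line.

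Your two encodings compose to exactly this map. The near-perfect matching you attach to $(\mathcal A,F')$ in the reverse decomposition is
\[
(\mathcal A\setminus M)\cup F'=(I\cap\mathcal A)\cup(F\setminus\mathcal A)=F\oplus\mathcal A .
\]
So the mathematical content is the same. Factoring through $(\mathcal A,F')$ makes the structure of the fibers explicit, but costs a second decomposition to identify $\Phi(M)$ with $Z_{u,v}$. The direct switch is shorter and gives slightly more: $J$ and $F\oplus\mathcal A$ are independent, with laws $\mu_\emptyset$ and $\mu_{u,v}$.
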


Thus, we can replace the independent endpoints $(I,F)$ by the coupled
endpoints $(I,J)$ without changing either marginal distribution.
The new endpoints differ only along $\mathcal A$, so the alternating
cycles need not be processed.  The augmenting path may still require
$\Theta(n)$ moves, however.  We next combine this choice of endpoints
with the energy bound from \cref{subsec:energy-flows}.

\subsection{Energy bounds for weighted flows}
\label{subsec:energy-comparison}

The energy argument does not require independent endpoints, or even
that the endpoint demands sum to one.  In this subsection, we allow an
arbitrary nonnegative demand $\kappa(I,J)$ for each ordered pair $(I,J)$.
We also allow this demand to be split among several routings with the
same endpoints.  This will let us retain the independent matching $F$
in the preceding coupling when we count the flow through a transition.

\begin{definition}
\label{def:routing-family}
Let $\kappa:\Omega\times\Omega\to\mathbb R_{\geq0}$ be an
\emph{endpoint demand}, and let $\mathcal A$ be a finite set.
A \emph{weighted routing family} for $\kappa$ consists of weights
$\widehat\kappa(I,a,J)\geq0$ satisfying
\begin{equation}
  \sum_{a\in\mathcal A}\widehat\kappa(I,a,J)=\kappa(I,J),
  \label{eq:routing-family-demand}
\end{equation}
together with a routing $\theta_{I,a,J}$ from $I$ to $J$ for each
triple $(I,a,J)$ with positive weight.
We call such a triple a \emph{routing occurrence}; the value $a$
records the auxiliary information used to specify the routing.
The weights and routings are fixed independently of the function $f$.
\end{definition}

Under this definition, the occurrence $(I,a,J)$ sends
$\widehat\kappa(I,a,J)\theta_{I,a,J}(\gamma)$ units of flow along
each path $\gamma$.
Summing over $a$ gives the prescribed total demand from $I$ to $J$.
When $\kappa$ is a probability coupling, $\widehat\kappa$ is a joint
distribution whose endpoint marginal is $\kappa$.
If the routing depends only on its endpoints, we may sum out $a$;
keeping it can still make the load calculation easier.

Fix a weighted routing family, and put
\[
  \En_{\max}:=\max_{\widehat\kappa(I,a,J)>0}\En(\theta_{I,a,J}).
\]
As in \cref{subsec:energy-flows}, define the \emph{support congestion}
by charging the full weight of every occurrence whose routing can use
a transition:
\begin{equation}
  \rho_{\mathrm{supp}}
  :=\max_{e\in\mathcal T}\frac1{Q(e)}
    \sum_{\substack{I,a,J:\\e\in\supp(\theta_{I,a,J})}}
    \widehat\kappa(I,a,J).
  \label{eq:occurrence-congestion}
\end{equation}
To include the routing cost as well, define the
\emph{energy-weighted congestion} by
\begin{equation*}
  \mathcal R_{\En}
  :=\max_{e\in\mathcal T}\frac1{Q(e)}
    \sum_{\substack{I,a,J:\\e\in\supp(\theta_{I,a,J})}}
    \widehat\kappa(I,a,J)\En(\theta_{I,a,J}).
\end{equation*}
All sums are over positive-weight occurrences, and all three quantities
are defined to be zero when the demand is identically zero.
Since each routing has energy at most $\En_{\max}$,
\begin{equation}
  \mathcal R_{\En}\leq\En_{\max}\rho_{\mathrm{supp}}.
  \label{eq:energy-factorization}
\end{equation}
This is the analogue of
$\mathcal R_{\mathrm{len}}\leq\ell\rho$ for the energy bound.

\begin{lemma}
\label{lem:energy-MCF}
For every weighted routing family with endpoint demand $\kappa$ and
every $f:\Omega\to\mathbb R$,
\begin{equation}
  \sum_{I,J\in\Omega}\kappa(I,J)(f(I)-f(J))^2
  \leq\mathcal R_{\En}\Dir_P(f,f)
  \leq\En_{\max}\rho_{\mathrm{supp}}\Dir_P(f,f).
  \label{eq:coupled-energy-flow-bound}
\end{equation}
\end{lemma}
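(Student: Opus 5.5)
The argument mirrors the proof of \cref{lem:energy-flow}, carried out occurrence by occurrence and weighted by $\widehat\kappa$ rather than by $\pi(I)\pi(F)/2$. Fix a positive-weight occurrence $(I,a,J)$ with routing $\theta=\theta_{I,a,J}$. For each $\gamma\in\Paths(I,J)$, telescoping along $\gamma$ and the triangle inequality bound $|f(I)-f(J)|$ by the sum of $|f(x)-f(y)|$ over the transitions of $\gamma$. Since $\theta$ is a probability distribution and the left side does not depend on $\gamma$, we can average over $\gamma\sim\theta$ first. Paths in $\Paths(I,J)$ have no repeated vertices, so each transition appears at most once in a path. The averaged bound is therefore $\sum_{e\in\supp(\theta)}\Pr[\gamma\sim\theta]{e\in\gamma}\,|f(x)-f(y)|$. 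Cauchy--Schwarz over $\supp(\theta)$ then gives exactly \cref{eq:routing-energy-bound}:
\[
(f(I)-f(J))^2\leq \En(\theta_{I,a,J})\sum_{e=\{x,y\}\in\supp(\theta_{I,a,J})}(f(x)-f(y))^2 .
\]

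Next, multiply this inequality by $\widehat\kappa(I,a,J)$ and sum over all positive-weight occurrences. By \cref{eq:routing-family-demand}, the left side becomes $\sum_{I,J}\kappa(I,J)(f(I)-f(J))^2$. Pairs with $\kappa(I,J)=0$ have no positive-weight occurrences and contribute nothing. On the right, exchange the order of summation to get
\[
\sum_{e=\{x,y\}\in\mathcal T}\Bigl[\sum_{(I,a,J):\,e\in\supp(\theta_{I,a,J})}\widehat\kappa(I,a,J)\En(\theta_{I,a,J})\Bigr](f(x)-f(y))^2 .
\]
By the definition of $\mathcal R_{\En}$, the bracket is at most $\mathcal R_{\En}Q(e)$ for every $e$. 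The resulting sum is $\mathcal R_{\En}\sum_e Q(e)(f(x)-f(y))^2=\mathcal R_{\En}\Dir_P(f,f)$, using the edge-sum form of the Dirichlet form recorded after \cref{eq:flow-capacity}. The second inequality in \cref{eq:coupled-energy-flow-bound} is \cref{eq:energy-factorization}.

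No step is a real obstacle; the care needed is in bookkeeping. There is no factor $1/2$ here, because each ordered demand is charged once, consistent with the definition of $\rho_{\mathrm{supp}}$ in \cref{eq:occurrence-congestion}. The sums are finite since $\Omega$ and $\mathcal A$ are finite. The empty path when $I=J$ contributes zero to both sides. The weights and routings are fixed independently of $f$, which is what makes the exchange of summation legitimate for every $f$. If the demand is identically zero, both sides vanish, in line with the stated convention.
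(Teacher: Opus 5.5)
Your proof is correct and follows the paper's argument step for step: apply \cref{eq:routing-energy-bound} to each occurrence, weight by $\widehat\kappa(I,a,J)$, sum using \cref{eq:routing-family-demand}, exchange the order of summation so that each bracket is at most $\mathcal R_{\En}Q(e)$, and finish with \cref{eq:energy-factorization}. One small correction to your closing remark: exchanging finite sums is always legitimate, and independence of the routing family from $f$ matters only because it makes $\mathcal R_{\En}$ a constant that does not depend on $f$.
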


\begin{proof}
For each occurrence $(I,a,J)$, \cref{eq:routing-energy-bound} gives
\[
  (f(I)-f(J))^2
  \leq\En(\theta_{I,a,J})
    \sum_{e=\{x,y\}\in\supp(\theta_{I,a,J})}(f(x)-f(y))^2.
\]
Multiplying by $\widehat\kappa(I,a,J)$, summing over occurrences,
and using \cref{eq:routing-family-demand}, we obtain
\begin{align*}
  \sum_{I,J\in\Omega}\kappa(I,J)(f(I)-f(J))^2
  &=\sum_{I,a,J}\widehat\kappa(I,a,J)(f(I)-f(J))^2\\
  &\leq\sum_{I,a,J}\widehat\kappa(I,a,J)\En(\theta_{I,a,J})
    \sum_{e=\{x,y\}\in\supp(\theta_{I,a,J})}(f(x)-f(y))^2\\
  &=\sum_{e=\{x,y\}\in\mathcal T}
    \left[
      \sum_{\substack{I,a,J:\\e\in\supp(\theta_{I,a,J})}}
      \widehat\kappa(I,a,J)\En(\theta_{I,a,J})
    \right](f(x)-f(y))^2\\
  &\leq\mathcal R_{\En}
    \sum_{e=\{x,y\}\in\mathcal T}Q(e)(f(x)-f(y))^2\\
  &=\mathcal R_{\En}\Dir_P(f,f).
\end{align*}
The second inequality follows from \cref{eq:energy-factorization}.
\end{proof}

When $\kappa$ is a coupling, the left-hand side is
$\E[(I,J)\sim\kappa]{(f(I)-f(J))^2}$.
Thus the lemma gives the energy analogue of
\cref{lem:coupled-length-flow}.
Taking $\kappa(I,J)=\pi(I)\pi(J)/2$ and one routing per pair
instead makes the left-hand side $\Var[\pi]{f}$ and recovers
\cref{lem:energy-flow}, with the same support congestion as in
\cref{subsec:energy-flows}.

\subsection{The flow bounds used in our algorithm}
\label{subsec:flow-bounds-overview}

We now apply the weighted flow bound to the two matching chains.
Fix a hole pattern $(u,v)$, and let $\kappa_{u,v}$ be the coupling in
\cref{lem:augmenting-path-coupling}.
Its first marginal is the conditional distribution on
$\mathcal N(u,v)$, and its second marginal is the conditional
distribution on $\mathcal P$.
To compare these distributions, we will construct a routing family
for $\kappa_{u,v}$ and bound its energy and support congestion.

For the boosted JSV chain, we have the following bounds, proved in
\cref{subsec:JSV-combining-bounds}.

\begin{restatable}{lemma}{jsvflowlemma}
\label{lem:JSV-flow-summary}

Consider the boosted JSV chain with rough hole weights.
For every hole pattern $(u,v)$, the coupling $\kappa_{u,v}$ admits
a weighted routing family satisfying
\begin{equation*}
  \En_{\max}=O(\log n),
  \qquad
  \rho_{\mathrm{supp}}=O(n^3),
  \qquad
  \mathcal R_{\En}=O(n^3\log n).
\end{equation*}
\end{restatable}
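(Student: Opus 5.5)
We construct the routing family explicitly from the coupling of \cref{lem:augmenting-path-coupling}, taking the auxiliary index to be the independent perfect matching: $a=F$ and $\widehat\kappa(I,F,J)=\mu_{u,v}(I)\mu_\emptyset(F)$ whenever $J=I\oplus\mathcal A(I,F)$. Write the augmenting path as $\mathcal A=(u=x_0,y_1,x_1,\dots,y_k=v)$, where the $F$-edges are $(x_{i-1},y_i)$ and the $I$-edges are $(x_i,y_i)$. Starting from $I\in\mathcal N(u,v)$, a left-end slide fixes the right hole: it replaces $(x_i,y_i)$ by $(x_{i-1},y_i)$ and advances the left hole along $\mathcal A$. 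Symmetrically, a right-end slide consumes $\mathcal A$ from $v$. After $s$ left slides and $t$ right slides the current state is determined by $(s,t)$, and its holes lie at the current ends of the unprocessed segment. When $s+t=k-1$ a single $F$-edge joins the two holes, and an add move yields $J$.

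The routing $\theta_{I,F,J}$ is the two-ended random process described in the introduction. At stage $m=s+t$, choose the next slide so that $(s,t)$ is uniform over the $m+1$ splits: from split $s$, move to $s+1$ with probability $(m+1-s)/(m+2)$ and stay (take a right slide) otherwise. One checks this preserves uniformity. Each stage-$m$ transition then has usage probability at most $1/(m+1)$, and there are at most $2(m+1)$ such transitions, so stage $m$ contributes $O(1/(m+1))$ to the energy. The final add moves number $k$, each with probability $1/k$, contributing $O(1)$. Summing over stages gives $\En_{\max}=O(\log n)$.

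For congestion, fix a transition $e=(M,M')$ of the boosted JSV chain. We bound $\sum\widehat\kappa(I,F,J)$ over occurrences with $e\in\supp(\theta)$ by the usual injective encoding. From $M$ together with an encoding $\eta=I\oplus F\oplus M$ (corrected as needed), we recover $(I,F)$ given $O(n)$ bits of guessing, namely the positions of the current hole ends along $\mathcal A$ and which side each segment belongs to. Since $e$ fixes the current holes, we need at most $O(n^2)$ choices of the original hole split. We then verify the weight inequality $\omega(I)\omega(F)\lesssim\omega(M)\omega(\eta)\cdot(\text{hole-weight ratio})$, using $\lambda(I)\lambda(F)=\lambda(M)\lambda(\eta)$ and roughness, which gives $w(x,y)\approx\lambda(\mathcal P)/\lambda(\mathcal N(x,y))$. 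Here $\eta$ may be near-perfect or have two extra holes, as in JSV, and the ideal-weight bound $\lambda(\mathcal N(x,y))w^*(x,y)=\lambda(\mathcal P)$ controls summing over $\eta$.

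Normalization goes as follows. Here $\widehat\kappa=\pi(I)\pi(F)/(\pi(\mathcal N(u,v))\pi(\mathcal P))$ with $\pi(\mathcal N(u,v))=\Theta(n^{-2})$, and $Q(e)\approx\pi(M)/(4n)$ up to the Metropolis ratio. The factors are an $n^2$ from the conditioning, an $n$ from $1/P(e)$, and an $O(1)$ from the $\eta$-sums, provided the guessing cost is charged against the sum over $(u,v)$-independent choices. These should combine to give $\rho_{\mathrm{supp}}=O(n^3)$. Then $\mathcal R_{\En}=O(n^3\log n)$ follows from \cref{eq:energy-factorization}.

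The main obstacle is this support-congestion count. Support congestion charges full demand even for pairs that use $e$ with tiny probability, so the encoding must avoid losing an extra factor of $n$ from the unknown split point. I would first try to absorb that factor by observing that the split at $e$ is pinned by the current hole positions. The remaining freedom is then only the original hole pair, and that freedom is paid for by the $1/\pi(\mathcal N(u,v))$ normalization.
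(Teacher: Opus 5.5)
Your routing and energy bound are the paper's. The one exception is that your interleaving rule is reversed. From split $s$ at level $m$ the left end must advance with probability $(s+1)/(m+2)$, not $(m+1-s)/(m+2)$. Your rule does not preserve uniformity: at level $2$ it gives $(1/6,2/3,1/6)$. It concentrates the split in a window of width about $\sqrt m$, which makes the energy of order $\sqrt n$. That is a slip.

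The genuine gap is the support-congestion bound $\rho_{\mathrm{supp}}=O(n^3)$. This is the real content of the lemma, and your sketch does not prove it. Two of your proposed ingredients fail.
\begin{itemize}
\item There are no source holes to guess, since the lemma fixes $(u,v)$. Moreover, the $n^3$ budget is already used up: $n^2$ comes from conditioning on $\mathcal N(u,v)$ and $n$ from the $\frac1{4n}$ in $Q_{\mathrm J}$. So any polynomial guessing factor, such as your $O(n^2)$, is fatal.
\item In a two-ended route the complement of the current state always has four holes. The identity $\lambda(\mathcal N(x,y))w^*(x,y)=\lambda(\mathcal P)$ says nothing about four-hole partition functions.
\end{itemize}

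Concretely, take source holes $(r,s)$ and a slide $N\to N'=N\setminus\{(x,a)\}\cup\{(u,a)\}$ with $N\in\mathcal N(u,v)$. Let $(y,v)\in F$ be the frontier edge at the other end. Then $I\uplus F=N\uplus K\uplus\{(u,a),(y,v)\}$ with $K\in\mathcal N^{(2)}(r,y;s,a)$, and $\lambda(I)\lambda(F)=\lambda(N)\lambda(K)\lambda(u,a)\lambda(y,v)$. Decoding is unique: $y$ is the hole of $K$ in $V_1$ other than $r$, and the split is recovered for free. Hence
\[
L^{r,s}_{N\to N'}\leq\frac{\lambda(N)\lambda(u,a)}{Z_{r,s}Z_\emptyset}\sum_{y}\lambda(y,v)Z^{(2)}_{r,y;s,a}.
\]
The factor $n$ you feared lives in this weighted sum over the far frontier vertex $y$, not in an unknown split point. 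Bounding it term by term via the two-hole switching inequality gives $O(Z_\emptyset/Z_{u,v})$ per $y$. That loses exactly the factor $n$ and yields only $O(n^4)$.

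The missing idea is an inequality that sums over $y$. The paper combines $Z_{r,a}Z_{y,v}\leq Z_{r,v}Z_{y,a}+Z_\emptyset Z^{(2)}_{r,y;v,a}$ with the cofactor identity $\sum_y\lambda(y,v)Z_{y,v}=Z_\emptyset$ to get $Z_{r,v}\sum_y\lambda(y,v)Z_{y,a}\leq 2Z_{r,a}Z_\emptyset$. Replacing row $r$ by the indicator of column $s$ then gives $Z_{r,v}\sum_y\lambda(y,v)Z^{(2)}_{r,y;s,a}\leq2Z_{r,s}Z_{r,a}$; when $v=s$ the sum is exactly $Z_{r,a}$. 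Retaining the $y=u$ term of the transposed inequality gives $\lambda(u,a)Z_{r,a}Z_{u,v}\leq2Z_{r,v}Z_\emptyset$. Together these bound the directed load by $4\lambda(N)/Z_{u,v}=4\omega^*(N)/Z_\emptyset$. The identity $\lambda(N)\lambda(u,a)=\lambda(N')\lambda(x,a)$ gives the same bound with $N'$, which matches the Metropolis minimum in $Q_{\mathrm J}(T)\geq m_T/(4c(1+c)n^3Z_\emptyset)$. Terminal additions are easy, because their complement lies in $\mathcal N(r,s)$.
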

\begin{corollary}
Consequently, for every $f:\Omega\to\mathbb R$,
\begin{equation*}
  \E[(I,J)\sim\kappa_{u,v}]{(f(I)-f(J))^2}
  \leq
  O(n^3\log n)
  \Dir_{P^{\mathrm{B}}_{\mathrm{JSV}}}(f,f).
\end{equation*}
\end{corollary}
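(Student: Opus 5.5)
This follows immediately from \cref{lem:energy-MCF} applied to the weighted routing family supplied by \cref{lem:JSV-flow-summary}. Fix a hole pattern $(u,v)$ and take as endpoint demand the coupling $\kappa=\kappa_{u,v}$ of \cref{lem:augmenting-path-coupling}. Since it is a probability coupling of $\mu_{u,v}$ and $\mu_\emptyset$, the left-hand side of \cref{eq:coupled-energy-flow-bound} equals $\E[(I,J)\sim\kappa_{u,v}]{(f(I)-f(J))^2}$. \Cref{lem:JSV-flow-summary} gives a weighted routing family for $\kappa_{u,v}$ with $\mathcal R_{\En}=O(n^3\log n)$, where the capacities $Q(e)=\pi(x)P(x,y)$ are those of the boosted JSV chain $P^{\mathrm B}_{\mathrm{JSV}}$ with stationary distribution $\pi^{\mathrm B}_{\lambda,w}$. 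The first inequality of \cref{lem:energy-MCF} then gives the claimed bound $O(n^3\log n)\,\Dir_{P^{\mathrm B}_{\mathrm{JSV}}}(f,f)$ for every $f$.

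Alternatively, one can use the factorized form $\mathcal R_{\En}\leq\En_{\max}\rho_{\mathrm{supp}}=O(\log n)\cdot O(n^3)$ from \cref{eq:energy-factorization}. This yields the same bound.

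Two points need care. First, the routing family in \cref{lem:JSV-flow-summary} must have been built for exactly this demand, with weights $\widehat\kappa(I,a,J)$ summing over $a$ to $\kappa_{u,v}(I,J)$ as in \cref{eq:routing-family-demand}. A natural choice is to take the auxiliary label $a$ to be the independent perfect matching $F$, so that $\widehat\kappa(I,F,J)=\mu_{u,v}(I)\mu_\emptyset(F)\one_{\{J=I\oplus\mathcal A\}}$. Second, the constant hidden in $O(\cdot)$ must be uniform over all hole patterns $(u,v)$ and all activities. Both are provided by the summary lemma, so there is no real obstacle here. The substantive work, bounding the routing energy by $O(\log n)$ via two-ended random slides and the support congestion by $O(n^3)$, lies entirely in the proof of \cref{lem:JSV-flow-summary}.
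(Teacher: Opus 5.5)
Your proposal is correct and matches the paper's proof, which is the one-line application of \cref{lem:energy-MCF} to the weighted routing family supplied by \cref{lem:JSV-flow-summary}. Your choice of auxiliary label (the independent perfect matching $F$, with weight $\mu_{u,v}(I)\mu_\emptyset(F)$) is exactly the coefficient $c_{I,F}$ that the paper uses in \cref{eq:JSV-pair-coefficient}.
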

\begin{proof}
Apply \cref{lem:energy-MCF,lem:JSV-flow-summary}.
\end{proof}

The factor $O(n^3)$ comes from comparing the support load of each
transition with its capacity.  The factor $O(\log n)$ comes from the
routing energy, even though the paths themselves may have length
$\Theta(n)$.

The HWS chain has the same conditional distributions within each
hole pattern as the boosted JSV chain.  Thus we can use the same
coupling and routing family.  The energy is unchanged; the different
transition capacities give the following support-congestion bound.

\begin{restatable}{lemma}{rsflowlemma}
\label{lem:HWS-flow-summary}
Consider the HWS chain with positive edge activities $\lambda$
and rough hole weights $w$.
For every hole pattern $(u,v)$, the coupling $\kappa_{u,v}$
from \cref{lem:augmenting-path-coupling} admits a weighted
routing family satisfying
\begin{equation*}
  \En_{\max}=O(\log n),
  \qquad
  \rho_{\mathrm{supp}}=O(n^2),
  \qquad
  \mathcal R_{\En}=O(n^2\log n).
\end{equation*}
The congestion quantities are computed using the transition
capacities of $P_{\mathrm{HWS}}(\lambda,w)$.
\end{restatable}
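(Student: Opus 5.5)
We will reuse, for the HWS chain, the weighted routing family built for \cref{lem:JSV-flow-summary}. The construction works as follows. For an occurrence $(I,F,J)$, with $I\sim\mu_{u,v}$, $F\sim\mu_\emptyset$ independent and $J=I\oplus\mathcal A$, the auxiliary datum $a=F$ is kept. The routing is the two-ended random interleaving of slides along $\mathcal A$ described in \cref{sub:routing-energy-idea}. After $t$ slides the split between the two endpoints is uniform over $t+1$ possibilities, and a final add move closes the path. The conditional laws $\pi^{\mathrm{HWS}}_{\lambda,w}(\cdot\mid\mathcal N(u,v))$ and $\pi^{\mathrm{HWS}}_{\lambda,w}(\cdot\mid\mathcal P)$ coincide with the boosted JSV ones, since both are proportional to $\lambda(M)$. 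Hence \cref{lem:augmenting-path-coupling} applies verbatim, and the family is a valid routing family for $\kappa_{u,v}$. The energy bound $\En_{\max}=O(\log n)$ depends only on the path distribution, not on the chain, so it transfers unchanged. By \cref{eq:energy-factorization}, it then remains only to show $\rho_{\mathrm{supp}}=O(n^2)$ with respect to the HWS capacities.

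For the support congestion we fix a transition $e$ and split by type. For a slide fixing the right hole, from $N\in\mathcal N(u',v')$ to $N'=N\setminus\{(x,a)\}\cup\{(u',a)\}$, the detailed-balance computation gives
\[
Q(e)=\lambda(N)\lambda(u',a)L_{v'}(a)/(2\mathcal Z^{\mathrm{HWS}}).
\]
The mirror formula holds for left-hole slides. For add/delete moves, $Q(e)=n^2\lambda(M)/(2n\mathcal Z^{\mathrm{HWS}})=n\lambda(M)/(2\mathcal Z^{\mathrm{HWS}})$. The load is $\sum\widehat\kappa(I,F,J)$ over occurrences whose support contains $e$. We bound it by the standard encoding argument. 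Given $e=(N,N')$ and the intermediate split, the pair $(I,F)$ is recovered from $N\oplus(I\cup F)$ together with a complementary matching $\eta$. The weight identity takes the form $\lambda(I)\lambda(F)\approx\lambda(N)\lambda(\eta)\times(\text{at most two correcting edge factors})$. Summing $\lambda(\eta)$ over encodings yields $\lambda(\mathcal N(\cdot,\cdot))$ or $\lambda(\mathcal P)$ for a suitable hole pair. Normalizing by $\lambda(\mathcal P)\lambda(\mathcal N(u,v))$ and using $c$-rough weights, these ratios reduce to ratios of hole weights $w^*$.

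The crux is the correcting factor attached to the slide pivot. In the JSV analysis the Metropolis capacity $\frac1{4n}\min\{1,\cdot\}$ loses a factor $n$ relative to the load, giving $O(n^3)$. For HWS the capacity carries $\lambda(u',a)L_{v'}(a)$. We must therefore show that the load through a slide with pivot column $a$ and fixed hole $v'$ is at most $O(n^2)\lambda(N)\lambda(u',a)L_{v'}(a)/\mathcal Z$. Equivalently, the sum over the possible left endpoints $z$ of the encoded near-perfect matching of $\lambda(z,a)/w(z,v')$, which arises from summing the encoding over the unknown hole location, must be controlled by $L_{v'}(a)^{-1}$. This matches the definition in \cref{eq:HWS-scores}, which appears tailored for exactly this cancellation. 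Roughness only costs a constant factor $c^{O(1)}$.

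I expect this step to be the main obstacle: carefully organizing the encoding so that the free hole indices are summed in precisely the pattern of $L_{v'}(a)$ and $R_{u'}(x)$, and checking that both interleaved slide directions and the final add/delete moves each contribute $O(n^2)$. For add/delete edges the capacity $n\lambda(M)$ and $\mathcal Z^{\mathrm{HWS}}=\Theta(n^2\lambda(\mathcal P))$ should give the bound directly. Once $\rho_{\mathrm{supp}}=O(n^2)$ is established, the bound $\mathcal R_{\En}=O(n^2\log n)$ follows from \cref{eq:energy-factorization}.
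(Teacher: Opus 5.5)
\textbf{There is a genuine gap: the slide-load estimate, which is the actual content of the lemma, is not proved, and the mechanism you sketch for it would not work.}

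Your outer reduction matches the paper's proof. You keep the family $\mathcal F_{r,s}$ (auxiliary datum $F$, coefficient $\lambda(I)\lambda(F)/(Z_{r,s}Z_\emptyset)$, two-ended routing), note that the energy bound $1+\log n$ does not depend on the chain, and reduce to $\rho_{\mathrm{supp}}=O(n^2)$. The boundary moves are as easy as you say: the complement lies in $\mathcal N(r,s)$, the load is at most $\lambda(M)/Z_\emptyset$, and the capacity is at least $\lambda(M)/(10cnZ_\emptyset)$.

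\textbf{The complement has four holes.} In the two-ended routing the complement is not a two-hole or perfect matching. Suppose $N\in\mathcal N(u,v)$ is about to insert the frontier edge $(u,a)$, and $(y,v)$ is the frontier edge at the other end. Then $I\uplus F=N\uplus K\uplus\{(u,a),(y,v)\}$ with $K\in\mathcal N^{(2)}(r,y;s,a)$. Once both ends have advanced, every way of deleting one edge at each current hole leaves four holes. So the directed load is
\[
L^{r,s}_{N\to N'}\leq\frac{\lambda(N)\lambda(u,a)}{Z_{r,s}Z_\emptyset}\sum_{y}\lambda(y,v)Z^{(2)}_{r,y;s,a}.
\]
Bounding this needs a genuinely new estimate,
\[
Z_{r,v}\sum_y\lambda(y,v)Z^{(2)}_{r,y;s,a}\leq2Z_{r,s}Z_{r,a}\qquad(v\neq s),
\]
with cofactor expansion giving $Z_{r,a}$ when $v=s$. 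The paper derives it from the switching inequality $Z_{r,a}Z_{y,v}\leq Z_{r,v}Z_{y,a}+Z_\emptyset Z^{(2)}_{r,y;v,a}$, applied to a matrix whose row $r$ is replaced by the unit vector at column $s$.

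\textbf{The match with the score is not automatic.} It is not a cancellation built into \cref{eq:HWS-scores}.
\begin{itemize}
\item The loads contain no hole weights at all. Roughness enters only through the capacity, via $L_v(a)\geq c^{-1}L^*_v(a)$ and $\mathcal Z_{\mathrm R}\leq5cn^2Z_\emptyset$.
\item The encoding sum pairs the edge at $v$ with a hole at $a$. By contrast, $L^*_v(a)^{-1}=\sum_y\lambda(y,a)Z_{y,v}/Z_\emptyset$ pairs the edge at $a$ with a hole at $v$.
\item After the four-hole step you are left with $Z_{r,a}/Z_{r,v}$. The needed bound $Z_{r,a}/Z_{r,v}\leq2L^*_v(a)$, equivalently $Z_{r,a}\sum_y\lambda(y,a)Z_{y,v}\leq2Z_{r,v}Z_\emptyset$, is again a switching-plus-cofactor inequality. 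Retaining one term of the sum only gives the opposite direction.
\item Your ``equivalently'' reduces the requirement to the identity $\sum_z\lambda(z,a)/w(z,v')=L_{v'}(a)^{-1}$. Since the capacity is proportional to $L_{v'}(a)$, the load must scale inversely with that sum. If the encoding really produced the sum as a factor in the load, the congestion would be of order $L_{v'}(a)^{-2}$.
\end{itemize}

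\textbf{The JSV load bound is not enough.} You cannot fall back on the bound $O(m_T/Z_\emptyset)$. Already for $K_{n,n}$ with all activities equal to one, the HWS slide capacity is of order $m_T/(n^3Z_\emptyset)$, so that bound gives only $O(n^3)$. This is why the paper proves the sharper directed bound $L^{r,s}_{N\to N'}\leq4\lambda(N)\lambda(u,a)L^*_v(a)/Z_\emptyset$ and compares it with the slide capacity.
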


Applying \cref{lem:energy-MCF} now gives
\[
  \E[(I,J)\sim\kappa_{u,v}]{(f(I)-f(J))^2}
  \leq O(n^2\log n)\Dir_{P_{\mathrm{HWS}}}(f,f).
\]
We prove the support-congestion bounds for both chains in
\cref{sec:JSV-flow}.  The next subsection explains the routings
and their energy bounds.

\subsection{Proof idea for bounding routing energy}
\label{sub:routing-energy-idea}

We first consider two perfect matchings that differ on one alternating
cycle.  Randomizing where we start switching the cycle gives a routing
of constant energy.  We then consider a near-perfect matching and a
perfect matching that differ on one augmenting path.  Randomizing
which end of the path to advance gives a routing of logarithmic energy.
Both constructions use moves available in the boosted JSV and HWS chains.

\begin{lemma}
\label{lem:cycle-energy-overview}
Let $I,F\in\mathcal{P}$ such that
$I\oplus F$ consists of a single alternating cycle of length $2k$.
There is a routing $\theta_{I,F}$ from $I$ to $F$
such that
\begin{equation*}
  \En(\theta_{I,F})
  \leq2.
\end{equation*}
\end{lemma}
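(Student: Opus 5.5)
The plan is to spread the work of switching the cycle over all possible starting points. Switching from a single fixed starting point would give a single path of length $k+1$ and hence energy $k+1$. Instead, I would choose the starting point uniformly at random, so that each transition is used with probability only $1/k$.

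\textbf{Setup.} Label the cycle as $u_1,v_1,u_2,v_2,\ldots,u_k,v_k$, where:
\begin{itemize}
\item $e_i=(u_i,v_i)\in I$ and $f_i=(u_{i+1},v_i)\in F$, with indices taken mod $k$;
\item $k\ge2$, since an alternating cycle between two perfect matchings has length at least $4$.
\end{itemize}

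\textbf{The path from start $s\in\{1,\ldots,k\}$.} I will define $\gamma_s$ as follows.
\begin{enumerate}
\item Delete $e_s$ from $I$. This is an add/delete move, and it lands in $\mathcal N(u_s,v_s)$.
\item Perform $k-1$ slides that fix the left hole $u_s$. At step $t=1,\ldots,k-1$ the current right hole is $v_{s+t-1}$. The move replaces $e_{s+t}=(u_{s+t},v_{s+t})$ by $f_{s+t-1}=(u_{s+t},v_{s+t-1})$, which is exactly the ``slide fixing the left hole'' move. It moves the right hole to $v_{s+t}$.
\item After $k-1$ slides the holes are $u_s$ and $v_{s-1}$, and the only edge still missing is $f_{s-1}=(u_s,v_{s-1})$. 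Adding it is an add move, and it reaches $F$.
\end{enumerate}
All these moves exist in both the boosted JSV chain and the HWS chain. Each intermediate state is determined by the pair $(s,t)$: it agrees with $F$ on the arc $f_s,\ldots,f_{s+t-1}$, agrees with $I$ on the remaining cycle edges other than $e_s$, and has holes $(u_s,v_{s+t-1})$. These states are pairwise distinct, and in particular a single $\gamma_s$ has no repeated vertices, so $\gamma_s\in\Paths(I,F)$. I then set $\theta_{I,F}(\gamma_s)=1/k$ for each $s$.

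\textbf{Energy bound.} The key step is to show that no transition lies on two different paths $\gamma_s\neq\gamma_{s'}$.
\begin{itemize}
\item Every transition of $\gamma_s$ except the first has a near-perfect endpoint with left hole $u_s$.
\item The first transition has perfect endpoint $I$ and near-perfect endpoint with holes at $(u_s,v_s)$.
\end{itemize}
So every transition of $\gamma_s$ has an endpoint in $\mathcal N$ whose left hole is $u_s$. Since the $u_s$ are distinct, the paths are transition-disjoint. Consequently each of the $k(k+1)$ transitions in the support is used with probability exactly $1/k$, and
\[
\En(\theta_{I,F})=k(k+1)\cdot\frac1{k^2}=\frac{k+1}{k}\le\frac32\le2.
\]

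\textbf{Main obstacle.} The only delicate point is this disjointness and distinctness bookkeeping: confirming that the slide moves used are exactly those defined for the chains, and that no transition is shared between starts. The ``left hole equals $u_s$'' invariant is what I would use to settle it.
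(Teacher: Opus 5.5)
Your proof is correct and is essentially the paper's construction, mirrored. The paper deletes a uniformly random edge of $I$, keeps the resulting $V_2$ hole fixed while sliding the $V_1$ hole around the cycle, and separates the $k$ rooted paths by that fixed $V_2$ hole; you fix $u_s\in V_1$ and separate the paths by it, which gives the same $k(k+1)$ transitions, each used with probability $1/k$, and the same energy $(k+1)/k$. The only blemish is an index slip in your description of intermediate states: after $t$ slides the holes are $(u_s,v_{s+t})$, not $(u_s,v_{s+t-1})$, and this does not affect the distinctness or disjointness argument.
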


\begin{proof}
The cycle contains $k$ edges of $I$ and $k$ edges of $F$.
Choose one of its $k$ edges of $I$ uniformly at random and delete it.
Keep the resulting hole in $V_2$ fixed, and move the hole in $V_1$
around the cycle by $k-1$ successive slides, each replacing an edge
of $I$ with an edge of $F$.  Finally, add the remaining edge of $F$.
Only the starting edge is randomized; the direction is fixed by
keeping the hole in $V_2$ stationary.
This defines a routing $\theta_{I,F}$ supported on $k$ paths, each
with $k+1$ transitions: one deletion, $k-1$ slides, and one addition.

Different starting edges leave different fixed holes in $V_2$.
Consequently, the $k$ possible paths have no intermediate matching
in common, and their transition sets are disjoint.
Each path is chosen with probability $1/k$, so every transition in
the support is used with probability $1/k$.
There are $k(k+1)$ such transitions.  By the definition of routing energy,
\[
  \En(\theta_{I,F})
  =\sum_{e\in\supp(\theta_{I,F})}
    \Pr[\gamma\sim\theta_{I,F}]{e\in\gamma}^2
  =k(k+1)\left(\frac1k\right)^2
  =1+\frac1k\leq2. \qedhere
\]
\end{proof}

Classical canonical paths fix the component order and the starting
point of each switch~\cite{JS89}.
In the construction above, each path still has length $k+1$, but
spreading the flow among $k$ choices of the starting edge reduces
the energy to at most two.

For an augmenting path, we instead spread the flow by choosing which
endpoint to advance at each step.
Suppose that switching the path requires $k$ slides and one final
addition.  After $p$ slides from one end and $q$ slides from the other,
the total number of slides is $t=p+q$.
We call $(p,q)$ a \emph{split} at level $t$.
There are $t+1$ possible splits at that level; see
\cref{fig:two-ended-splits}.
We will choose the next endpoint so that these splits are equally likely.

\begin{figure}[htbp]
\centering
\begin{tikzpicture}[
  x=1.5cm,y=.95cm,
  split/.style={inner sep=2pt,font=\scriptsize},
  routing step/.style={-{Stealth[length=1.5mm]},thin},
  probability/.style={fill=white,inner sep=1pt,font=\scriptsize},
  row label/.style={font=\small}
]
  \foreach \t in {0,...,3}{
    \foreach \p in {0,...,\t}{
      \pgfmathtruncatemacro{\q}{\t-\p}
      \node[split] (split-\t-\p) at ({\p-\t/2},-\t)
        {$(\p,\q)$};
    }
    \pgfmathtruncatemacro{\countsplit}{\t+1}
    \node[row label,anchor=east] at (-2.6,-\t) {$t=\t$};
    \node[row label,anchor=west] at (2.6,-\t)
      {\ifnum\t=0 $1$ split\else $\countsplit$ splits\fi};
  }
  \foreach \t in {0,...,2}{
    \pgfmathtruncatemacro{\nextlevel}{\t+1}
    \foreach \p in {0,...,\t}{
      \pgfmathtruncatemacro{\nextp}{\p+1}
      \pgfmathtruncatemacro{\leftnumerator}{\t-\p+1}
      \pgfmathtruncatemacro{\denominator}{\t+2}
      \pgfmathtruncatemacro{\leftgcd}{gcd(\leftnumerator,\denominator)}
      \pgfmathtruncatemacro{\rightgcd}{gcd(\nextp,\denominator)}
      \pgfmathtruncatemacro{\leftnum}{\leftnumerator/\leftgcd}
      \pgfmathtruncatemacro{\leftden}{\denominator/\leftgcd}
      \pgfmathtruncatemacro{\rightnum}{\nextp/\rightgcd}
      \pgfmathtruncatemacro{\rightden}{\denominator/\rightgcd}
      \draw[routing step] (split-\t-\p) --
        node[probability,midway,sloped,above,fill=none] {$\leftnum/\leftden$}
        (split-\nextlevel-\p);
      \draw[routing step] (split-\t-\p) --
        node[probability,midway,sloped,above,fill=none] {$\rightnum/\rightden$}
        (split-\nextlevel-\nextp);
    }
  }
  \node at (0,-3.6) {$\vdots$};
\end{tikzpicture}
\caption{Splits of progress between the two endpoints. Arrow labels are
conditional probabilities for the next slide. The first four levels
are shown ($k\geq3$).}
\label{fig:two-ended-splits}
\end{figure}

At a split $(p,q)$ with $p+q=t<k$, choose the next slide according to
\[
  (p,q)\longrightarrow
  \begin{cases}
    (p+1,q),&\text{with probability }(p+1)/(t+2),\\
    (p,q+1),&\text{with probability }(q+1)/(t+2).
  \end{cases}
\]
These probabilities sum to one.
Suppose that every split at level $t$ has probability $1/(t+1)$.
A split $(r,t+1-r)$ at the next level can be reached from
$(r-1,t+1-r)$ or $(r,t-r)$.  Its probability is therefore
\[
  \frac1{t+1}\frac{r}{t+2}
  +\frac1{t+1}\frac{t+1-r}{t+2}
  =\frac1{t+2},
\]
for $0\leq r\leq t+1$, with a missing predecessor contributing zero.
Thus the next level is also uniform.

\begin{lemma}
\label{lem:two-ended-energy-overview}
Let $I\in\mathcal N(u,v)$ and $J\in\mathcal P$ such that
$I\oplus J$ consists of a single augmenting path between $u$ and $v$,
containing $k$ edges of $I$ and $k+1$ edges of $J$.
There is a routing $\theta_{I,J}$ from $I$ to $J$ such that
\[
\En(\theta_{I,J})=O(\log(k+2)).
\]
\end{lemma}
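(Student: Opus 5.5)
The plan is to make the two-ended routing sketched before \cref{fig:two-ended-splits} concrete, show that each intermediate matching is determined by its split $(p,q)$, and then sum the squared usage probabilities level by level.

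\emph{Construction.} Label the augmenting path as $u=x_0,a_1,x_1,a_2,\ldots,x_k,a_{k+1}=v$. Here the edges $(x_{i-1},a_i)$ for $1\leq i\leq k+1$ are the $k+1$ edges of $J$, and the edges $(x_i,a_i)$ for $1\leq i\leq k$ are the $k$ edges of $I$.
\begin{itemize}
\item A slide from the $u$ end is a slide fixing the right hole. With left hole $x_{p}$, it replaces $(x_{p+1},a_{p+1})\in I$ by $(x_p,a_{p+1})\in J$, so the left hole advances to $x_{p+1}$.
\item A slide from the $v$ end is a slide fixing the left hole. With right hole $a_{k+1-q}$, it replaces $(x_{k-q},a_{k-q})$ by $(x_{k-q},a_{k+1-q})$, so the right hole retreats to $a_{k-q}$.
\end{itemize}
After $p$ slides from the $u$ end and $q$ from the $v$ end, the current matching is
\[
M_{p,q}:=I\setminus\{(x_i,a_i):i\leq p\text{ or }i\geq k+1-q\}\cup\{(x_{i-1},a_i):i\leq p\text{ or }i\geq k+2-q\}\in\mathcal N(x_p,a_{k+1-q}).
\]
This matching depends only on the split, not on the order in which the slides were made.

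\emph{Validity of the moves.} I will check that whenever $p+q<k$, both next slides are valid JSV/HWS moves: the $I$-edges they remove are still present, and the $J$-edges they add join a hole to a vertex that is currently matched. When $p+q=k-1$, both ends target the same last $I$-edge $(x_{p+1},a_{p+1})$. The two resulting matchings $M_{p+1,q}$ and $M_{p,q+1}$ are nevertheless different, so these are two distinct transitions. At level $t=k$ the holes are $x_p$ and $a_{p+1}$, the endpoints of the single remaining $J$-edge. The path ends by adding that edge, and the result is $J$.

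Each matching at level $t$ contains exactly $t$ edges of $J\setminus I$. Hence matchings at different levels are distinct, so the paths have no repeated vertices, and transitions between different pairs of consecutive levels are distinct.

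\emph{Routing and energy.} Choose the next endpoint with the probabilities $(p+1)/(t+2)$ and $(q+1)/(t+2)$ given above. The computation preceding the lemma shows that every split at level $t$ has probability exactly $1/(t+1)$. Since the matching is a function of the split, each $M_{p,q}$ at level $t$ is visited with probability $1/(t+1)$.

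For the slides from level $t$ to level $t+1$, where $0\leq t\leq k-1$:
\begin{itemize}
\item there are at most $2(t+1)$ such transitions;
\item each is used with probability $\Pr{(p,q)}\cdot(\text{conditional probability})\leq 1/(t+1)$;
\item so their total contribution to the energy is at most $2(t+1)\cdot(t+1)^{-2}=2/(t+1)$.
\end{itemize}
The final additions are $k+1$ distinct transitions, one per split at level $k$, each used with probability $1/(k+1)$. They contribute $1/(k+1)\leq 1$. Summing over all levels gives
\[
\En(\theta_{I,J})\leq\sum_{t=0}^{k-1}\frac{2}{t+1}+1\leq 2H_k+1=O(\log(k+2)).
\]

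The main obstacle is the bookkeeping in the validity step. One must confirm that $M_{p,q}$ is well defined regardless of slide order, that every step is a legal move, and that the transition sets at different levels are disjoint, so that each usage probability is exactly the split probability times the conditional step probability. The boundary case $p+q=k-1$, where both endpoints compete for the same last $I$-edge, is the one I would check most carefully. The energy sum itself is routine once these facts are in place.
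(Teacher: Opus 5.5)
Your proposal is correct and follows the paper's proof: the same two-ended construction with the uniform-split rule, the observation that each intermediate state determines its level, and a level-by-level energy sum. The differences are minor. The paper bounds each level by $\sum_{e\in E_t}\Pr{e\in\gamma}^2\le\frac1{t+1}\sum_{e\in E_t}\Pr{e\in\gamma}=\frac1{t+1}$ rather than counting $2(t+1)$ transitions, which saves a factor of $2$. It also justifies the visit probabilities by noting that the holes $(x_p,a_{k+1-q})$ determine the split; that direction (the split is a function of the matching) is the one you actually need, not ``the matching is a function of the split,'' though it follows immediately from the hole pattern you displayed.
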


\begin{proof}
Advance along the augmenting path from both ends.  After $p$ slides
from $u$ and $q$ from $v$, put $t=p+q$ and, if $t<k$, choose the
next endpoint using the transition rule above.
Each slide replaces an edge of $I$ by
the adjacent edge of $J$.
After $k$ slides, add the remaining edge of $J$.
The two hole positions determine $(p,q)$ uniquely, so these states
form distinct levels $t=0,\ldots,k$, followed by $J$.
This defines the routing $\theta_{I,J}$.

The preceding calculation shows inductively that the $t+1$ splits
at level $t$ are equally likely,
starting from the single state at level $0$.

Let $E_t$ be the set of transitions leaving level $t$, including
the final additions when $t=k$.
Every sampled path uses exactly one transition in $E_t$, and hence
\[
  \sum_{e\in E_t}\Pr[\gamma\sim\theta_{I,J}]{e\in\gamma}=1.
\]
To use a particular transition in $E_t$, the path must first visit
its initial split.  Since each split has probability $1/(t+1)$,
\[
  \Pr[\gamma\sim\theta_{I,J}]{e\in\gamma}\leq\frac1{t+1}
  \qquad(e\in E_t).
\]
The sets $E_0,\ldots,E_k$ are disjoint.  Summing their contributions
to the routing energy gives
\begin{equation*}
\begin{aligned}
  \En(\theta_{I,J})
  &=\sum_{t=0}^k\sum_{e\in E_t}
    \Pr[\gamma\sim\theta_{I,J}]{e\in\gamma}^2
  \leq\sum_{t=0}^k\frac1{t+1}
    \sum_{e\in E_t}\Pr[\gamma\sim\theta_{I,J}]{e\in\gamma}
  =\sum_{t=0}^k\frac1{t+1}
  \leq1+\log(k+1).
\end{aligned}
\end{equation*}
Thus, although every path has length $k+1$, its routing contributes
at most $1/(t+1)$ energy at level $t$, giving a logarithmic total.
\end{proof}

\subsection{Occupation estimate for the boosted JSV chain}
\label{subsec:JSV-occupation}

We next use the coupled flow comparison to estimate hole-pattern
probabilities along a stationary trajectory.
The goal is a variance bound proportional to the square of the pattern
probability, so that the trajectory gives a relative-error estimate.
We first compare the conditional means of a function on different
hole patterns, and then use this comparison to bound the sum of
stationary covariances.

\begin{proof}[Proof of \cref{lem:occupation-estimation}]
Write $P:=P^{\mathrm B}_{\mathrm{JSV}}$ and
$\pi:=\pi^{\mathrm B}_{\lambda,w}$.  Fix a hole-pattern set $S$,
and put $p_S:=\pi(S)$, $g:=\one_S-p_S$, and
$K:=C_0n^3\log n$, where $C_0$ is chosen below.

For each hole-pattern set $R$, write
$p_R:=\pi(R)$ and $\mu_R:=\pi(\,\cdot\mid R)$.
By \cref{lem:JSV-flow-summary,lem:energy-MCF}, choosing $C_0$
sufficiently large gives a coupling $\kappa_R$ of $\mu_R$
and $\mu_\emptyset$ such that
\[
\E[(I,J)\sim\kappa_R]{(f(I)-f(J))^2}
\leq K\Dir_P(f,f).
\]
For $R=\mathcal P$, take the diagonal coupling
$I=J\sim\mu_\emptyset$.

Fix a function $f$ with $\E[\pi]{f}=0$, and let
$m_R:=\E[\mu_R]{f}$ be its conditional mean on $R$.
Since $\kappa_R$ has marginals $\mu_R$ and $\mu_\emptyset$,
Jensen's inequality gives
\[
  |m_R-m_{\mathcal P}|^2
  =\bigl|\E[(I,J)\sim\kappa_R]{f(I)-f(J)}\bigr|^2
  \leq\E[(I,J)\sim\kappa_R]{(f(I)-f(J))^2}
  \leq K\Dir_P(f,f).
\]
The hole-pattern sets partition $\Omega$, so
$\sum_Rp_R=1$ and $\sum_Rp_Rm_R=\E[\pi]{f}=0$.
It follows that
\begin{align*}
  |m_S|
  =\left|\sum_Rp_R(m_S-m_R)\right|
  \leq|m_S-m_{\mathcal P}|
    +\sum_Rp_R|m_R-m_{\mathcal P}|
  \leq2\sqrt{K\Dir_P(f,f)}.
\end{align*}
Moreover, since $g=\one_S-p_S$ and $f$ has mean zero,
\[
  \E[\pi]{gf}
  =\E[\pi]{\one_Sf}-p_S\E[\pi]{f}
  =p_Sm_S.
\]
Combining these two observations gives
\begin{equation}
\bigl(\E[\pi]{gf}\bigr)^2
\leq4Kp_S^2\Dir_P(f,f)
\qquad\text{whenever }\E[\pi]{f}=0.
\label{eq:occupation-indicator-comparison}
\end{equation}
Thus the flow bound controls the correlation of the pattern
indicator with any centered test function, at the required
scale $p_S^2$.

To apply this comparison to a stationary trajectory $(X_t)$, write
$(P^rg)(x):=\E[X_0=x]{g(X_r)}$ for $r\geq0$, and put
\[
c_r:=\Cov(g(X_0),g(X_r))=\E[\pi]{gP^rg},
\qquad D_T:=\sum_{r=0}^{T-1}c_r.
\]
Laziness and reversibility imply that $P$ has nonnegative
eigenvalues, so $c_r\geq0$ for every $r$.
We will bound their partial sum $D_T$ by using the test function
\[
  f_T:=\sum_{r=0}^{T-1}P^rg.
\]
Stationarity gives $\E[\pi]{f_T}=0$ and
$\E[\pi]{gf_T}=D_T$, while telescoping gives
$f_T-Pf_T=g-P^Tg$.
By reversibility, $\E[\pi]{(P^rg)(P^Tg)}=c_{T+r}$.
Consequently,
\begin{align*}
  \Dir_P(f_T,f_T)
  =\E[\pi]{f_T(f_T-Pf_T)}
  =\E[\pi]{f_Tg}-\E[\pi]{f_TP^Tg}
  =D_T-\sum_{r=0}^{T-1}c_{T+r}
  \leq D_T.
\end{align*}
Applying \cref{eq:occupation-indicator-comparison} to $f_T$ now yields
\[
  D_T^2=\bigl(\E[\pi]{gf_T}\bigr)^2
  \leq4Kp_S^2\Dir_P(f_T,f_T)
  \leq4Kp_S^2D_T.
\]
If $D_T>0$, dividing by $D_T$ gives $D_T\leq4Kp_S^2$;
the same bound holds when $D_T=0$.
Finally, stationarity and $c_r\geq0$ imply
\begin{align*}
\Var{\frac1T\sum_{t=0}^{T-1}\one_S(X_t)}
&=\frac1{T^2}\left(Tc_0+2\sum_{r=1}^{T-1}(T-r)c_r\right)\\
&\leq\frac{2D_T}{T}\leq\frac{8K}{T}p_S^2.
\end{align*}
This is \cref{eq:JSV-occupation-estimation} with $C=8C_0$.
\end{proof}

\subsection{Occupation estimate for the HWS chain}
\label{subsec:HWS-occupation}

The preceding proof uses the flow bound through a single constant
$K$: every hole-pattern distribution can be coupled to the perfect
distribution with squared differences bounded by $K\Dir_P(f,f)$.
For the HWS chain, this constant improves from $O(n^3\log n)$ to
$O(n^2\log n)$, giving \cref{lem:HWS-occupation-estimation}.

\begin{proof}[Proof of \cref{lem:HWS-occupation-estimation}]
Write
\[
P:=P_{\mathrm{HWS}}(\lambda,w),
\qquad
\pi:=\pi^{\mathrm{HWS}}_{\lambda,w}.
\]
For each hole-pattern set $R$, put
$p_R:=\pi(R)$ and $\mu_R:=\pi(\,\cdot\mid R)$.
By \cref{lem:HWS-flow-summary,lem:energy-MCF}, every near-perfect
pattern $R=\mathcal N(u,v)$ has a coupling
$\kappa_R:=\kappa_{u,v}$ of $\mu_R$ and $\mu_\emptyset$
satisfying
\begin{equation*}
\E[(I,J)\sim\kappa_R]{(f(I)-f(J))^2}
\leq K_{\mathrm{HWS}}\Dir_P(f,f),
\qquad
K_{\mathrm{HWS}}:=C_0n^2\log n,
\end{equation*}
for every $f:\Omega\to\mathbb R$, where $C_0$ is a universal
constant.  For $R=\mathcal P$, use the diagonal coupling
$I=J\sim\mu_\emptyset$, which satisfies the same inequality.

Fix a hole-pattern set $S$, and let $g:=\one_S-p_S$.
The same comparison of conditional means gives
\[
  \bigl(\E[\pi]{gf}\bigr)^2
  \leq4K_{\mathrm{HWS}}p_S^2\Dir_P(f,f)
  \qquad\text{whenever }\E[\pi]{f}=0.
\]
This argument only uses $\sum_Rp_R=1$; it does not require the pattern
probabilities to be equal.
The HWS chain is also lazy and reversible, so the stationary covariance
calculation in the preceding proof applies with $K_{\mathrm{HWS}}$
in place of $K$.  It follows that
\[
\Var{\frac1T\sum_{t=0}^{T-1}\one_{\{X_t\in S\}}}
\leq
\frac{8K_{\mathrm{HWS}}}{T}\,p_S^2
=
\frac{8C_0n^2\log n}{T}\,p_S^2,
\]
for every hole-pattern set $S$, proving the lemma.
\end{proof}

\section{Flow and relaxation bounds for the JSV and HWS chains}
\label{sec:JSV-flow}

We prove \cref{lem:JSV-flow-summary,lem:HWS-flow-summary} and
\cref{lem:jsv-relaxation,lem:hws-relaxation}.
We first use the same two-ended routing to establish the
fixed-hole-pattern flow bounds for both chains.
We then compare perfect matchings using the all-roots cycle
routing.  Combining these comparisons gives relaxation times
$O_c(n^3\log n)$ for boosted JSV and $O_c(n^2\log n)$ for HWS,
without any additional augmenting-path analysis.

Throughout this section, fix positive edge activities and
$c$-rough hole weights $w$.
Constants denoted by $C_c$ depend only on $c$ and may change
between displays.
Write
\[
P_{\mathrm J}:=P^{\mathrm B}_{\mathrm{JSV}},
\qquad
P_{\mathrm R}:=P_{\mathrm{HWS}},
\]
and let $\pi_{\mathrm J},\pi_{\mathrm R}$,
$\mathcal Z_{\mathrm J},\mathcal Z_{\mathrm R}$, and
$Q_{\mathrm J},Q_{\mathrm R}$ denote their stationary distributions,
normalizing constants, and transition capacities, respectively.

\subsection{Matching identities and complement encodings}
\label{subsec:matching-identities}

Abbreviate
\begin{equation}
Z_\emptyset:=\lambda(\mathcal P),
\qquad
Z_{u,v}:=\lambda(\mathcal N(u,v)).
\label{eq:matching-partition-functions}
\end{equation}
Both chains have conditional distributions
\[
\mu_\emptyset(M)=\frac{\lambda(M)}{Z_\emptyset}
\quad(M\in\mathcal P),
\qquad
\mu_{u,v}(M)=\frac{\lambda(M)}{Z_{u,v}}
\quad(M\in\mathcal N(u,v)).
\]
In particular, these distributions do not depend on the supplied
hole weights.

\paragraph{Complement encodings.}
Write $I\oplus F$ for symmetric difference and $I\uplus F$
for the multiset retaining both copies of every common edge.
The complement encodings below follow the approach
of~\cite{BSVV08}.
Once $I\uplus F$ and the original source holes are known,
ownership of the edges on the augmenting path is forced.
On an alternating cycle where the current matching agrees
with $I$, it identifies the edges belonging to $I$.
Thus reconstruction is unique when every cycle is unchanged.
If one cycle is designated as active, there are at most two
reconstructions, corresponding to its two alternating colorings.
Common edges belong to both matchings.

\begin{lemma}
\label{lem:cofactor}
For every $u\in V_1$ and $v\in V_2$,
\begin{equation}
\lambda(u,v)Z_{u,v}\leq Z_\emptyset,
\label{eq:cofactor-ineq}
\end{equation}
and, for every fixed $u\in V_1$,
\begin{equation}
\sum_{v\in V_2}\lambda(u,v)Z_{u,v}=Z_\emptyset.
\label{eq:cofactor-id}
\end{equation}
The identity with the vertex classes interchanged also holds.
\end{lemma}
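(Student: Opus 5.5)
The plan is to prove both statements with a single bijection: perfect matchings of the complete bipartite graph $K_{n,n}$ on $(V_1,V_2)$, grouped according to the partner of a fixed vertex.

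Fix $u\in V_1$. Every $M\in\mathcal P$ matches $u$ to exactly one $v\in V_2$. Removing the edge $(u,v)$ gives a matching $M\setminus\{(u,v)\}$ with holes exactly at $u$ and $v$, so it lies in $\mathcal N(u,v)$. Conversely, adding $(u,v)$ to any $N\in\mathcal N(u,v)$ produces a perfect matching that contains $(u,v)$; these are the add/delete moves of the chain. Hence, for each $v$, the map $N\mapsto N\cup\{(u,v)\}$ is a bijection from $\mathcal N(u,v)$ onto $\{M\in\mathcal P:(u,v)\in M\}$. Since activities are multiplicative over edges, $\lambda(N\cup\{(u,v)\})=\lambda(u,v)\lambda(N)$, and summing over $N$ gives
\[
\lambda(u,v)Z_{u,v}=\lambda\bigl(\{M\in\mathcal P:(u,v)\in M\}\bigr).
\]

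The sets $\{M\in\mathcal P:(u,v)\in M\}$, for $v\in V_2$, partition $\mathcal P$. Summing the displayed identity over $v\in V_2$ therefore gives \cref{eq:cofactor-id}. Because all activities are nonnegative, each single summand is at most the total $Z_\emptyset$, which is \cref{eq:cofactor-ineq}. The interchanged identity $\sum_{u\in V_1}\lambda(u,v)Z_{u,v}=Z_\emptyset$ follows by the same argument with $v\in V_2$ fixed and $M$ grouped by the partner of $v$.

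There is no real obstacle. The only point to check is that the definition of $\mathcal N(u,v)$ requires the holes to be exactly $u$ and $v$, with all other vertices matched; this is what makes the correspondence a bijection rather than merely an injection. The identity is the cofactor (Laplace) expansion of the permanent of the activity matrix along row $u$, and it can be stated that way as a remark.
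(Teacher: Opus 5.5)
Your proposal is correct and follows the same approach as the paper. Both arguments identify $\mathcal N(u,v)$ with the perfect matchings containing $(u,v)$ by adding that edge, which multiplies the weight by $\lambda(u,v)$. Both then use that, for fixed $u$, these classes partition $\mathcal P$; this gives the identity, and nonnegativity of activities gives the inequality.
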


\begin{proof}
Adding $(u,v)$ identifies $\mathcal N(u,v)$ with the perfect
matchings containing that edge and multiplies the weight by
$\lambda(u,v)$.
For fixed $u$, these classes partition $\mathcal P$.
\end{proof}

For distinct $r,y\in V_1$ and distinct $v,a\in V_2$, let
$\mathcal N^{(2)}(r,y;v,a)$ contain the matchings with exactly
these four holes, and put
\begin{equation}
Z^{(2)}_{r,y;v,a}
:=\sum_{K\in\mathcal N^{(2)}(r,y;v,a)}\lambda(K).
\label{eq:four-hole-partition}
\end{equation}
If two holes in the same vertex class coincide, the set is
empty and its partition function is zero.

\begin{lemma}
\label{lem:matching-switching}
For distinct $r,y\in V_1$ and distinct $v,a\in V_2$,
\begin{equation}
Z_{r,a}Z_{y,v}
\leq Z_{r,v}Z_{y,a}+Z_\emptyset Z^{(2)}_{r,y;v,a}.
\label{eq:two-hole-switching}
\end{equation}
Consequently, for every $r\in V_1$ and distinct $v,a\in V_2$,
\begin{equation}
Z_{r,v}\sum_{y\in V_1}\lambda(y,v)Z_{y,a}
\leq2Z_{r,a}Z_\emptyset.
\label{eq:transposed-two-hole}
\end{equation}
Finally, for $r\in V_1$ and pairwise distinct $s,v,a\in V_2$,
\begin{equation}
Z_{r,v}\sum_{y\in V_1}\lambda(y,v)Z^{(2)}_{r,y;s,a}
\leq2Z_{r,s}Z_{r,a}.
\label{eq:one-row-four-hole}
\end{equation}
\end{lemma}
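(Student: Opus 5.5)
For \cref{eq:two-hole-switching} I will use an injective switching argument on pairs of matchings. Take $(A,B)\in\mathcal N(r,a)\times\mathcal N(y,v)$; the weight of the pair is $\lambda(A)\lambda(B)$. In the multiset $A\uplus B$ the vertices $r,a$ have degree one from $B$ only and $y,v$ have degree one from $A$ only, while every other vertex has degree two. Thus $A\oplus B$ consists of alternating cycles together with exactly two alternating paths whose endpoints are $\{r,a,y,v\}$. The pairing of these four endpoints falls into one of two cases.

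In the first case the paths join $r$--$v$ and $y$--$a$ (an $r$--$y$ path is impossible: both ends lie in $V_1$ and the path would need odd-length parity consistent with both endpoints missing from $A$ — I'll check the parities carefully, but the bipartite structure with $r,y\in V_1$ and $v,a\in V_2$ forces every path to run from a $V_1$ hole to a $V_2$ hole). Swapping the edges of $A$ and $B$ along the $r$--$v$ path gives $A'\in\mathcal N(r,v)$ and $B'\in\mathcal N(y,a)$.

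In the second case the paths join $r$--$a$ and $y$--$v$. The $r$--$a$ path has both endpoints unmatched in $A$, so it is augmenting for $A$ and alternating for $B$. Swapping along it turns $A$ into a perfect matching and $B$ into a matching with holes $\{r,y,v,a\}$, i.e.\ an element of $\mathcal N^{(2)}(r,y;v,a)$.

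In both cases the multiset $A\uplus B$ is preserved, so the weight product is preserved. The map is injective because, given the image pair and the known hole sets, the swapped path can be identified and swapped back: it is the unique path component with the appropriate endpoints. Summing weights over all pairs gives \cref{eq:two-hole-switching}. This parity and injectivity bookkeeping is the step I expect to need the most care, particularly the claim that only the two pairings $\{rv,ya\}$ and $\{ra,yv\}$ occur.

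For \cref{eq:transposed-two-hole}, multiply \cref{eq:two-hole-switching}, with $(r,y)$ as given, by $\lambda(y,v)$ and sum over $y\neq r$. The term $y=r$ has $Z_{r,v}Z_{r,a}$, so it must be handled separately via \cref{eq:cofactor-ineq}, or alternatively I will rewrite the inequality with the roles arranged so that the $y=r$ term is bounded by $Z_{r,a}\cdot\lambda(r,v)Z_{r,v}\le Z_{r,a}Z_\emptyset$. The first summand becomes $Z_{r,v}\sum_y\lambda(y,v)Z_{y,a}$, matching the target. So I instead apply the switching to pairs in $\mathcal N(r,v)\times\mathcal N(y,a)$, giving
\[
Z_{r,v}Z_{y,a}\le Z_{r,a}Z_{y,v}+Z_\emptyset Z^{(2)}_{r,y;a,v}.
\]
Then I sum against $\lambda(y,v)$. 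By \cref{eq:cofactor-id} with the classes interchanged, $\sum_y\lambda(y,v)Z_{y,v}=Z_\emptyset$. Moreover $\sum_y\lambda(y,v)Z^{(2)}_{r,y;a,v}=Z_{r,a}$, since adding the edge $(y,v)$ bijects four-hole matchings onto $\mathcal N(r,a)$, with $y$ ranging over the possible partners of $v$. These two terms together give $2Z_{r,a}Z_\emptyset$.

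For \cref{eq:one-row-four-hole}, I use an analogous switching on $\mathcal N(r,v)\times\mathcal N^{(2)}(r,y;s,a)$ and sum over $y$ with weight $\lambda(y,v)$. Adding $(y,v)$ to the four-hole matching gives a matching in $\mathcal N(r,\cdot)$ with holes $r,s,a$ in $V_2$ after accounting, and the switching then maps the pair into $\mathcal N(r,s)\times\mathcal N(r,a)$ in at most two ways, which accounts for the factor $2$.
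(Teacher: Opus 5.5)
Your switching idea for \cref{eq:two-hole-switching} is the paper's, but your case analysis has the parity backwards. In $A\uplus B$ the endpoints $r$ and $a$ each carry only a $B$-edge, and $y,v$ each carry only an $A$-edge. A path leaving $r$ therefore starts with a $B$-edge. If it ends in $V_2$, it has odd length and ends with a $B$-edge, so it ends at $a$. If it ends in $V_1$, it has even length and ends with an $A$-edge, so it ends at $y$. The possible pairings are thus $\{r\text{--}y,\ a\text{--}v\}$ and $\{r\text{--}a,\ y\text{--}v\}$. The $r$--$v$ path you switch never occurs, and the $r$--$y$ path you rule out is exactly the other case. Your first case is also inconsistent on its own terms: switching any path ending at $r$ puts $r$'s $B$-edge into $A'$, so $A'$ cannot lie in $\mathcal N(r,v)$. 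The repair is to switch the path through $r$ in both cases. In the $r$--$y$ case this gives $(A',B')\in\mathcal N(y,a)\times\mathcal N(r,v)$, which is the same product $Z_{r,v}Z_{y,a}$, and switching the path through $r$ again inverts the map.

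Your derivation of \cref{eq:transposed-two-hole} then matches the paper, with one bookkeeping caution. The interchanged inequality holds with equality at $y=r$, since $Z^{(2)}_{r,r;a,v}=0$. So you may sum over all $y$ and use $\sum_y\lambda(y,v)Z_{y,v}=Z_\emptyset$. But if you also bound the $y=r$ term separately by $Z_{r,a}Z_\emptyset$, you get $3$ instead of $2$. The paper instead sums over $y\neq r$ and uses $\sum_{y\neq r}\lambda(y,v)Z_{y,v}=Z_\emptyset-\lambda(r,v)Z_{r,v}$, which absorbs that term exactly.

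For \cref{eq:one-row-four-hole} the proposal contains no actual argument. ``Adding $(y,v)$ to the four-hole matching'' is not a legal operation, because $v\notin\{s,a\}$ is covered by every $K\in\mathcal N^{(2)}(r,y;s,a)$. You also never say which path is switched or what the cases are.

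The paper avoids a new switching. It replaces row $r$ by the row whose only nonzero entry is a $1$ in column $s$, and applies \cref{eq:transposed-two-hole} to that matrix. Its perfect partition function is $Z_{r,s}$, and cofactors deleting row $r$ are unchanged. The cofactor deleting row $y\neq r$ and column $a$ is $Z^{(2)}_{r,y;s,a}$, and the $y=r$ summand vanishes because the new $(r,v)$ entry is $0$.

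A direct switching also works, but you must switch first and add $(y,v)$ afterwards. Take $A\in\mathcal N(r,v)$ and $K\in\mathcal N^{(2)}(r,y;s,a)$. Then $r$ is isolated in $A\uplus K$, and the path from $s$ ends at $v$ or at $y$. Switching it yields a pair in $\mathcal N(r,s)\times\mathcal N^{(2)}(r,y;a,v)$ or in $\mathcal N^{(2)}(r,y;s,v)\times\mathcal N(r,a)$. Only now is $v$ a hole of the matching that also misses $y$. Multiplying by $\lambda(y,v)$ and summing over $y$ turns each case into $Z_{r,s}Z_{r,a}$. The two cases are what produce the factor $2$.
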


\begin{proof}
For \cref{eq:two-hole-switching}, superpose a red matching in
$\mathcal N(r,a)$ and a blue matching in $\mathcal N(y,v)$.
The alternating path starting at $r$ ends at either $y$ or
$a$: bipartite parity excludes $v$.
Switching ownership along this path gives, respectively,
a pair in $\mathcal N(y,a)\times\mathcal N(r,v)$ or a pair
in $\mathcal P\times\mathcal N^{(2)}(r,y;v,a)$.
Each map is inverted by switching the path starting at $r$,
and preserves every edge multiplicity.
Summing product weights proves the inequality.

Interchange $a,v$ in \cref{eq:two-hole-switching}, multiply
by $\lambda(y,v)$, and sum over $y\neq r$.
Cofactor expansion gives
\[
\sum_{y\neq r}\lambda(y,v)Z_{y,v}
=Z_\emptyset-\lambda(r,v)Z_{r,v},
\qquad
\sum_y\lambda(y,v)Z^{(2)}_{r,y;a,v}=Z_{r,a}.
\]
Adding the $y=r$ term gives \cref{eq:transposed-two-hole}.

For \cref{eq:one-row-four-hole}, replace row $r$ by a row
whose only nonzero entry is a one in column $s$, and apply
\cref{eq:transposed-two-hole} to the resulting matrix.
Its perfect partition function is $Z_{r,s}$; its cofactors
deleting row $r$ are unchanged; and its cofactor deleting
row $y\neq r$ and column $a$ is $Z^{(2)}_{r,y;s,a}$.
The $y=r$ summand vanishes since $v\neq s$.
These substitutions give the result.
Zero activities in this auxiliary matrix are justified by
continuity.
\end{proof}

\subsection{Transition capacities}
\label{subsubsec:JSV-capacity-target}

Recall from \cref{eq:flow-capacity}, the capacity of a transition $e$ is defined as $Q(e):=\pi(x)P(x,y)=\pi(y)P(y,x)$.
Define the unboosted ideal weight by
\[
\omega^*(M):=
\begin{cases}
\lambda(M),&M\in\mathcal P,\\
w^*(u,v)\lambda(M),&M\in\mathcal N(u,v),
\end{cases}
\]
and write
\begin{equation}
m_T:=\min\{\omega^*(M),\omega^*(M')\}
\qquad(T=\{M,M'\}).
\label{eq:ideal-min-weight}
\end{equation}
For boosted JSV,
\[
\mathcal Z_{\mathrm J}
=\left(n^2+\sum_{u,v}\frac{w(u,v)}{w^*(u,v)}\right)Z_\emptyset
\leq(1+c)n^2Z_\emptyset.
\]
The Metropolis rule and $c$-roughness therefore give
\begin{equation}
Q_{\mathrm J}(T)
=\frac1{4n}\min\{\pi_{\mathrm J}(M),\pi_{\mathrm J}(M')\}
\geq\frac{m_T}{4c(1+c)n^3Z_\emptyset}.
\label{eq:boosted-JSV-capacity}
\end{equation}

For HWS, let $L_v^*(a)$ and $R_u^*(x)$ denote the scores
evaluated at ideal weights.
Then
\[
L_v^*(a)
=\frac{Z_\emptyset}{\sum_y\lambda(y,a)Z_{y,v}}
\qquad(a\neq v).
\]
By \cref{eq:transposed-two-hole}, with the columns interchanged,
and by retaining one term of this denominator,
\begin{equation}
\frac{Z_{r,a}}{Z_{r,v}}\leq2L_v^*(a),
\qquad
\lambda(u,a)L_v^*(a)\leq w^*(u,v).
\label{eq:HWS-score-ratio}
\end{equation}
The analogous inequalities hold for $R_u^*(x)$.
The normalizing-constant calculation in \cref{subsec:HWS-chain}
gives $(5n^2-2n)Z_\emptyset$ at ideal weights.
Roughness changes every reciprocal score, effective weight,
and normalizing constant by at most a factor $c$.
Thus
\[
c^{-1}L_v^*(a)\leq L_v(a)\leq cL_v^*(a),
\qquad
\mathcal Z_{\mathrm R}\leq5cn^2Z_\emptyset.
\]
For the slide
\[
N\in\mathcal N(u,v)
\longrightarrow
N'=N\setminus\{(x,a)\}\cup\{(u,a)\}
\in\mathcal N(x,v),
\]
the HWS transition rule gives
\begin{equation}
Q_{\mathrm R}(\{N,N'\})
=\frac{\lambda(N)\lambda(u,a)L_v(a)}{2\mathcal Z_{\mathrm R}}
\geq
\frac{\lambda(N)\lambda(u,a)L_v^*(a)}
     {10c^2n^2Z_\emptyset}.
\label{eq:HWS-slide-capacity}
\end{equation}
For a boundary transition with $M=N\cup\{(u,v)\}\in\mathcal P$,
\begin{equation}
Q_{\mathrm R}(\{N,M\})
=\frac{n\lambda(M)}{2\mathcal Z_{\mathrm R}}
\geq\frac{\lambda(M)}{10cnZ_\emptyset}.
\label{eq:HWS-boundary-capacity}
\end{equation}

\begin{remark}
\label{rem:capacity-comparison}
The HWS transition probabilities are designed so that their
capacities match the sharper slide-load estimates.
Comparing \cref{eq:HWS-slide-capacity} with
\cref{eq:two-ended-HWS-load} gives support congestion
$O_c(n^2)$, rather than $O_c(n^3)$ for boosted JSV.
This is the source of the factor-$n$ improvement in the
frequency-estimation and relaxation-time bounds.
\end{remark}

\subsection{The fixed-hole coupling and two-ended routing}
\label{subsec:augmenting-coupling}

Fix source holes $(r,s)$.
Independently sample $I\sim\mu_{r,s}$ and
$F\sim\mu_\emptyset$, and obtain $J$ by switching the augmenting
path $\mathcal A$ in $I\oplus F$.
By \cref{lem:augmenting-path-coupling}, $(I,J)$ has distribution
$\kappa_{r,s}$.
The endpoints agree on every alternating cycle, so only
$\mathcal A$ needs to be processed.
Retain $F$ as auxiliary information and give the occurrence
$(I,F,J)$ coefficient
\begin{equation}
c_{I,F}:=
\frac{\lambda(I)\lambda(F)}{Z_{r,s}Z_\emptyset}.
\label{eq:JSV-pair-coefficient}
\end{equation}
Assign the two-ended routing of
\cref{lem:two-ended-energy-overview} to every occurrence,
defining the family $\mathcal F_{r,s}$.
Write $\theta^{\mathrm{path}}_{I,F}$ for its individual routings.

\subsubsection{Routing energy}
\label{subsec:two-ended-routing}

Label the augmenting path by
\begin{equation}
\begin{gathered}
a_0=r,b_0,a_1,b_1,\ldots,a_k,b_k=s,\\
f_j=(a_j,b_j)\in F\quad(0\leq j\leq k),
\qquad
e_j=(a_{j+1},b_j)\in I\quad(0\leq j<k).
\end{gathered}
\label{eq:augmenting-path-coordinates}
\end{equation}
After $p$ left slides and $q$ right slides, the holes are
$a_p,b_{k-q}$, with frontier edges $f_p,f_{k-q}$.
A left slide replaces $e_p$ by $f_p$, and a right slide
replaces $e_{k-q-1}$ by $f_{k-q}$.
At level $p+q=k$, add the remaining edge to obtain $J$.
All other components remain unchanged.
Since $k\leq n-1$, the energy calculation in
\cref{lem:two-ended-energy-overview} gives
\begin{equation}
\En_{\max}(\mathcal F_{r,s})\leq1+\log n.
\label{eq:two-ended-energy}
\end{equation}

\subsubsection{Occurrence loads}
\label{subsec:JSV-occurrence-congestion}
\label{subsubsec:JSV-load-bound}

For an undirected transition $T$, define
\begin{equation}
L_T^{r,s}:=
\sum_{(I,F):\,T\in\supp(\theta^{\mathrm{path}}_{I,F})}c_{I,F}.
\label{eq:JSV-occurrence-load}
\end{equation}
Each occurrence contributes its full coefficient whenever its
routing support contains $T$; the interleaving probabilities
have already been used in the energy bound.

\begin{lemma}
\label{lem:two-ended-load}
For every source hole pattern $(r,s)$ and transition $T$,
\begin{equation}
L_T^{r,s}\leq C\frac{m_T}{Z_\emptyset}.
\label{eq:two-ended-load}
\end{equation}
For a directed slide fixing the right hole as in
\cref{eq:HWS-slide-capacity}, its directed load also satisfies
\begin{equation}
L_{N\to N'}^{r,s}
\leq\frac{4\lambda(N)\lambda(u,a)L_v^*(a)}{Z_\emptyset}.
\label{eq:two-ended-HWS-load}
\end{equation}
The transposed bound holds for the other slide direction.
\end{lemma}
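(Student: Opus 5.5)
We bound the load by an injective (or boundedly-many-to-one) encoding argument in the style of \cite{JS89,BSVV08}. Fix the source pattern $(r,s)$ and a transition $T=\{N,N'\}$ in the support of some two-ended routing. Every occurrence $(I,F)$ whose routing can use $T$ passes through a state $N$ at some level $t=p+q$. The state $N$ agrees with $J=I\oplus\mathcal A$ on the processed prefix and suffix of $\mathcal A$, agrees with $I$ on the middle segment, and agrees with $I$ (equivalently $F$) off $\mathcal A$, except that cycles of $I\oplus F$ remain.

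The encoding is $E:=(I\uplus F)\setminus N$, taken as a multiset. It is a matching-like object whose product weight satisfies $\lambda(I)\lambda(F)=\lambda(N)\lambda(E)$, up to the bookkeeping of the path edges. On $\mathcal A$, $N$ holds $F$-edges at both ends and $I$-edges in the middle, so $E$ holds the complementary edges. Off $\mathcal A$, $E$ is the complement of $N$ in $I\uplus F$. By the complement-encoding discussion preceding \cref{lem:cofactor}, the pair $(N,E)$ together with the known source holes $r,s$ determines $(I,F)$ uniquely, since every cycle is unchanged. Hence $L_T^{r,s}\le \lambda(N)\sum_E\lambda(E)/(Z_{r,s}Z_\emptyset)$.

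The remaining task is to identify the hole structure of $E$. If $N\in\mathcal N(u,v)$ with current holes $u=a_p$ and $v=b_{k-q}$, then $E$ covers $u$ and $v$ twice where $N$ misses them. The clean formulation is that $E$, with the two frontier edges removed or added, is a matching with holes in $\{r,s,u,v\}$. So $\sum_E\lambda(E)$ is bounded by an expression such as $Z_{r,v}Z_{u,s}$-type products or $Z^{(2)}_{r,u;s,v}$ weighted by the frontier activities.

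We then reduce these sums to $m_T$ using \cref{lem:cofactor} and \cref{lem:matching-switching}. The two-hole switching inequality \cref{eq:two-hole-switching}, applied with the pairs $(r,s)$ and $(u,v)$, converts $\sum_E\lambda(E)\lesssim Z_{r,v}$ (with $E$ essentially in $\mathcal N(r,v)$ after fixing the frontier) against the normalizer $Z_{r,s}$, and the fixed-hole structure gives a ratio of the form $Z_{u,s}$. The inequalities needed are therefore of the form $Z_{r,v}Z_{u,s}\lesssim Z_{r,s}Z_{u,v}+Z_\emptyset Z^{(2)}$, and after division this yields
\[
L_T\lesssim \lambda(N)\,Z_{u,v}^{-1}\cdot\bigl(\text{const}\bigr)\approx \frac{\lambda(N)w^*(u,v)}{Z_\emptyset},
\]
which is at least $m_T/Z_\emptyset$ up to constants. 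For boundary transitions, where $N'$ is perfect, $m_T=\lambda(M)$ and the same argument applies with $E$ a near-perfect matching in $\mathcal N(r,s)$.

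For the sharper directed bound \cref{eq:two-ended-HWS-load}, we sum the encoding over the position $x$ of the removed edge. The slide $N\to N'$ fixing $v$ and moving the left hole from $u$ to $x$ has frontier edge $(u,a)$. The sum over encodings then produces $\sum_y \lambda(y,a)Z_{\cdot}$-type quantities, which \cref{eq:transposed-two-hole} and \cref{eq:one-row-four-hole} bound by $2Z_{r,a}Z_\emptyset$ and $2Z_{r,s}Z_{r,a}$ respectively. Dividing by $Z_{r,s}Z_\emptyset$ and invoking the first part of \cref{eq:HWS-score-ratio}, namely $Z_{r,a}/Z_{r,v}\le 2L^*_v(a)$, gives the factor $4\lambda(u,a)L^*_v(a)$.

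The main obstacle is the case split between encodings whose leftover structure is a two-hole object and those that are four-hole objects in $\mathcal N^{(2)}$. This depends on whether the active slide belongs to the prefix or suffix side, and on the orientation of the slide relative to $r$ and $s$. Each case must be matched to the right inequality of \cref{lem:matching-switching} without losing more than a constant. I would first work out the case where $T$ is a left slide at a state with $q=0$, and then reduce the general case to it by symmetry.
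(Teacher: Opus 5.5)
Your framework is the right one: the complement $E=(I\uplus F)\setminus N$, unique reconstruction of $(I,F)$ from $N$, $E$ and the source holes, and then \cref{lem:matching-switching} together with \cref{eq:HWS-score-ratio}. Your treatment of terminal additions also matches the paper's. The gap is in the slide case, where you misidentify $E$.

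Take a left slide at level $(p,q)$ with holes $u=a_p$ and $v=b_{k-q}$, inserted edge $(u,a)=f_p$ and removed edge $(x,a)=e_p$. Then $E=K\uplus\{(u,a),(y,v)\}$, where $(y,v)=f_{k-q}$ is the frontier edge at the \emph{other} end and $K\in\mathcal N^{(2)}(r,y;s,a)$. Removing the two frontier edges leaves holes at their far endpoints $a$ and $y$, not at $u,v$. So your ``holes in $\{r,s,u,v\}$'' and $Z^{(2)}_{r,u;s,v}$ are wrong. Moreover, $x$ and $a$ are fixed by the transition while $y$ is not, so the quantity to control is
\[
L^{r,s}_{N\to N'}\leq\frac{\lambda(N)\lambda(u,a)}{Z_{r,s}Z_\emptyset}\sum_{y}\lambda(y,v)Z^{(2)}_{r,y;s,a}.
\]
This is a sum over $y$ with weight $\lambda(y,v)$, not over $x$ with weight $\lambda(y,a)$.

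The real case split is $v=s$ (i.e.\ $q=0$) versus $v\neq s$. When $v=s$, the sum equals $Z_{r,a}$ by cofactor expansion. When $v\neq s$, \cref{eq:one-row-four-hole} applies verbatim. In both cases the sum is at most $2Z_{r,s}Z_{r,a}/Z_{r,v}\leq4L_v^*(a)Z_{r,s}$, which gives \cref{eq:two-ended-HWS-load}. Your plan to reduce $q>0$ to $q=0$ ``by symmetry'' cannot work. Transposition exchanges left and right slides, not these two cases, and $q>0$ is exactly where the four-hole inequality is indispensable. Your intermediate route through \cref{eq:two-hole-switching} with the pairs $(r,s),(u,v)$ also leaves the $Z_\emptyset Z^{(2)}$ term uncontrolled.

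Second, your passage to \cref{eq:two-ended-load} goes the wrong way. Bounding $L_T$ by a constant times $\lambda(N)/Z_{u,v}=\omega^*(N)/Z_\emptyset$, and noting that this is at least $m_T/Z_\emptyset$, proves nothing. You need an upper bound by the \emph{minimum} of $\omega^*(N)$ and $\omega^*(N')$.

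The paper obtains \cref{eq:two-ended-load} as a corollary of the directed bound:
\begin{itemize}
\item $\lambda(u,a)L_v^*(a)\leq w^*(u,v)$ gives the bound $4\omega^*(N)/Z_\emptyset$;
\item rewriting $\lambda(N)\lambda(u,a)=\lambda(N')\lambda(x,a)$ and using $\lambda(x,a)L_v^*(a)\leq w^*(x,v)$ gives $4\omega^*(N')/Z_\emptyset$.
\end{itemize}
One then adds the reverse directed load $L^{r,s}_{N'\to N}$. This must be encoded relative to the state the occurrence leaves, namely $N'$, whose inserted edge $(x,a)$ is fixed by the transition. It satisfies the same bound, since $\lambda(N')\lambda(x,a)=\lambda(N)\lambda(u,a)$.
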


\begin{proof}
For a terminal addition producing $M\in\mathcal P$, the
complement $K=(I\uplus F)\setminus M$ lies in
$\mathcal N(r,s)$ and determines the source pair uniquely.
Hence
\[
L_{\{N,M\}}^{r,s}
\leq\frac{\lambda(M)}{Z_{r,s}Z_\emptyset}
\sum_{K\in\mathcal N(r,s)}\lambda(K)
=\frac{\lambda(M)}{Z_\emptyset}.
\]
By \cref{eq:cofactor-ineq}, $m_{\{N,M\}}=\lambda(M)$.

Now fix a directed slide $N\to N'$ fixing the right hole $v$.
For an occurrence using it, write $(y,v)$ for the frontier
edge at the other end and form the complement
\begin{equation}
I\uplus F=N\uplus K\uplus\{(u,a),(y,v)\},
\qquad K\in\mathcal N^{(2)}(r,y;s,a).
\label{eq:two-ended-path-complement}
\end{equation}
The path coordinates give $a\neq v$, $a\neq s$, and $y\neq r$.
The complement preserves product weights:
\begin{equation}
\lambda(I)\lambda(F)
=\lambda(N)\lambda(K)\lambda(u,a)\lambda(y,v).
\label{eq:two-ended-path-weight}
\end{equation}
For fixed $N,N',y,K$, the reconstruction in
\cref{subsec:matching-identities} determines $(I,F)$ uniquely.
Thus
\begin{equation}
L_{N\to N'}^{r,s}
\leq\frac{\lambda(N)\lambda(u,a)}{Z_{r,s}Z_\emptyset}
\sum_y\lambda(y,v)Z^{(2)}_{r,y;s,a}.
\label{eq:HWS-directed-load}
\end{equation}
If $v=s$, the sum is $Z_{r,a}$ by cofactor expansion.
Otherwise, \cref{eq:one-row-four-hole} applies.
In both cases,
\[
\frac1{Z_{r,s}}\sum_y\lambda(y,v)Z^{(2)}_{r,y;s,a}
\leq2\frac{Z_{r,a}}{Z_{r,v}}
\leq4L_v^*(a),
\]
proving \cref{eq:two-ended-HWS-load}.
The second inequality in \cref{eq:HWS-score-ratio} then gives
$L_{N\to N'}^{r,s}\leq4\lambda(N)/Z_{u,v}$.
Since
$\lambda(N)\lambda(u,a)=\lambda(N')\lambda(x,a)$,
the same bound holds with $N'$ and its holes $(x,v)$.
Taking the smaller bound and adding the reverse directed
load proves \cref{eq:two-ended-load}.
Slides fixing the left hole follow by transposition.
\end{proof}

\subsection{The two fixed-pattern flow bounds}
\label{subsec:JSV-combining-bounds}
\label{subsec:HWS-flow}

We prove here \cref{lem:JSV-flow-summary,lem:HWS-flow-summary}, which we restate for convenience.

\jsvflowlemma*

\rsflowlemma*

\begin{proof}[Proof of
\cref{lem:JSV-flow-summary,lem:HWS-flow-summary}]
The family $\mathcal F_{r,s}$ has endpoint demand
$\kappa_{r,s}$ and maximum energy at most $1+\log n$.
For boosted JSV, \cref{eq:two-ended-load,eq:boosted-JSV-capacity}
give support congestion $O_c(n^3)$.
For HWS, \cref{eq:two-ended-HWS-load,eq:HWS-slide-capacity}
give support congestion $O_c(n^2)$ on slides, while the
terminal-addition load and \cref{eq:HWS-boundary-capacity}
give $O_c(n)$ on boundary transitions.
The factorization \cref{eq:energy-factorization} therefore gives
\[
\mathcal R_{\En}^{\mathrm J}(\mathcal F_{r,s})=O_c(n^3\log n),
\qquad
\mathcal R_{\En}^{\mathrm R}(\mathcal F_{r,s})=O_c(n^2\log n).
\]
Taking $c=2$ proves both flow summaries.
\end{proof}

For later use, applying \cref{lem:energy-MCF} gives, for every
$f:\Omega\to\mathbb R$ and every source hole pattern $(r,s)$,
\begin{align}
\E[(I,J)\sim\kappa_{r,s}]{(f(I)-f(J))^2}
&\leq C_c n^3\log n\,\Dir_{P_{\mathrm J}}(f,f),
\nonumber\\
\E[(I,J)\sim\kappa_{r,s}]{(f(I)-f(J))^2}
&\leq C_c n^2\log n\,\Dir_{P_{\mathrm R}}(f,f).
\label{eq:fixed-pattern-comparisons}
\end{align}
These are the inputs for the two occupation estimates.
To obtain the relaxation bounds, it remains to control
variation among perfect matchings.

\subsection{Comparing perfect matchings}
\label{subsec:perfect-cycle-comparison}

We next control variation within $\mathcal P$.

\begin{lemma}
\label{lem:perfect-sector-comparison}
For every $f:\Omega\to\mathbb R$,
\begin{equation}
\Var[\mu_\emptyset]{f}
\leq C_c n^3\Dir_{P_{\mathrm J}}(f,f),
\qquad
\Var[\mu_\emptyset]{f}
\leq C_c n^2\Dir_{P_{\mathrm R}}(f,f).
\label{eq:perfect-sector-comparison}
\end{equation}
\end{lemma}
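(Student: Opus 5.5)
We apply \cref{lem:energy-MCF} to the demand $\kappa(I,F)=\mu_\emptyset(I)\mu_\emptyset(F)/2$ on $\mathcal P\times\mathcal P$. The left-hand side of \cref{eq:coupled-energy-flow-bound} is then exactly $\Var[\mu_\emptyset]{f}$. It therefore suffices to build a routing family whose energy is $O(1)$ and whose support congestion is $O_c(n^3)$ for boosted JSV and $O_c(n^2)$ for HWS.

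\textbf{Routing.} For $I,F\in\mathcal P$, the symmetric difference $I\oplus F$ is a disjoint union of alternating cycles $C_1,\dots,C_m$, processed in a fixed canonical order. The plan is to use an ``all-roots'' version of \cref{lem:cycle-energy-overview}. Independently for each cycle, choose its starting $I$-edge uniformly. Switch the cycle by one deletion, then slides keeping the $V_2$ hole fixed, then a final addition. Between cycles the state is perfect.

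Consider the transitions used while processing $C_i$. The intermediate state records the already switched cycles ($F$ on $C_1,\dots,C_{i-1}$), the untouched cycles ($I$ on the rest), and a partial switch of $C_i$ determined by its root. Hence distinct roots of $C_i$ give disjoint transition sets, each used with probability $1/k_i$. Summing as in \cref{lem:cycle-energy-overview} gives energy $\sum_i(1+1/k_i)$. This is $O(m)$, which is not constant.

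To fix this, I would apply Cauchy--Schwarz cycle by cycle instead of over the whole path. Write $f(I)-f(F)=\sum_i\bigl(f(P_{i-1})-f(P_i)\bigr)$, where $P_i$ is the perfect matching after switching $C_1,\dots,C_i$. Bounding by $m\sum_i(\cdot)^2$ would lose a factor $m$, and $m$ can be $\Theta(n)$. The better choice is to avoid the sum entirely by comparing to a single cycle switch. Concretely, I would use a Poincar\'e comparison that charges each pair $(I,F)$ only to the first cycle: the usual path-coupling argument for variance. Order the cycles and build the sequence $I=P_0,\dots,P_m=F$. Then apply the weighted form with random ordering, giving the Efron--Stein-type bound $\Var\le\sum$ of single-cycle differences. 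If that becomes delicate, the fallback is simpler. Randomize the cycle order uniformly as well; each transition then lies on one cycle's switch. Apply \cref{lem:energy-MCF} directly: energy is $\sum_i(1+1/k_i)\cdot$(usage probability squared), and since each cycle's transitions are used with probability $1/k_i$, the energy is $\sum_i 2$. So we accept energy $O(m)$, but offset it by a support-congestion bound better by a factor comparable to the typical number of cycles. This fallback I am unsure of.

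\textbf{Congestion.} For a transition $T$ inside the switch of $C_i$, encode the pair by the complement $K=(I\uplus F)\setminus N$, as in \cref{subsec:matching-identities}. Here $K$ is a near-perfect matching, or perfect at boundary transitions. There is one active cycle, so there are at most two reconstructions, and the fixed $V_2$ hole pins down the root. Weight preservation gives $\lambda(I)\lambda(F)=\lambda(N)\lambda(K)\cdot(\text{one edge})$. Summing over $K$ with \cref{lem:cofactor} bounds the load by $O(m_T/Z_\emptyset)$, mirroring \cref{lem:two-ended-load}.

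Dividing by the capacities \cref{eq:boosted-JSV-capacity} gives $O_c(n^3)$. For HWS we use \cref{eq:HWS-slide-capacity}, with a directed-load bound analogous to \cref{eq:two-ended-HWS-load}, via \cref{eq:transposed-two-hole} and \cref{eq:HWS-score-ratio}; this gives $O_c(n^2)$.

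\textbf{Main obstacle.} The main obstacle is keeping the energy constant despite many cycles. I would first try to make all cycles' switches run ``in parallel levels'' with independent roots and argue that usage probabilities multiply to $\prod 1/k_i$-style spreading. Failing that, I would use the cycle-by-cycle telescoping above.
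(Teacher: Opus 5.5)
\textbf{There is a genuine gap, at the step you flag as the main obstacle.} Your committed routing processes the cycles in a fixed order with independent uniform roots. Its energy is $\sum_i(1+1/k_i)$, which is $\Theta(n)$ in the worst case, so \cref{lem:energy-MCF} gives at best $O_c(n^4)$ and $O_c(n^3)$. Neither repair you sketch is an actual argument. Nothing justifies ``charging each pair only to the first cycle.'' An Efron--Stein bound needs a product measure, and $\mu_\emptyset$ is not one.

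The missing idea is to condition on the multiset $U=I\uplus F$. If $U$ has $k$ nontrivial cycles, all $2^k$ colorings have the same product weight. Hence $I=I_x$ with $x$ uniform on $\{0,1\}^k$, and $F=I_{\bar x}$ is its antipode. For $g(x)=f(I_x)$, centering and tensorization for independent fair bits give
\[
\E{\bigl(g(x)-g(\bar x)\bigr)^2}\leq4\Var{g(x)}\leq\sum_{j=1}^k\E{\bigl(g(x)-g(x^{(j)})\bigr)^2}.
\]
This replaces the demand $(I,F)$ by single-cycle demands $(I_x,I_{x^{(j)}})$. Each keeps the coefficient $a_U=\lambda(I_x)\lambda(I_{\bar x})/Z_\emptyset^2$, with no factor for the number of cycles. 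Each is then routed by the all-roots routing of \cref{lem:cycle-energy-overview}, whose energy is at most $2$.

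\textbf{Your fallback fails for a different reason than you expect.} Randomizing the cycle order does make the energy $O(1)$. A transition switching $C_j$ after a set $S$ of cycles is used with probability $|S|!\,(m-1-|S|)!/(m!\,k_j)$, not $1/k_j$. The problem is that it destroys bounded-multiplicity reconstruction. Given $N$ and its complement $K$, every inactive cycle may now agree with either $I$ or $F$. So up to $2^{m}$ pairs of equal weight share each transition, and the support congestion blows up rather than improving.

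In the tensorized comparison, reconstruction stays bounded because each occurrence keeps the antipode $I_{\bar x}$ as auxiliary data. Then $I_x\uplus I_{\bar x}=N\uplus K\uplus\{(u,a),(y,v)\}$ with $K\in\mathcal N(y,a)$, and $N$ agrees with the source $I_x$ on every inactive cycle.

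\textbf{The load bound also needs more than \cref{lem:cofactor}.} The directed load is $O\bigl(\lambda(N)\lambda(u,a)Z_\emptyset^{-2}\sum_y\lambda(y,v)Z_{y,a}\bigr)$. For boosted JSV, bringing this down to $O(\lambda(N)/Z_{u,v})$ requires the switching inequality \cref{eq:transposed-two-hole} together with \cref{eq:cofactor-ineq}, not just cofactor expansion.
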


\begin{proof}
\textbf{Reducing to one cycle.}
For two independent perfect matchings $I,F\sim\mu_\emptyset$,
condition on the multiset $U:=I\uplus F$.
Fix an ordering of its nontrivial alternating cycles and
of the two colorings of each cycle.
All colorings have the same product weight, so the orientation
vector $x\in\{0,1\}^k$ is uniform.
Write $I_x$ for the corresponding matching and $\bar x$ for
the bitwise complement of $x$, so that $F=I_{\bar x}$.
For every $g:\{0,1\}^k\to\mathbb R$, variance tensorization gives
\[
\E{(g(x)-g(\bar x))^2}
\leq4\Var{g(x)}
\leq\sum_{j=1}^k\E{(g(x)-g(x^{(j)}))^2}.
\]
Equivalently,
\begin{equation}
\sum_x(g(x)-g(\bar x))^2
\leq2\sum_{j=1}^k\sum_{x:\,x_j=0}
       (g(x)-g(x^{(j)}))^2.
\label{eq:Walsh-reduction}
\end{equation}
Here $x^{(j)}$ flips coordinate $j$.
The first inequality follows by centering $g$; the second
is variance tensorization for independent fair bits.

For fixed $U$, every orientation has the same coefficient
\[
a_U:=\frac{\lambda(I_x)\lambda(I_{\bar x})}{Z_\emptyset^2}.
\]
The contribution of $U$ to $\Var[\mu_\emptyset]{f}$ is
$(a_U/2)\sum_x(f(I_x)-f(I_{\bar x}))^2$.
By \cref{eq:Walsh-reduction}, it is bounded by the sum of
single-cycle comparisons between $I_x$ and $I_{x^{(j)}}$,
for $x_j=0$, each with coefficient $a_U$.
Retain $U,j,x$ as the occurrence data.
There is no additional factor for the number of cycles.

\paragraph{Routing a single cycle.}
Use the all-roots routing of \cref{lem:cycle-energy-overview}.
For completeness, if its endpoints $I,J$ differ on a cycle
with $m$ edges of each matching, write
\[
I\cap C=\{e_j=(a_j,b_j):0\leq j<m\},
\qquad
J\cap C=\{g_j=(a_j,b_{j+1}):0\leq j<m\},
\]
with indices modulo $m$.
For root $r$, delete $e_r$, move the left hole around the
cycle while keeping $b_r$ unmatched, and finally add the
remaining edge of $J$.
The intermediate matchings are
\[
N_{r,s}:=
\bigl(I\setminus\{e_r,\ldots,e_{r+s}\}\bigr)
\cup\{g_r,\ldots,g_{r+s-1}\},
\qquad 0\leq s<m.
\]
Their holes are $a_{r+s},b_r$, so the route has $m+1$ legal
transitions for both chains.
Give each root probability $1/m$.
For fixed endpoints, a slide identifies the root by its
fixed right hole and the stage by its moving left hole;
a boundary edge identifies the root as well.
Thus the rooted routes have disjoint transition sets and
\[
\En(\theta_{I,J})=\frac{m(m+1)}{m^2}\leq2.
\]

\paragraph{Occurrence loads.}
Fix a directed slide $N\to N'$ fixing the right hole $v$
and inserting $(u,a)$.
For a contributing occurrence, let $F$ denote the perfect
matching complementary to its source orientation in $U$,
and let $(y,v)$ be the other frontier edge of the active cycle.
Then
\[
I\uplus F=N\uplus K\uplus\{(u,a),(y,v)\},
\qquad K\in\mathcal N(y,a),
\]
and
\[
\lambda(I)\lambda(F)
=\lambda(N)\lambda(K)\lambda(u,a)\lambda(y,v).
\]
The inserted edge identifies the active cycle in the
reconstructed multiset.
The current matching determines all inactive orientations,
and at most two active orientations remain.
The transition then determines the root and stage.
Consequently, the directed support load is at most
\begin{equation}
C\frac{\lambda(N)\lambda(u,a)}{Z_\emptyset^2}
  \sum_y\lambda(y,v)Z_{y,a}.
\label{eq:perfect-cycle-directed-load}
\end{equation}

For boosted JSV, \cref{eq:transposed-two-hole,eq:cofactor-ineq}
give
\[
\lambda(u,a)\sum_y\lambda(y,v)Z_{y,a}
\leq\frac{2\lambda(u,a)Z_{u,a}Z_\emptyset}{Z_{u,v}}
\leq\frac{2Z_\emptyset^2}{Z_{u,v}}.
\]
Thus the load is $O(\lambda(N)/Z_{u,v})$.
The same bound with $N'$ follows from
$\lambda(N)\lambda(u,a)=\lambda(N')\lambda(x,a)$.
Adding the two directions gives an undirected load
$O(m_T/Z_\emptyset)$ and hence support congestion
$O_c(n^3)$ by \cref{eq:boosted-JSV-capacity}.

For HWS, \cref{eq:HWS-score-ratio,eq:cofactor-id} give
\[
\sum_y\lambda(y,v)Z_{y,a}
\leq2L_v^*(a)\sum_y\lambda(y,v)Z_{y,v}
=2L_v^*(a)Z_\emptyset.
\]
Comparing \cref{eq:perfect-cycle-directed-load} with
\cref{eq:HWS-slide-capacity} therefore gives slide support
congestion $O_c(n^2)$, including the reverse direction.

For a boundary transition producing or leaving
$M\in\mathcal P$, the complement $(I\uplus F)\setminus M$
is perfect.
The boundary edge identifies the active cycle, and the
same reconstruction has bounded multiplicity.
Its total load is therefore at most
\[
C\frac{\lambda(M)}{Z_\emptyset^2}
  \sum_{K\in\mathcal P}\lambda(K)
=C\frac{\lambda(M)}{Z_\emptyset}.
\]
The boundary support congestion is $O_c(n^3)$ for boosted
JSV and $O_c(n)$ for HWS, by
\cref{eq:boosted-JSV-capacity,eq:HWS-boundary-capacity}.
The maximum routing energy is at most two.
Applying \cref{lem:energy-MCF} to the reduced single-cycle
comparisons proves both assertions.
\end{proof}

\subsection{Completing the relaxation bounds}
\label{subsec:JSV-HWS-relaxation}

Here we prove the new relaxation time bounds for the boosted JSV chain and the new HWS chain; we restate both lemmas for convenience.

\jsvrelaxlemma*

\rsrelaxlemma*

\begin{proof}[Proof of \cref{lem:jsv-relaxation,lem:hws-relaxation}]
Let $P$ be either chain, with stationary distribution $\pi$,
and put
\[
m:=\E[\mu_\emptyset]{f},
\qquad
p_{u,v}:=\pi(\mathcal N(u,v)).
\]
For every source hole pattern, the coupling $\kappa_{u,v}$
has marginals $\mu_{u,v}$ and $\mu_\emptyset$.
Thus
\[
\E[\mu_{u,v}]{(f-m)^2}
\leq
2\E[(I,J)\sim\kappa_{u,v}]{(f(I)-f(J))^2}
+2\Var[\mu_\emptyset]{f}.
\]
Averaging over the hole patterns gives
\begin{align*}
\Var[\pi]{f}
&\leq\E[\pi]{(f-m)^2}\\
&\leq
2\sum_{u,v}p_{u,v}
  \E[(I,J)\sim\kappa_{u,v}]{(f(I)-f(J))^2}
+(2-\pi(\mathcal P))\Var[\mu_\emptyset]{f}.
\end{align*}
Since $\sum_{u,v}p_{u,v}\leq1$, the fixed-pattern comparisons
in \cref{eq:fixed-pattern-comparisons} and the perfect-matching
bounds in \cref{lem:perfect-sector-comparison} imply
\[
\Var[\pi_{\mathrm J}]{f}
\leq C_c n^3\log n\,\Dir_{P_{\mathrm J}}(f,f),
\qquad
\Var[\pi_{\mathrm R}]{f}
\leq C_c n^2\log n\,\Dir_{P_{\mathrm R}}(f,f).
\]
The Poincar\'e characterization
\cref{eq:poincare-characterization} proves both relaxation bounds.
\end{proof}

\section{Proofs for the permanent algorithms}
\label{sec:missing}

In this section we complete the detailed proofs of \cref{thm:main-improved,thm:main-n4}, and provide a sketch for the standard extension to obtain \cref{cor:weighted-main}.
We first establish the cooling and sampling estimates used by
both algorithms, and then analyze the fresh checkpoint traversals
from \cref{sec:improved-algorithm}.
Throughout, $C$ denotes a sufficiently large universal constant;
its value may differ between occurrences.

\subsection{The cooling schedule}
\label{sub:cooling-proofs}

\begin{proof}[Proof of \cref{lem:cooling-schedule}]
Let $B:=-\log\lambda_L$ be the endpoint specified in the lemma,
$\Delta:=2B$, and $\xi:=1/4$.
For each perfect or near-perfect hole-pattern set $S$, put
\[
H(M):=|M\setminus E|,
\qquad
Z_S(\beta):=\sum_{M\in S}e^{-\beta H(M)},
\qquad
z_S(\beta):=\log Z_S(\beta).
\]
Every $S$ contains a matching with at most one nonedge.
For $\mathcal P$, use $M^*$.
For holes $(u,v)$, delete $(u,v)$ from $M^*$ if it is present;
otherwise delete its edges $(u,v')$ and $(u',v)$ and insert
$(u',v')$.
Consequently,
\begin{equation}
0\leq z_S(0)-z_S(B)\leq\log(n!)+B\leq\Delta.
\label{eq:binary-cooling-log-drop}
\end{equation}

Set $\beta_0=0$ and, until $\beta_i=B$, define
\begin{equation}
d_i:=\min\left\{B-\beta_i,
\xi\max\left\{\frac1n,\frac{\beta_i}{\Delta}\right\}\right\},
\qquad
\beta_{i+1}:=\beta_i+d_i,
\qquad
\lambda_i:=e^{-\beta_i}.
\label{eq:binary-cooling-step}
\end{equation}
There are $O(\Delta)$ constant-length steps before
$\beta_i$ exceeds $\Delta/n$; afterwards each full step
multiplies $\beta_i$ by $1+\xi/\Delta$.
Thus
\begin{equation}
L=O\left(\Delta\log\left(2+\frac{Bn}{\Delta}\right)\right)
=O(n\log^2 n).
\label{eq:binary-cooling-length}
\end{equation}
The functions $z_S$ are nonincreasing and convex, and
\[
0\leq-z_S'(\beta)\leq n,
\qquad
\beta(-z_S'(\beta))
\leq z_S(0)-z_S(\beta)\leq\Delta
\quad(0<\beta\leq B).
\]
The step rule therefore gives
\begin{equation}
0\leq z_S(\beta_i)-z_S(\beta_{i+1})\leq\xi.
\label{eq:binary-cooling-drop}
\end{equation}
Taking the ratio of the perfect and near-perfect partition
ratios shows that
$w_{i+1}^*(u,v)/w_i^*(u,v)\in[e^{-\xi},e^\xi]
\subseteq[3/4,4/3]$.

To bound the whole-product second moment, abbreviate
$Z=Z_{\mathcal P}$ and $z=z_{\mathcal P}$.
Successive step lengths satisfy
\begin{equation}
d_{i+1}\leq r d_i,
\qquad r:=1+\frac{\xi}{\Delta}.
\label{eq:binary-cooling-increments}
\end{equation}
Indeed, the full step length increases by at most
$\xi d_i/\Delta$, and truncating the last step only decreases it.
For independent $M_i\sim\mu_i$, put
$Y_i=e^{-d_iH(M_i)}$ and $q_i=z(\beta_i)-z(\beta_{i+1})$.
Then
\[
\E{Y_i}=\frac{Z(\beta_{i+1})}{Z(\beta_i)},
\qquad
\log\frac{\E{Y_i^2}}{\E{Y_i}^2}
=q_i-\bigl(z(\beta_{i+1})-z(\beta_{i+1}+d_i)\bigr).
\]
For $i<L-1$, convexity and \cref{eq:binary-cooling-increments}
give
\[
z(\beta_{i+1})-z(\beta_{i+1}+d_i)
\geq\frac{z(\beta_{i+1})-z(\beta_{i+1}+rd_i)}r
\geq\frac{q_{i+1}}r.
\]
Since $q_i\leq\xi$ and $\sum_iq_i\leq\Delta$, independence yields
\begin{equation}
\log\frac{\E{Y^2}}{\E{Y}^2}
\leq q_0+\left(1-\frac1r\right)\sum_{i=1}^{L-1}q_i
\leq2\xi=\frac12.
\label{eq:binary-cooling-second-moment}
\end{equation}
Also $\E{Y}=Z_L/Z_0$, proving the lemma.
Every prefix product has the same second-moment bound,
since its relative second moment is a subproduct of factors
that are all at least one.
\end{proof}

We will use the following general form for nonnegative inputs.

\begin{lemma}[General cooling schedule]
\label{lem:general-cooling-schedule}
Suppose $H:\Omega\to[0,H_{\max}]$, with $H_{\max}>0$, and put
$Z_S(\beta)=\sum_{M\in S}e^{-\beta H(M)}$ and
$z_S(\beta)=\log Z_S(\beta)$ for every hole-pattern set $S$.
If $B>0$ and $\Delta\geq1$ satisfy
\begin{equation}
\forall S,\qquad
z_S(0)-z_S(B)\leq\Delta,
\label{eq:general-cooling-assumption}
\end{equation}
then the rule
\begin{equation}
\beta_0=0,
\qquad
d_i=\min\left\{B-\beta_i,
\frac14\max\left\{H_{\max}^{-1},\frac{\beta_i}{\Delta}\right\}\right\},
\qquad
\beta_{i+1}=\beta_i+d_i
\label{eq:general-cooling-step}
\end{equation}
gives a schedule ending at $B$ with
\begin{equation}
L=O\left(\Delta\log\left(2+\frac{BH_{\max}}{\Delta}\right)\right).
\label{eq:general-cooling-length}
\end{equation}
Every consecutive partition ratio belongs to $[e^{-1/4},1]$.
The ideal hole weights change by factors in $[3/4,4/3]$, and
the whole-product estimator and all its prefixes have relative
second moment at most $e^{1/2}<64/27$.
\end{lemma}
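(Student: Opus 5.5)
The plan is to repeat the proof of \cref{lem:cooling-schedule}, with $n$ replaced by $H_{\max}$ and the specific endpoint $B$ replaced by the hypothesis \cref{eq:general-cooling-assumption}. The basic analytic facts come first. Each $z_S$ is a log-Laplace transform of the nonnegative function $H$ restricted to $S$, so it is convex and nonincreasing. Moreover $-z_S'(\beta)$ is the mean of $H$ under the Gibbs measure on $S$, which lies in $[0,H_{\max}]$. Convexity also gives, for $0<\beta\le B$,
\[
\beta\,(-z_S'(\beta))\le z_S(0)-z_S(\beta)\le z_S(0)-z_S(B)\le\Delta .
\]
Hence $-z_S'(\beta)\le\min\{H_{\max},\Delta/\beta\}$.

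\textbf{Schedule length.} While $\beta_i\le\Delta/H_{\max}$, every untruncated step has length $\tfrac14 H_{\max}^{-1}$. This phase therefore takes at most $4\Delta+1$ steps. Once $\beta_i>\Delta/H_{\max}$, each untruncated step multiplies $\beta$ by $1+1/(4\Delta)$. Reaching $B$ from $\Delta/H_{\max}$ then takes $O(\Delta\log(2+BH_{\max}/\Delta))$ further steps. The final step is truncated at $B$, so the schedule ends exactly at $B$, which gives \cref{eq:general-cooling-length}.

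\textbf{One-step drops.} By convexity, the drop over $[\beta_i,\beta_i+d_i]$ is at most $d_i(-z_S'(\beta_i))$. We bound this by $d_i\min\{H_{\max},\Delta/\beta_i\}$. If the maximum defining $d_i$ is $H_{\max}^{-1}$, the product is at most $d_iH_{\max}\le1/4$. Otherwise it is at most $d_i\Delta/\beta_i\le1/4$. (At $\beta_0=0$ only the first bound is needed.) Consequently $0\le z_S(\beta_i)-z_S(\beta_{i+1})\le1/4$ for every $S$. With $S=\mathcal P$ this gives partition ratios in $[e^{-1/4},1]$. The ideal weight $w^*(u,v)$ is the ratio $Z_{\mathcal P}/Z_{\mathcal N(u,v)}$, and both partition functions drop by factors in $[e^{-1/4},1]$. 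So $w^*$ changes by a factor in $[e^{-1/4},e^{1/4}]\subseteq[3/4,4/3]$.

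\textbf{Second moment.} This is the main step, and it is handled as before. Each step increases by at most $d_i/(4\Delta)$, because of the $\beta_i/\Delta$ branch; truncation only shortens the last step. This gives $d_{i+1}\le r d_i$ with $r=1+1/(4\Delta)$. For $Y_i=e^{-d_iH(M_i)}$, the log relative second moment equals $q_i-(z(\beta_{i+1})-z(\beta_{i+1}+d_i))$. The point $\beta_{i+1}+d_i$ may exceed $B$, but $z$ is defined and convex on all of $[0,\infty)$, so this causes no problem. Convexity gives $z(\beta_{i+1})-z(\beta_{i+1}+d_i)\ge q_{i+1}/r$. Summing over $i$ by independence, the total is at most
\[
q_0+(1-1/r)\sum_{i\ge1}q_i\le\tfrac14+\tfrac{1}{4\Delta+1}\Delta\le\tfrac12 ,
\]
using $\sum_i q_i\le\Delta$ and $q_i\le1/4$. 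Exponentiating gives the bound $e^{1/2}<64/27$. For a prefix product, drop the terms beyond the prefix. The per-factor quantities $\log\E{Y_i^2}/\E{Y_i}^2$ are nonnegative by Jensen, so the prefix's relative second moment is at most the full product's, and the same bound holds.
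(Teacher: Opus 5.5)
Your proposal is correct and follows the paper's proof. The paper also reruns the argument for \cref{lem:cooling-schedule} with $n$ replaced by $H_{\max}$ and step-growth factor $r=1+1/(4\Delta)$, and you have simply written out its steps: the convexity derivative bounds, the two-phase length count, the per-step drop of at most $1/4$, and the telescoping second-moment bound, with the prefix claim following from factors that are at least one.
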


\begin{proof}
Replace $n$ by $H_{\max}$ in the preceding derivative bounds
and step rule.  The same length calculation applies, and
$d_{i+1}\leq(1+1/(4\Delta))d_i$.
The consecutive-ratio and second-moment calculations are
otherwise unchanged.
\end{proof}

\subsection{Learning weights and sampling perfect matchings}
\label{subsec:improved-technical-proofs}

For the remainder of the section, set
\[
\tau:=C_\tau n^2\log n,
\qquad
\kappa:=\frac1{32},
\]
where $C_\tau$ is large enough for the HWS relaxation bound.
For every rough weight table, write
$\pi=\pi^{\mathrm{HWS}}_{\lambda,w}$ and
$\mu=\pi(\,\cdot\mid\mathcal P)$.
Then $\pi(\mathcal P)\geq\kappa$.
We use throughout the fact that changing an initial distribution
by total variation distance $\eta$ changes the probability of
any event determined by the subsequent trajectory by at most
$\eta$.

\paragraph{Frequency estimation from a perfect start.}
For every trajectory event $\mathcal E$,
\[
\Pr[X_0\sim\mu]{\mathcal E}
\leq\kappa^{-1}\Pr[X_0\sim\pi]{\mathcal E},
\]
because $\mu(M)\leq\pi(M)/\kappa$.
Thus \cref{lem:HWS-occupation-estimation} and Chebyshev's
inequality imply that a trajectory of length
$T_0=Cn^2\log n$, started either from $\pi$ or from $\mu$,
estimates any fixed hole-pattern probability to relative error
$1/10$ except with probability $1/8$.
An initial total variation error at most $1/8$ increases this
failure probability to at most $1/4$.
For $b$ independent batches, let $\widehat p_S$ be the median
frequency of pattern $S$.
A Chernoff bound gives
\begin{equation}
\Pr{|\widehat p_S-p_S|>p_S/10}\leq e^{-b/8}.
\label{eq:HWS-median-accuracy}
\end{equation}
When every median is accurate, the update in
\cref{eq:HWS-empirical-weight-update} satisfies
\begin{equation}
\frac{w_{\mathrm{new}}(u,v)}{w^*(u,v)}
=
\frac{\widehat p_{\mathcal P}/p_{\mathcal P}}
     {\widehat p_{u,v}/p_{u,v}}
\in\left[\frac9{11},\frac{11}9\right]
\subseteq[1/\sqrt2,\sqrt2].
\label{eq:HWS-recalibration-accuracy}
\end{equation}
The algorithm returns zero if a required median is zero.

\begin{lemma}[Perfect-output sampling]
\label{lem:HWS-perfect-sampling}
Fix $0\leq c\leq i\leq L$ with $\mathcal B(c,i)\leq64$, and
weights $w$ that are rough at $\lambda_i$.
Start HWS at $(\lambda_i,w)$ from $X_0\sim\mu_c$.
For $0<\eta<1$, inspect the chain after each block of
$C\tau\log(2/\eta)$ transitions and return the first perfect
matching observed.  Return a failure symbol $\dagger$ after
$C\log(2/\eta)$ unsuccessful inspections.
The output distribution $\nu$, including failure, satisfies
\[
\|\nu-\mu_i\|_{\mathrm{TV}}\leq\eta,
\]
where $\mu_i(\dagger)=0$.
The procedure uses at most
$O(\tau\log^2(2/\eta))$ transitions.
\end{lemma}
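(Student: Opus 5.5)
The plan is to treat the procedure as a sequence of at most $C\log(2/\eta)$ inspection times $t_k:=k\,C\tau\log(2/\eta)$, and to control two things: how close the state at each inspection is to $\pi:=\pi^{\mathrm{HWS}}_{\lambda_i,w}$, and how close successive inspections are to being independent. The starting point is a warm start. Since $X_0\sim\mu_c$ is supported on $\mathcal P$, and $\pi(M)=\pi(\mathcal P)\mu_i(M)$ with $\pi(\mathcal P)\geq\kappa$, we get
\[
1+\chi^2(\mu_c\Vert\pi)=\sum_{M\in\mathcal P}\frac{\mu_c(M)^2}{\pi(\mathcal P)\mu_i(M)}\leq\kappa^{-1}\mathcal B(c,i)\leq 32\cdot64 .
\]
Combining this with \cref{lem:hws-relaxation} and \cref{eq:mixing-from-relaxation-warm-start}, a block of $C\tau\log(2/\eta)$ transitions brings the law of $X_{t_1}$ within total variation $\eta^{C'}$ of $\pi$, for a large constant $C'$, once $C$ is large enough.

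Next I would handle the later inspections. Every distribution $\nu$ on $\Omega$ satisfies $\chi^2(\nu\Vert\pi)\leq 1/\pi_{\min}$, and that bound is too large to use directly. Instead I would argue through the true stationary process. Let $Y_0\sim\pi$ and couple $(X_t)_{t\geq t_1}$ with the stationary chain $(Y_t)_{t\geq t_1}$ so that the two trajectories agree after $t_1$ except with probability $\eta^{C'}$; a maximal coupling at time $t_1$ followed by identical transitions does this. Along the stationary chain, each $Y_{t_k}$ is exactly $\pi$-distributed. Moreover, conditional on $Y_{t_{k-1}}=y$, the law of $Y_{t_k}$ is $P^{\text{block}}(y,\cdot)$. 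I would then use $\|\pi P^{s}-\ldots\|$ in the averaged form
\[
\sum_y\pi(y)\,\|P^{s}(y,\cdot)-\pi\|_{\mathrm{TV}}\leq\tfrac12 e^{-s/\trel}\sqrt{\textstyle\sum_y\pi(y)\chi^2(\delta_y\Vert\pi)}
\]
with care. To avoid the $\pi_{\min}$ issue, I would replace this step by a direct spectral bound on the indicator $g=\one_{\mathcal P}-\pi(\mathcal P)$. By reversibility, the probability that all inspections miss $\mathcal P$ is bounded by $\|(D P^{s})^{K}\|$ type quantities, where $D$ is the projection onto $\Omega\setminus\mathcal P$. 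Since $P^s$ has gap close to $1$ for $s\gg\trel$ and $\pi(\Omega\setminus\mathcal P)\leq1-\kappa$, the standard hitting bound gives
\[
\Pr{Y_{t_1},\dots,Y_{t_K}\notin\mathcal P}\leq(1-\kappa+e^{-s/\trel})^{K-1}\leq\eta^{2}
\]
for $K=C\log(2/\eta)$. This step, controlling failure without a product of weak bounds, is where I expect the main work to be.

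For the output law, I would decompose over the index $k$ of the first successful inspection. On the event $\{Y_{t_k}\in\mathcal P,\ Y_{t_j}\notin\mathcal P\ (j<k)\}$, I need the conditional law of $Y_{t_k}$ to be close to $\mu_i$. I would get this by the Markov property. The past event is measurable with respect to $Y_{t_{k-1}}$ and earlier, and the law of $Y_{t_k}$ given $Y_{t_{k-1}}=y$ is $P^s(y,\cdot)$. The averaged quantity $\sum_y\nu_{k-1}(y)\|P^s(y,\cdot)-\pi\|$ is then controlled because $\nu_{k-1}$, the sub-probability law of $Y_{t_{k-1}}$ restricted to the past event, is dominated by $\pi$ (it is at most $\pi$ pointwise, being the stationary law intersected with an event). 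This gives $\chi^2$ relative to $\pi$ at most $1/\pi(\cdot)$-free after normalising by its mass, since $\nu_{k-1}\leq\pi$ implies $\chi^2(\bar\nu\Vert\pi)\leq1/\nu_{k-1}(\Omega)$. The resulting error is $e^{-s/\trel}$ times $\sqrt{1/\text{mass}}$, and multiplied by the mass this is at most $e^{-s/\trel}$. Summing over $k\leq K$ errors of order $K\eta^{C'}$, adding the failure probability $\eta^2$ and the coupling error $\eta^{C'}$, and conditioning on $\mathcal P$ (which turns $\pi$ into $\mu_i$ by \cref{eq:HWS-perfect-conditional}) gives $\|\nu-\mu_i\|_{\mathrm{TV}}\leq\eta$. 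The transition count is $K\cdot C\tau\log(2/\eta)=O(\tau\log^2(2/\eta))$.
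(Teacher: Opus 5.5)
Your proof is correct, but it is organized differently from the paper's. Write $s$ for the block length.

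\textbf{The paper's route.} The paper never passes to a stationary chain. It follows the unnormalized killed density $h_j$, with $h_0=\mu_c/\pi$ and $\|h_0\|_{2,\pi}\leq 64$, through $h_{j+1}=\one_{\mathcal P^c}P^sh_j$. It writes $P^sh_j=m_j+e_j$ with $m_j=\E[\pi]{h_j}$ and $\|e_j\|_{2,\pi}\leq e^{-s/(2\tau)}\|h_j\|_{2,\pi}$, and reads off both conclusions from this one recursion:
\begin{itemize}
\item $\|h_j\|_{2,\pi}$ decays at rate $\sqrt{1-\pi(\mathcal P)}+e^{-s/(2\tau)}$, which bounds the failure probability;
\item the accepted mass is $\one_{\mathcal P}\sum_j m_j$, an exact multiple of $\mu_i$, plus the summable errors $\one_{\mathcal P}e_j$.
\end{itemize}
Its only input is variance contraction. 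It needs neither a coupling nor a separate hitting estimate.

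\textbf{Your route.} You spend the first block turning the warm start into near-stationarity, at the cost of one total variation term, and then couple to a stationary trajectory. After that, every killed sub-law $\nu_{k-1}$ is pointwise at most $\pi$. Hence its normalization has $\chi^2$ at most the reciprocal of its mass. Each inspection then contributes bias at most $e^{-s/\trel}$ in $L^1$, however little mass survives, and no $L^2$ norms need to be propagated through the killing.

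Your failure bound $(\pi(\mathcal P^c)+e^{-s/\trel})^{K-1}$ is the expander hitting bound, and it is valid. The rank-one part $f\mapsto\one_{\mathcal P^c}\E[\pi]{\one_{\mathcal P^c}f}$ of $\one_{\mathcal P^c}P^s\one_{\mathcal P^c}$ has norm $\pi(\mathcal P^c)$. By laziness, $P^s$ has norm at most $e^{-s/\trel}$ on mean-zero functions. You do not actually need this lemma. Let $A_k$ be the event that the first $k$ inspections all miss $\mathcal P$. Your own domination argument already gives $\Pr{A_k}\leq(1-\kappa)\Pr{A_{k-1}}+\tfrac12e^{-s/\trel}$, which suffices.

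\textbf{Two cleanups.}
\begin{itemize}
\item Delete the abandoned averaged-form display; it would reintroduce $\pi_{\min}$.
\item Take $C$ large enough that the failure probability is at most $\eta/4$ rather than $\eta^2$. Otherwise the final sum can exceed $\eta$ when $\eta$ is close to $1$.
\end{itemize}
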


\begin{proof}
Write $P=P_{\mathrm{HWS}}(\lambda_i,w)$,
$\pi=\pi^{\mathrm{HWS}}_{\lambda_i,w}$, and
$p=\pi(\mathcal P)\geq\kappa$.
Extend $\mu_c$ by zero outside $\mathcal P$.
For the density $h_0=\mu_c/\pi$,
\[
\|h_0\|_{2,\pi}^2
:=\E[\pi]{h_0^2}
=\frac{\mathcal B(c,i)}p\leq2048<64^2.
\]
Laziness and the Poincar\'e inequality give
\begin{equation}
\Var[\pi]{P^t f}
\leq e^{-t/\tau}\Var[\pi]{f}.
\label{eq:HWS-variance-contraction}
\end{equation}
Indeed, writing $P=(\mathrm{Id}+Q)/2$ with $Q$ a stationary
Markov matrix, Jensen's inequality gives
$\Var[\pi]{Pf}\leq\Var[\pi]{f}-\Dir_P(f,f)$; iterate the
Poincar\'e inequality.

Put $t=C\tau\log(2/\eta)$, $q=C\log(2/\eta)$, and
$\theta=e^{-t/(2\tau)}$.
Let $h_j$ be the unnormalized density after $j$ unsuccessful
inspections.  Reversibility gives
\[
h_{j+1}=\one_{\mathcal P^c}P^th_j,
\qquad
P^th_j=m_j+e_j,
\qquad
m_j=\E[\pi]{h_j},
\qquad
\|e_j\|_{2,\pi}\leq\theta\|h_j\|_{2,\pi}.
\]
For sufficiently large $C$, $\theta\leq\kappa/4$, so
\[
\|h_{j+1}\|_{2,\pi}
\leq(\sqrt{1-p}+\theta)\|h_j\|_{2,\pi}
\leq r\|h_j\|_{2,\pi},
\qquad r:=1-\kappa/4.
\]
The failure probability is therefore at most $64r^q$.
The unnormalized density of the accepted output is
\[
\one_{\mathcal P}\sum_{j<q}m_j
+\one_{\mathcal P}\sum_{j<q}e_j.
\]
The second term has $L^1(\pi)$ norm at most
$64\theta/(1-r)$; the first is constant on $\mathcal P$.
Comparing its total mass with the density
$\one_{\mathcal P}/p$, and including the failure symbol, gives
\[
\|\nu-\mu_i\|_{\mathrm{TV}}
\leq64r^q+\frac{64\theta}{1-r}\leq\eta
\]
for sufficiently large $C$.
The transition count is at most $qt$.
\end{proof}

\subsection{The counting phase}
\label{sub:Phase2-prop-proof}

The following argument applies to either chain and will also
allow a bounded terminal correction for nonnegative inputs.

\begin{lemma}[Counting from perfect starts]
\label{lem:counting-perfect-starts}
At each activity $\lambda_i$, let $P_i$ be a fixed lazy
reversible chain with stationary distribution $\pi_i$, where
$\pi_i(\mathcal P)\geq\kappa$,
$\pi_i(\,\cdot\mid\mathcal P)=\mu_i$, and
$\trel(P_i)\leq\tau_*$.
Suppose the whole-product relative second moment is at most
$64/27$, and let $G:\mathcal P\to[0,1]$ satisfy
$p^*:=\E[\mu_L]{G}\geq1/2$.
From independent starts $M_{i,0}\sim\mu_i$, use the block-and-inspect
procedure with block length $s=C\tau_*$ and average
$N=C\varepsilon^{-2}$ whole-product observations and terminal
values $G(M_{L,t})$.
Stop and return zero after $C\tau_*(L+1)N$ transitions in total.
Then $Z_0\widehat Q\widehat p^{\,*}$ estimates $Z_Lp^*$ to
relative error $\varepsilon$, except with probability at most
$1/16$.
An error $\zeta$ in the joint initial distribution adds at most
$\zeta$ to this failure probability.
\end{lemma}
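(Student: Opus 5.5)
We will analyse an idealized version of the procedure in which every inspection step returns an exact sample from $\mu_i$. We then charge the deviation from this ideal to total variation, exactly as in \cref{lem:HWS-perfect-sampling}.

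Fix an activity $i$ and consider the chain at $\lambda_i$ started from $M_{i,0}\sim\mu_i$. Because $\mu_i\le\pi_i/\kappa$, the $L^2(\pi_i)$ norm of the initial density is $O(1)$, and the same holds for the density of the chain conditioned on being in $\mathcal P$ at any inspection time. We then take the argument in the proof of \cref{lem:HWS-perfect-sampling}, with $\tau_*$ in place of $\tau$ and with $\eta$ a small constant. By the variance contraction \cref{eq:HWS-variance-contraction}, once the block length $s=C\tau_*$ is large, the density after each block is within a small $L^2$ error $\theta$ of the constant density. Two consequences follow. First, each round ends at a perfect matching after a geometric number of blocks, with success probability at least $\kappa/2$ per block. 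Second, conditioned on the whole past, the retained matching has a law whose density relative to $\mu_i$ is $1+O(\theta)$ in $L^2(\mu_i)$ norm.

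The key step is to control the variance of the average of the $N$ whole-product observations $Y_t=\prod_i \lambda_{i+1}(M_{i,t})/\lambda_i(M_{i,t})$. The chains at different activities are independent, so $Y_t$ is a product of independent factors $g_i(M_{i,t})$ with $g_i=\lambda_{i+1}/\lambda_i$. We compute $\E{Y_t}$ and $\E{Y_tY_{t'}}$ factor by factor. For a single chain, the correlation $\Cov(g_i(M_{i,t}),g_i(M_{i,t'}))$ relative to $\E[\mu_i]{g_i}^2$ decays like $\theta^{|t-t'|}$ times the relative variance $\Var[\mu_i]{g_i}/\E[\mu_i]{g_i}^2$; this uses the $L^2$ contraction at each intervening block. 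Consequently
\[
\frac{\E{Y_tY_{t'}}}{\E{Y}^2}\le\prod_i\Bigl(1+\theta^{|t-t'|}v_i+O(\theta)\,\text{bias}\Bigr),
\qquad
\prod_i(1+v_i)\le\tfrac{64}{27}.
\]
Here $v_i$ is the relative variance of the $i$th factor. The bias terms must be summable across the $L$ activities. To arrange this, we choose the per-activity accuracy to be $\eta/L$ rather than a constant, which costs only a $\log L$ factor absorbed into $C$. Alternatively, we bound the bias through the exact identity that the retained law is $\mu_i$ plus an $L^2$ perturbation of size $\theta$, and take $\theta\le 1/L$ by enlarging $C$ logarithmically.

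Summing over pairs then gives $\Var{\widehat Q}/Q^2=O(1/N)+o(1)$, and Chebyshev's inequality with $N=C\varepsilon^{-2}$ gives relative error $\varepsilon/3$ with probability at least $1-1/64$. The terminal average $\widehat p^{\,*}$ is handled in the same way, with $G\in[0,1]$ and $p^*\ge1/2$. Combining the two estimates gives relative error $\varepsilon$. For the cutoff, the total number of blocks is a sum of $(L+1)N$ independent geometric variables with mean at most $2/\kappa$, so Markov's or Chernoff's inequality bounds the probability of exceeding $C\tau_*(L+1)N$ transitions by $1/64$. The claim about an error $\zeta$ in the joint initial distribution follows because the output is a function of the trajectory.

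The main obstacle is the second step: keeping the accumulated bias and correlation across the $L=\widetilde O(n)$ factors of the product bounded by a constant factor rather than letting it compound to $(1+\theta)^L$. I would first try to show that each factor's retained law is exactly $\mu_i$-stationary in the sense that $\mu_i$ is invariant under the map "run blocks until perfect". This would make the marginals exact and leave only the correlations, which decay geometrically in $|t-t'|$.
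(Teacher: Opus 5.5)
Your proposal has a genuine gap at its central step. You assert that, conditioned on the whole past, the next retained matching has density $1+O(\theta)$ relative to $\mu_i$, and that each block succeeds with probability at least $\kappa/2$. With block length $s=C\tau_*$ this is false. Conditioning on the past fixes the current state $x$. Its density $\one_{\{x\}}/\pi_i(x)$ has $L^2(\pi_i)$ norm $\pi_i(x)^{-1/2}$, which can be exponentially large in $n$, and one block reduces the deviation only by the constant factor $e^{-C/2}$. The contraction in \cref{lem:HWS-perfect-sampling} controls averaged laws with bounded density (warm starts), not conditional laws. Consequently the block counts of successive rounds are not independent geometric variables, so your cutoff argument fails as written. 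Charging each round to total variation, as your first paragraph suggests, would also need per-round error $O(1/(NL))$, i.e.\ blocks of length $\tau_*\log(NL)$. Your remedy for the bias, forcing $\theta\leq 1/L$ with a logarithmically longer block, likewise changes $s$ to $C\tau_*\log L$. So even if completed, it would not prove the lemma with $s=C\tau_*$.

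What is missing is the invariance you mention at the end but do not establish. You also need a contraction for the retained-matching chain that is uniform over the random number of blocks per round. The paper obtains both from a harmonic extension. For $h:\mathcal P\to\mathbb R$, let $H(x)$ be the expected value of $h$ at the first perfect state seen at times $0,s,2s,\dots$ from $x$. Then $H=h$ on $\mathcal P$ and $P_i^sH=H$ off $\mathcal P$, and $g:=P_i^sH$ restricted to $\mathcal P$ is the expected value at the next retained matching. Stationarity of $\pi_i$ gives $\E[\mu_i]{g}=\E[\mu_i]{h}$. Hence, from $M_{i,0}\sim\mu_i$, every retained matching is exactly $\mu_i$-distributed, and there is no bias term at all. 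Because $H$ and $P_i^sH$ agree off $\mathcal P$, writing $p_i=\pi_i(\mathcal P)$,
\[
p_i\bigl(\Var[\mu_i]{h}-\Var[\mu_i]{g}\bigr)
=\Var[\pi_i]{H}-\Var[\pi_i]{P_i^sH}
\geq\tfrac34\Var[\pi_i]{H}
\geq\tfrac34p_i\Var[\mu_i]{h},
\]
which yields $|\Cov(h(M_{i,t}),h(M_{i,t+k}))|\leq2^{-k}\Var[\mu_i]{h}$. With exact marginals and independence across activities, the relative covariance of $Y_t$ and $Y_{t+k}$ is $\prod_i(1+c_i)-1$ with $|c_i|\leq2^{-k}v_i$. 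Expanding the product bounds this by $2^{-k}(64/27-1)$. The $(1+\theta)^L$ compounding you feared never arises, because every perturbation is proportional to $v_i$. Exact stationarity also gives the cutoff: by the return-time identity each round uses $1/p_i\leq1/\kappa$ blocks in expectation, and Markov's inequality on the total transition count finishes.
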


\begin{proof}
By the variance-contraction argument above,
\begin{equation}
\Var[\pi_i]{P_i^sf}\leq\frac14\Var[\pi_i]{f}.
\label{eq:sampling-block-variance}
\end{equation}
Fix $h:\mathcal P\to\mathbb R$ and let $H(x)$ be its expected
value at the first perfect matching seen at times $0,s,2s,\ldots$
from $x$.
Then $H=h$ on $\mathcal P$ and $P_i^sH=H$ outside $\mathcal P$.
For $x\in\mathcal P$, set $g(x)=(P_i^sH)(x)$, the expected value
at the next retained matching.
Stationarity gives $\E[\mu_i]{g}=\E[\mu_i]{h}$, proving that
the retained matchings preserve $\mu_i$.
Writing $p_i=\pi_i(\mathcal P)$, we also obtain
\[
p_i\bigl(\Var[\mu_i]{h}-\Var[\mu_i]{g}\bigr)
=\Var[\pi_i]{H}-\Var[\pi_i]{P_i^sH}
\geq\frac34\Var[\pi_i]{H}
\geq\frac34p_i\Var[\mu_i]{h}.
\]
Thus $\Var[\mu_i]{g}\leq\Var[\mu_i]{h}/4$.
Iteration and Cauchy--Schwarz give
\begin{equation}
|\Cov(h(M_{i,t}),h(M_{i,t+k}))|
\leq2^{-k}\Var[\mu_i]{h}.
\label{eq:accepted-sample-covariance}
\end{equation}

Put
\[
h_i(M):=\frac{\lambda_{i+1}(M)}{\lambda_i(M)},
\qquad
v_i:=\frac{\Var[\mu_i]{h_i}}{\E[\mu_i]{h_i}^2},
\qquad
Y_t:=\prod_{i<L}h_i(M_{i,t}).
\]
Independence across activities gives $\E{Y_t}=Q=Z_L/Z_0$ and
$\prod_{i<L}(1+v_i)\leq64/27$.
Applying \cref{eq:accepted-sample-covariance} coordinatewise,
\[
\frac{|\Cov(Y_t,Y_{t+k})|}{Q^2}
\leq\prod_{i<L}(1+2^{-k}v_i)-1
\leq2^{-k}\left(\frac{64}{27}-1\right).
\]
The last inequality follows by expanding the product.
Consequently,
\[
\frac{\Var{\widehat Q}}{Q^2}
\leq\frac3N\left(\frac{64}{27}-1\right),
\qquad
\frac{\Var{\widehat p^{\,*}}}{(p^*)^2}
\leq\frac3N\frac{\Var[\mu_L]{G}}{(p^*)^2}
\leq\frac3N.
\]
Chebyshev's inequality makes each relative error at most
$\varepsilon/4$ except with probability $1/64$.

The return-time identity gives an expected $1/p_i\leq1/\kappa$
blocks per retained sample; see~\cite[Chapter~10]{LPW17}.
The expected number of transitions is at most
$s(L+1)N/\kappa$.
Increasing the cutoff constant makes its stopping probability
at most $1/64$ by Markov's inequality, including an unfinished
sampling attempt.
Outside these three events,
\[
\frac{\widehat Q\widehat p^{\,*}}{Qp^*}
\in[(1-\varepsilon/4)^2,(1+\varepsilon/4)^2]
\subseteq[1-\varepsilon,1+\varepsilon].
\]
The combined failure probability is below $1/16$.
Total variation contraction proves the assertion about
approximate initial distributions.
\end{proof}

\subsection{The simpler HWS algorithm}
\label{sub:Phase1-prop-proof}

\paragraph{Complete sampling tables.}
For fixed positive activities and weights, compute all scores,
normalizing sums, and the two pivot distributions for each hole
pair by direct summation in $O(n^3)$ time.
Balanced prefix-sum trees sample from these distributions in
$O(\log n)$ expected time per transition.
The remaining move choice and matching update use the stored
totals and mate arrays.
This implementation is valid for every positive weight table,
not only for rough weights.

We also record how to impose a time limit on this implementation.
A biased coin can be sampled by exposing fresh fair bits until
the corresponding dyadic interval lies on one side of its
probability threshold.
The probability that more than $j$ bits are needed is at most
$2^{-j}$, uniformly over the threshold and preceding history.
Uniform choices can likewise be made with $O(\log n)$ such coins.
For at most $m$ transitions, there are at most $Cm\log n$ coins.
Their conditional exponential moments are uniformly bounded,
so Markov's inequality applied to the product of these moments
gives total sampling time
\begin{equation}
O\bigl(m\log n+\log(1/\zeta)\bigr)
\label{eq:complete-table-time-cap}
\end{equation}
except with probability $\zeta$.
We stop and return zero if this time limit is exceeded.

\begin{proof}[Proof of \cref{thm:main-n4}]
Return zero if the input graph has no perfect matching.
Otherwise use the schedule from \cref{lem:cooling-schedule},
set $w_0(u,v)=n$, and put
\[
b=C\log(n/\delta),
\qquad
r=C\log(2/\delta),
\qquad
\eta=\frac1{C(L+1)},
\qquad
\Tburn=C\tau\left(n\log n+\log\frac1\eta\right).
\]
Since $\pi^{\mathrm{HWS}}_{\lambda,w}(M^*)\geq1/(10n!)$
for rough weights, a burn-in of $\Tburn=O(n^3\log^2 n)$
transitions from $M^*$ gives total variation error at most
$\eta$.

At temperature $\lambda_i$, conditional on the preceding
weights being accurate, $w_{i-1}$ is rough.
Run $b$ independent batches, each consisting of this burn-in
followed by $T_0=Cn^2\log n$ learning transitions.
By \cref{eq:HWS-median-accuracy}, a union bound over patterns
and temperatures makes the probability of the first inaccurate
update at most $\delta/8$.
The update is \cref{eq:HWS-empirical-weight-update}; its accuracy
follows from \cref{eq:HWS-recalibration-accuracy}.

After learning, fix the weights and prepare $r$ independent
counting executions.
At each temperature make $a=C(b+r)$ additional independent
burn-in runs from $M^*$ and retain the first $r$ perfect
final states.
Each final state is perfect with probability at least
$\kappa-\eta\geq\kappa/2$.
A Chernoff bound and a union bound make the probability of
obtaining fewer than $r$ at any temperature at most $\delta/8$.
At $\lambda_0=1$, sample uniformly directly.

Conditional on obtaining enough perfect states, the retained
states are independent samples from the burn-in distribution
conditioned on $\mathcal P$.
Indeed, condition first on which independent runs finish in
$\mathcal P$; their conditional matching values are independent
and have the same distribution, regardless of these indicators.
Conditioning increases the total variation error by at most
$2/\kappa$, so each counting execution has joint initial error
at most $2L\eta/\kappa\leq1/32$.
Apply \cref{lem:counting-perfect-starts} with
$G(M)=\one_{\{M\subseteq E\}}$.
Each execution succeeds with probability at least $7/8$.
The executions are conditionally independent, so their median
fails with probability at most $e^{-r/8}\leq\delta/4$.

Complete tables are built for $O(L)$ activity--weight pairs.
The transition count, including the counting cutoffs, is at most
\[
O\bigl(Lb(\Tburn+T_0)+La\Tburn
       +r(L+1)\tau\varepsilon^{-2}\bigr).
\]
Use \cref{eq:complete-table-time-cap} with $\zeta=\delta/8$;
all other work, including table construction and median
computation, is bounded deterministically.
The resulting time is
\[
O\left(
 n^4\log^5 n\log(n/\delta)
 +n^3\varepsilon^{-2}\log^4 n\log(2/\delta)
\right).
\]
All counts and cutoffs are fixed in advance and the zero guard
keeps every continuing weight table positive, so this time
bound also holds on inaccurate histories.
The combined failure probability is less than $\delta$.
\end{proof}

For completeness, the same argument proves the boosted-JSV
bound stated in \cref{sec:algorithm-overview}:
use its relaxation and frequency scales $Cn^3\log n$ in place
of $Cn^2\log n$ and its direct local transition implementation.
This gives $\widetilde O(n^5+n^4\varepsilon^{-2})$ time.

\subsection{Selecting the checkpoints}
\label{subsec:checkpoint-proof}

\begin{lemma}
\label{lem:checkpoint-schedule}
For the cooling schedule in \cref{sub:cooling-proofs}, suppose
positive estimates satisfy $\widehat Z_0=n!$ and
\begin{equation}
\frac9{10}Z_i\leq\widehat Z_i\leq\frac{11}{10}Z_i
\qquad(0\leq i\leq L).
\label{eq:checkpoint-prefix-accuracy}
\end{equation}
One can select $O(\sqrt n\log^2 n)$ checkpoints satisfying
\cref{eq:checkpoint-coverage} in $O(L\log L)$ time, excluding
the computation of the estimates.
Before processing $\lambda_i$, the rule uses only estimates
through index $i-1$, and any new checkpoint is at an index
at most $i-1$.
\end{lemma}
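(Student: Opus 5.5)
The key identity is
\[
\mathcal B(c,i)=\sum_{M\in\mathcal P}\frac{\mu_c(M)^2}{\mu_i(M)}
=\frac{Z(2\beta_c-\beta_i)\,Z(\beta_i)}{Z(\beta_c)^2},
\qquad Z(\beta):=Z_{\mathcal P}(\beta),
\]
valid since $\mu_\beta(M)\propto e^{-\beta H(M)}$. In log form, with $z=\log Z$ convex and nonincreasing, and writing $d=\beta_i-\beta_c$,
\[
\log\mathcal B(c,i)=\bigl[z(\beta_c-d)-z(\beta_c)\bigr]-\bigl[z(\beta_c)-z(\beta_c+d)\bigr].
\]
This is a second-difference of $z$. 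By convexity it is at most $d\bigl(z'(\beta_c+d)-z'(\beta_c-d)\bigr)$, and it is also at most $z(\beta_c-d)-z(\beta_c+d)$.

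We cannot evaluate $z$ at $2\beta_c-\beta_i$, since that point is generally not on the schedule. So the plan is to select checkpoints with a computable sufficient condition. Using convexity, $z(\beta_c-d)-z(\beta_c)\le z(\beta_{c'})-z(\beta_c)$ whenever $\beta_{c'}\le\beta_c-d$, so the first bracket is bounded by a drop of $z$ between schedule points. The rule is then greedy. Given checkpoint $c_j$, let $c_{j+1}$ be the largest index $i$ for which some earlier schedule index $c'\le c_j$ satisfies $\beta_c-\beta_{c'}\ge\beta_i-\beta_c$ and
\[
\log\widehat Z_{c'}-\log\widehat Z_{c_j}-\bigl(\log\widehat Z_{c_j}-\log\widehat Z_i\bigr)\le\log 64-4\log\tfrac{11}{9}.
\]
The estimates \cref{eq:checkpoint-prefix-accuracy} perturb each log-ratio by at most $\log(11/9)$, and four terms appear, so the true $\mathcal B(c_j,i)\le64$ for every $i\le c_{j+1}$. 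This uses monotonicity in $d$ of the second-difference, which again follows from convexity. Because each step of the schedule drops $z$ by at most $1/4$ by \cref{eq:binary-cooling-drop}, the greedy rule always advances at least one index. Choosing $c'$ as the nearest earlier point at distance $\ge d$ only costs $O(1)$ in the log, via the step-length ratio bound \cref{eq:binary-cooling-increments}; that slack is absorbed by using a smaller threshold. The rule reads only $\widehat Z$ at indices up to the candidate. Declaring $c_{j+1}=i-1$ when $i$ first fails makes the causality claim hold, since only estimates through $i-1$ are used before $\lambda_i$. Maintaining pointers with binary search over $\beta$ gives $O(L\log L)$ time.

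The main obstacle is the count $\widehat L=O(\sqrt n\log^2 n)$. Following~\cite{SVV09}, set $\delta_j=z(\beta_{c_j})-z(\beta_{c_{j+1}})$. When the greedy rule stops, the second-difference at scale $d_j=\beta_{c_{j+1}}-\beta_{c_j}$ is $\Omega(1)$, which gives $d_j\bigl(z'(\beta_{c_j}-d_j)-z'(\beta_{c_{j+1}}+O(d_j))\bigr)=\Omega(1)$. Group checkpoints into dyadic scales of $\beta$, giving $O(\log n)$ groups by the $\beta\ge\Delta/n$ geometric structure. Within a group $d_j$ is comparable across steps. Two sums are bounded: the total slope change $\sum_j\bigl(z'(\cdot)-z'(\cdot)\bigr)\le n$, because $-z'\in[0,n]$, and the total drop $\sum_j d_j\,|z'|\le\Delta=O(n\log n)$. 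Cauchy--Schwarz on these two sums then gives $O(\sqrt{n\cdot\Delta/\beta\cdot\beta})$ steps per group. More carefully, pairing the slope-change budget $n$ with the drop budget $\Delta$ yields $O(\sqrt{n\log n}\cdot\log n)$ overall. I would verify this step by mimicking the SVV $\sqrt{}$-schedule argument. The second-difference condition is exactly their ``overlap'' criterion, so their bound transfers with $\Delta$ replacing $\log$ of the partition-function range.

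The final point is coverage of every intermediate index $i$ with $c_j\le i\le c_{j+1}$. This follows from the monotonicity in $d$ noted above, which holds because $\phi(d)=z(\beta_c-d)+z(\beta_c+d)-2z(\beta_c)$ is nondecreasing for convex $z$.
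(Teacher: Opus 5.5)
\textbf{Verdict: there is a genuine gap.} Your certificate and coverage steps match the paper's. You round $2\beta_c-\beta_i$ down to a cooling point, absorb the estimation error in the threshold, and use monotonicity of the second difference $\phi$ to cover intermediate indices. The main gap is the checkpoint count, which you do not actually prove. There are also two smaller defects in the selection rule.

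\textbf{The rule cannot leave $\lambda_0$.} With $c_j=0$, every candidate $i\geq1$ needs an index $c'$ with $\beta_{c'}\leq-(\beta_i-\beta_0)<0$. No such index exists, so the first-failure rule never advances past index $0$. Your claim that the rule always advances is therefore false at the start. The paper seeds the sequence separately. It uses $\mathcal B(0,i)=Z(\beta_i)Z(-\beta_i)/Z(0)^2\leq e^{n\beta_i}$ (valid since $0\leq H\leq n$) to cover $i\leq4$, where $\beta_4=1/n$. Afterwards its geometric test $\beta_i\leq3\alpha/2$ keeps $2\alpha-\beta_i\geq\alpha/2$.

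\textbf{The rule violates the causality clause.} Your test for candidate $i$ reads $\widehat Z_i$, but the lemma requires using only estimates through $i-1$ before processing $\lambda_i$. Use $\widehat Z_{i-1}$ instead. It is still an upper certificate, since $Z_i\leq Z_{i-1}\leq e^{1/4}Z_i$ by \cref{eq:binary-cooling-drop}. This is why the paper's certificate is $U_i=2\widehat Z_{i-1}\widehat Z_j/\widehat Z_c^2$.

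\textbf{The count of checkpoints is not established.} This is the real content of the lemma, and you leave it to an unspecified transfer of~\cite{SVV09}. Your sketch fails in three places.
\begin{itemize}
\item Grouping by dyadic ranges of $\beta$ does not make the gaps comparable. Gaps are set by the local curvature of $z$, and within one range $[\beta_0,2\beta_0]$ they can vary from a single cooling step to $\Theta(\beta_0)$.
\item The intervals $[\beta_{c_j}-O(d_j),\beta_{c_j}+O(d_j)]$ on which you measure slope change reach back over earlier checkpoints. Bounding their total slope change by $n$ therefore needs an overlap argument, which you do not give.
\item Pairing the global budgets you name (slope change at most $n$, length or drop of order $\Delta$) by Cauchy--Schwarz gives only $O(\sqrt{n\Delta})=O(n\sqrt{\log n})$.
\end{itemize}

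\textbf{What is missing} is a scale-invariant curvature budget, together with grouping by gap length. In the paper, consider a change from $\alpha$ to $\alpha'=\alpha+\ell$ forced by the certificate. Its failing candidate lies at distance $d_+\leq3\ell$, with $d_+\leq\alpha/2$. On $I=[\alpha-d_+,\alpha+d_+]$ one gets
\[
\log8<\int_I(d_+-|t-\alpha|)z''(t)\,dt\leq9\log(\alpha'/\alpha)\,\mathfrak m(I),
\qquad d\mathfrak m(t)=tz''(t)\,dt .
\]
By integration by parts, $\mathfrak m([0,B])=Bz'(B)-z(B)+z(0)\leq\log(n!)$. Grouping the changes by the dyadic scale of $\ell$ (not of $\beta$) gives bounded overlap of the intervals $I$ within each of the $O(\log n)$ scales. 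Cauchy--Schwarz against $\sum\log(\alpha'/\alpha)=O(\log n)$ then gives $J^2=O(n\log^3 n)$.

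One more point: some changes in your rule happen only because $2\beta_c-\beta_i$ has no admissible cooling point. These do not certify a large overlap, so they must be counted separately, as the paper does for its geometric test. There are $O(\log n)$ of them.
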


\begin{proof}
Write $Z(t)=\sum_{M\in\mathcal P}e^{-tH(M)}$, $z=\log Z$,
and $\beta_i=-\log\lambda_i$, with $B,\Delta,d_i$ as in
\cref{sub:cooling-proofs}.
Direct substitution gives
\[
b(\alpha,\beta):=\frac{Z(\beta)Z(2\alpha-\beta)}{Z(\alpha)^2},
\qquad b(\beta_c,\beta_i)=\mathcal B(c,i).
\]

\paragraph{Selection rule and coverage.}
Initially use checkpoint $0$, and select checkpoint $4$ after
that temperature has been processed.
The step rule gives $\beta_4=1/n$, and
$b(0,\beta_i)\leq e^{n\beta_i}\leq e$ for $i\leq4$.
For each subsequent candidate $i$, let $c<i$ be the current
checkpoint and $\alpha=\beta_c$.
If $\beta_i>3\alpha/2$, select $i-1$ as the next checkpoint
and reconsider $i$.
Otherwise find $j$ with
\[
\beta_j\leq2\alpha-\beta_i<\beta_{j+1}
\]
and compute
\[
U_i:=2\frac{\widehat Z_{i-1}\widehat Z_j}{\widehat Z_c^2}.
\]
Accept $i$ with the current checkpoint if $U_i\leq64$;
otherwise select $i-1$ as a checkpoint and reconsider $i$.
Append $L$ as the final endpoint.

Rounding each inverse temperature to the preceding cooling
point changes $\log Z$ by at most $1/4$.
Thus \cref{eq:checkpoint-prefix-accuracy,eq:binary-cooling-drop}
give
\begin{equation}
b(\alpha,\beta_i)\leq U_i\leq5b(\alpha,\beta_i),
\label{eq:checkpoint-certificate}
\end{equation}
since the ratio $U_i/b(\alpha,\beta_i)$ lies between
$2(9/11)^2$ and $2(11/9)^2e^{1/2}<5$.
There is no immediate second checkpoint change:
for $\beta_j\geq1/n$, the step rule gives $d_j\leq\beta_j/4$
and the derivative bounds give
\[
\log b(\beta_j,\beta_{j+1})
\leq d_j\min\left\{n,\frac{\Delta}{\beta_j-d_j}\right\}
\leq\frac13.
\]
Hence an adjacent candidate passes the geometric test and
$U_i\leq5e^{1/3}<64$.
Every accepted candidate has overlap at most $64$; a newly
selected checkpoint $i-1$ was already accepted relative to
the previous checkpoint.
This proves coverage, including both endpoints of every interval.
Only estimates through $i-1$ are used.
Binary search for $j$ takes $O(\log L)$ time, and every candidate
is tested at most twice.

\paragraph{Number of checkpoints.}
Define a nonnegative measure by
$d\mathfrak m(t)=t z''(t)\,dt$ on $[0,B]$.
Integration by parts gives
\[
\mathfrak m([0,B])=Bz'(B)-z(B)+z(0)\leq\log(n!),
\]
since $Z(B)\geq1$ and $z'(B)\leq0$.
A geometric-test change moves the checkpoint from $\alpha$
to $\alpha'=\beta_{i-1}>6\alpha/5$, because
$\beta_i\leq5\beta_{i-1}/4$.
There are $O(\log n)$ such changes.

For a change caused by $U_i>64$, put
\[
\ell=\alpha'-\alpha,
\qquad d_+=\beta_i-\alpha,
\qquad I=[\alpha-d_+,\alpha+d_+].
\]
There is at least one full step between $\alpha$ and $\alpha'$.
The step-growth bound gives
$\ell\geq1/(4n)$ and $d_+\leq3\ell$, while the geometric
test gives $d_+\leq\alpha/2$.
By \cref{eq:checkpoint-certificate}, $b(\alpha,\beta_i)>8$, so
\begin{align*}
\log8
&<z(\alpha+d_+)+z(\alpha-d_+)-2z(\alpha)\\
&=\int_I(d_+-|t-\alpha|)z''(t)\,dt\\
&\leq9\log(\alpha'/\alpha)\,\mathfrak m(I).
\end{align*}
The last inequality uses $t\geq\alpha/2$, $d_+\leq3\ell$,
and $\ell/\alpha\leq(3/2)\log(1+\ell/\alpha)$.

The intervals $[\alpha,\alpha']$ have disjoint interiors.
Group their lengths into the $O(\log n)$ dyadic ranges between
$1/(4n)$ and $B$.
In a range $\ell\in[h,2h)$, their starting points are at least
$h$ apart and $I\subseteq[\alpha-6h,\alpha+6h]$.
Thus the enlarged intervals have bounded overlap in each range,
and
\[
\sum\mathfrak m(I)=O(\log n)\mathfrak m([0,B])
=O(n\log^2 n),
\qquad
\sum\log(\alpha'/\alpha)\leq\log(nB)=O(\log n).
\]
If $J$ is the number of certificate-test changes,
Cauchy--Schwarz yields
\[
J^2\log8
\leq9\left(\sum\log(\alpha'/\alpha)\right)
       \left(\sum\mathfrak m(I)\right)
=O(n\log^3 n).
\]
Including the geometric changes and endpoints gives the stated
$O(\sqrt n\log^2 n)$ bound.
On arbitrary estimates, the algorithm returns zero if the rule
requests an already current checkpoint or exceeds the prescribed
checkpoint cap; thus its selection time is always bounded.
\end{proof}

\subsection{Faster implementation of HWS transitions}
\label{subsec:fast-transitions}

\begin{lemma}
\label{lem:fast-transition-implementation}
For a $0/1$ input, a fixed scalar activity $\lambda\in(0,1]$,
and any positive hole weights, HWS can be implemented with
$O(n^{5/2}\log n)$ preprocessing time and
$O(\sqrt n\log n)$ expected time per transition.
The transition probabilities are unchanged, and the expected-time
bound holds uniformly over the current state and preceding history.
\end{lemma}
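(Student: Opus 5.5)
We would implement each HWS transition in two stages. First choose the move type: add/delete, slide fixing the right hole, slide fixing the left hole, or self-loop. Then choose the pivot. The move type has known total probabilities, since $D^L(u,v)$, $D^R(u,v)$ and $W(u,v)$ are precomputed. Only the pivot choice needs care.

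\textbf{Preprocessing.} For a $0/1$ input with scalar activity $\lambda$, every activity $\lambda(z,a)$ is either $1$ or $\lambda$. Hence
\[
L_v(a)^{-1}=\lambda\sum_{z}\frac1{w(z,v)}+(1-\lambda)\sum_{z}\frac{A_{za}}{w(z,v)} .
\]
The second sum is an entry of the product $A^{\mathsf T}\,(1/w)$, and $R_u(x)^{-1}$ is handled symmetrically. Both families of scores cost one matrix product, i.e.\ $O(n^\omega)\le O(n^{5/2})$ time. Similarly,
\[
D^L(u,v)=\lambda\sum_{a\ne v}L_v(a)+(1-\lambda)\sum_{a\neq v}A_{ua}L_v(a)
\]
is again a matrix product, so all $W(u,v)$ cost $O(n^{5/2})$.

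\textbf{Pivot sampling for a fixed hole pair.} Fix a hole pair $(u,v)$ and consider a slide fixing the right hole. The target probability of pivot $a$ is proportional to $\lambda(u,a)L_v(a)$. Split this as
\[
\lambda(u,a)L_v(a)=\lambda L_v(a)+(1-\lambda)A_{ua}L_v(a).
\]
The first ("nonedge-like") part depends only on $v$. A single prefix-sum tree per $v$, of total size $O(n^2)$, samples it in $O(\log n)$ time. The second part is supported on true edges $(u,a)\in E$, and this is where row-by-column tables would cost $n^3$.

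Following the plan in \cref{sec:improved-algorithm}, for each $(u,v)$ we would store only the $k=\sqrt n$ heaviest eligible true-edge pivots together with their exact total mass. The remaining mass equals $\sum_{a\neq v}A_{ua}L_v(a)$ (already known from the product) minus the stored mass, so it is known exactly. To sample from the remainder, we propose $a$ uniformly among the true-edge neighbours of $u$. We reject if $a$ is among the top $k$ or $a=v$; otherwise we accept with probability $L_v(a)/\ell_{u,v}$, where $\ell_{u,v}$ is the $k$-th largest stored score. Every unstored score is at most $\ell_{u,v}$, so this acceptance probability is valid. The resulting sample has exactly the right conditional law, so the transition probabilities are unchanged.

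\textbf{Expected cost.} The remaining mass is at most $\deg(u)\,\ell_{u,v}$. The acceptance rate is the remaining mass divided by $\deg(u)\,\ell_{u,v}$, and each accepted proposal costs $O(\log n)$. We only enter the remainder branch with probability equal to remainder over total, and the total is at least $k\,\ell_{u,v}$. Multiplying the entry probability by the expected number of proposals gives $O(\deg(u)/k)=O(n/k)=O(\sqrt n)$ expected trials. This bound holds uniformly in the state and history. It also yields the stated $O(\sqrt n\log n)$ per transition, since $O(1+n/k)$ rounds at $O(\log n)$ each suffices.

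\textbf{Building the partial tables.} The main obstacle is constructing all top-$k$ lists in $O(n^{5/2}\log n)$ time rather than $O(n^3)$. We would exploit the fact that the top-$k$ list for $(u,v)$ is the top $k$ of the fixed column order $\{L_v(a)\}_a$ restricted to the neighbourhood of $u$. So for each $v$ we sort $a$ by $L_v(a)$ once, at $O(n^2\log n)$ total cost. For each $(u,v)$ we then need the first $k$ neighbours of $u$ in that order.

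Scanning naively fails for sparse rows, so we split by degree. If $\deg(u)\le k$, we simply store all neighbours; there are $n^2$ pairs at $O(k)$ each. If $\deg(u)>k$, a random position is a neighbour with probability $\deg(u)/n$. Blocking each $v$-order into chunks and using bitset or matrix-product counts of neighbours per chunk (one batched product of $A$ against chunk indicator matrices) locates the $k$-th neighbour in $O(k+n/k)$ per pair. The total is $O(n^2(k+n/k))=O(n^{5/2})$ plus $O(n^{\omega})$.

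If this chunking proves awkward, the fallback is to enlarge the slack. We would store any $\Theta(k)$ pivots forming a superset that contains every pivot above some threshold $\ell'_{u,v}$ with $\ell'_{u,v}\cdot\deg(u)=O((n/k)\cdot \text{stored mass})$; the rejection analysis only needs this inequality, not exact top-$k$.

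Left-hole slides are symmetric, add/delete steps are $O(1)$, and updating the matching uses the mate arrays.
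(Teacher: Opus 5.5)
\textbf{Verdict: genuine gap.} Most of your proposal is correct and matches the paper. That includes the matrix products for the reciprocal scores, $D^L$, $D^R$ and $W$, and the mixture split $\lambda(u,a)L_v(a)=\lambda L_v(a)+(1-\lambda)A_{ua}L_v(a)$. It also includes the head/remainder rejection sampler: your two-stage version is an exact variant of the paper's single envelope $\sigma+dh$, with the same $O(1+\deg(u)/k)$ expected number of proposals.

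The gap is the step you flag yourself: building all $n^2$ top-$k$ lists in $O(n^{5/2}\log n)$ time. The chunking argument does not achieve this, for two reasons.
\begin{itemize}
\item \emph{The counting product is too expensive.} With chunk size $B$, the chunk-indicator matrix has $n^2/B$ columns. The batched product with $A$ therefore has $n^3/B$ entries, and done as $n/B$ square products it costs $O(n^{\omega+1}/B)$. For $B=\sqrt n$ this is $O(n^{\omega+1/2})$, which is not $O(n^{5/2})$ unless $\omega=2$.
\item \emph{The counts do not produce the lists.} Locating the chunk containing the $k$-th neighbour is not enough: the sampler needs the actual list, for the prefix-sum tree, the threshold $\ell_{u,v}$ and membership tests. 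Extracting the neighbours of $u$ inside a chunk requires scanning it. The order of $L_v(\cdot)$ is set by arbitrary hole weights, so the remark that a random position is a neighbour with probability $\deg(u)/n$ does not apply. The first $k$ neighbours can lie in $k$ distinct chunks, costing $\Theta(kB)$ per pair. For every choice of $B$, the total $n^3/B+n^2(n/B+kB)$ with $k=\sqrt n$ is $\Omega(n^{11/4})$.
\end{itemize}
Your fallback only restates the inequality the rejection step needs. Finding such a superset is no easier than finding the exact top $k$.

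The idea you are missing is the paper's rank-encoding product, which finds the top neighbours of every row, in every column order, simultaneously.
\begin{itemize}
\item After sorting each score column and assigning distinct ranks $r_v(a)$, set $b_0=8n+1$, $E_{av}=b_0^{r_v(a)}$ and $F_{av}=a\,b_0^{r_v(a)}$. Both entries are zero for zero scores, which also excludes $a=v$.
\item Compute $S=AE$ and $T=AF$ with two square products.
\item In $S_{uv}$, the term of the highest-ranked eligible neighbour $a_*$ exceeds all the others combined by a factor $b_0-1$. Hence $|T_{uv}/S_{uv}-a_*|<(n-1)/(b_0-1)<1/8$, and comparing against half-integers recovers $a_*$ in $O(\log n)$ steps.
\item Subtracting $a_*$'s contributions from $S_{uv}$ and $T_{uv}$ exposes the next pivot. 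Peeling $k=\sqrt n$ times per pair builds all lists in $O(n^{\omega}+n^2k\log n)=O(n^{5/2}\log n)$ time.
\end{itemize}
This relies on unit-cost arithmetic on $\Theta(n\log n)$-bit integers, which the paper's model of computation explicitly allows.
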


\begin{proof}
Put $V_{uv}=1/w(u,v)$ and $\Lambda_{uv}=\lambda(u,v)$.
The products $V^{\mathsf T}\Lambda$ and $V\Lambda^{\mathsf T}$
give the two reciprocal-score denominators.
After taking reciprocals and setting forbidden coordinates to
zero, further matrix products give the true-edge score sums
and hence $D^L,D^R,W$.
A constant number of square matrix products takes
$O(n^{5/2})$ time by~\cite{CW90}.

\paragraph{Constructing the partial tables.}
For one score family $s_v(a)\geq0$, the true-edge distribution
at hole pair $(u,v)$ has unnormalized weights $A_{ua}s_v(a)$.
Sort the positive scores for each $v$, breaking ties by $a$,
and assign distinct ranks $r_v(a)$ increasing with score.
Set $b_0=8n+1$ and form
\[
E_{av}=b_0^{r_v(a)},
\qquad F_{av}=a\,b_0^{r_v(a)},
\]
with both entries zero for zero scores.
Compute $S=AE$ and $T=AF$.
If $S_{uv}>0$ and $a_*$ is the largest-ranked eligible pivot,
then
\[
\left|\frac{T_{uv}}{S_{uv}}-a_*\right|
<\frac{n-1}{b_0-1}<\frac18.
\]
Binary search using comparisons with the half-integers between
labels therefore finds $a_*$ in $O(\log n)$ time.
Subtract its two contributions from $S_{uv},T_{uv}$ and repeat
until $k=\sqrt n$ pivots have been found or none remain.
This constructs every top-$k$ list $H_{uv}$, together with
membership and weighted prefix-sum trees, in
$O(n^{5/2}+n^2k\log n)=O(n^{5/2}\log n)$ time.

\paragraph{Sampling from the tables.}
For one hole pair, put
\[
Z=\sum_a A_{ua}s_v(a),
\qquad \sigma=\sum_{a\in H_{uv}}s_v(a),
\qquad d=|\{a:A_{ua}=1\}|.
\]
If the list contains all positive eligible pivots, sample
directly from it; a zero-mass distribution is never selected.
Otherwise let $h$ be its smallest score and set
$\Gamma=\sigma+dh$.
With probability $\sigma/\Gamma$, sample a head pivot
proportionally to its score and accept it.
Otherwise choose a uniform neighbor $a$ of $u$; reject if it is
forbidden, has score zero, or is in the head, and otherwise
accept with probability $s_v(a)/h$.
Repeat until acceptance.
Every eligible pivot has one-attempt acceptance probability
$s_v(a)/\Gamma$, so the resulting distribution is exact.
Since $kh\leq\sigma\leq Z$,
\[
\Pr{\text{accept in one attempt}}
=\frac Z\Gamma\geq\frac1{1+d/k}\geq\frac1{1+n/k}.
\]
Each attempt takes $O(\log n)$ expected time, giving
$O(\sqrt n\log n)$ per query.

Finally, the identity
\[
\lambda(u,a)s_v(a)
=\lambda s_v(a)+(1-\lambda)A_{ua}s_v(a)
\]
expresses the pivot distribution as a mixture of the full score
vector and the true-edge distribution, with precomputed mixture
weights.
Use $s_v(a)=L_v(a)$ for left slides and the transposed construction
with $R_u(x)$ for right slides.
Move-type choices use the stored totals; mate arrays allow
constant-time local updates.
Every bound holds for arbitrary positive supplied weights.
\end{proof}

\subsection{Proof of the improved theorem}
\label{subsec:improved-main-proof}

\begin{proof}[Proof of \cref{thm:main-improved}]
Return zero if no true perfect matching exists.
Otherwise we analyze one execution of both phases, as described
in \cref{sec:improved-algorithm}.
Set
\[
b=C\log n,
\qquad
J_{\max}=C_{\mathrm{cp}}\sqrt n\log^2 n,
\qquad
R_{\max}=C(L+1)\log n,
\qquad
\eta=\frac1{C R_{\max}(J_{\max}+1)}.
\]
Here $R_{\max}$ bounds the number of requested perfect samples,
including learning, prefix estimation, and counting starts.
A request at $\lambda_i$ starts with a fresh uniform perfect
matching and visits the preceding selected checkpoints and
then $\lambda_i$, using \cref{lem:HWS-perfect-sampling} with
error $\eta$ at every conversion.
During learning, the final conversion uses $w_{i-1}$;
intermediate checkpoints use their fixed learned weights.
After learning, the final conversions for counting use $w_i$.
All requests use fresh randomness.

\paragraph{Prefix estimates and checkpoint selection.}
At each temperature $i<L$, obtain separate samples for
$b$ groups of a sufficiently large constant number of prefix
replicas.
For replica $q$, maintain
\[
V_j^{(q)}=\prod_{i<j}
\frac{\lambda_{i+1}(M_i^{(q)})}{\lambda_i(M_i^{(q)})}.
\]
Set $\widehat Z_0=n!$ and let $\widehat Z_j$ be $n!$ times
the median of the group averages.
In a comparison in which all sample requests return independent
exact samples from their respective $\mu_i$,
\[
\E{V_j^{(q)}}=Z_j/n!,
\qquad
\frac{\E{(V_j^{(q)})^2}}{\E{V_j^{(q)}}^2}\leq64/27.
\]
Chebyshev's inequality, median amplification, and a union bound
therefore make all prefix estimates satisfy
\cref{eq:checkpoint-prefix-accuracy}, except with probability
at most $1/64$.
The samples at $i$ produce $\widehat Z_{i+1}$, so all estimates
needed by the checkpoint rule are already available.

At each $i\geq1$, use $b$ additional requests as independent
perfect starts for the learning trajectories of length
$T_0=Cn^2\log n$.
Apply \cref{eq:HWS-median-accuracy,eq:HWS-recalibration-accuracy}
with the update \cref{eq:HWS-empirical-weight-update}.
Conditional on any preceding accurate history, the weights
$w_{i-1}$ are rough and the batches are independent in the
exact-sampling comparison.
Summing the probabilities of the first inaccurate update gives
failure probability at most $1/64$ over all temperatures.
Return zero if any required median is zero, any sampling call
fails, or the checkpoint or request cap is exceeded.

\paragraph{Accuracy of fresh traversals.}
Fix the information available before one request, assuming the
preceding weights and prefix estimates are accurate.
The checkpoint sequence and all transition rules in the request
are then fixed.
If $K$ is one conversion, including an absorbing failure symbol,
total variation contraction gives
\[
\|\nu K-\mu_{\mathrm{next}}\|_{\mathrm{TV}}
\leq\|\nu-\mu_{\mathrm{previous}}\|_{\mathrm{TV}}
 +\|\mu_{\mathrm{previous}}K-\mu_{\mathrm{next}}\|_{\mathrm{TV}}.
\]
The second term is at most $\eta$ by
\cref{lem:checkpoint-schedule,lem:HWS-perfect-sampling}.
Induction over a traversal gives output error at most
$(J_{\max}+1)\eta$.
Replacing the requests successively by exact independent
samples changes the probability of any event, up to the first
inaccurate weight or prefix estimate, by at most
\[
R_{\max}(J_{\max}+1)\eta\leq1/64
\]
when $C$ is sufficiently large.
Only the returned samples are used for learning and selecting
checkpoints; the overall running-time cutoff is analyzed
separately below.
For the prefix union bound, the entire independent sample array
in the comparison may be generated in advance, so early stopping
does not change the bound.

After learning, obtain fresh counting starts at every activity.
Conditional on the completed learning and checkpoint history,
they are independent with distributions $\mu_i$ in the
exact-sampling comparison.
Apply \cref{lem:counting-perfect-starts} with $\tau_*=\tau$
and $G(M)=\one_{\{M\subseteq E\}}$.
Its transition cutoff is $C(L+1)\tau\varepsilon^{-2}$.
Combining counting failure, the two estimation failures, and
the request-comparison error gives total failure probability
at most $7/64$ before the final time cutoff.

\paragraph{Running time.}
We have
\[
L=O(n\log^2 n),
\qquad R_{\max}=O(n\log^3 n),
\qquad J_{\max}=O(\sqrt n\log^2 n),
\qquad \log(2/\eta)=O(\log n).
\]
Every conversion uses at most $O(n^2\log^3 n)$ transitions.
At intermediate checkpoints, complete tables are constructed
once and reused for all traversals.
Their total construction and expected sampling time is
\[
O\left(J_{\max}n^3+
R_{\max}J_{\max}n^2\log^4 n\right)
=O(n^{7/2}\log^9 n).
\]
At ordinary temperatures use
\cref{lem:fast-transition-implementation}.
There are $O(L)$ activity--weight pairs, $R_{\max}$ final
conversions, and $Lb$ learning trajectories.
Their total expected time is
\[
O\left(
Ln^{5/2}\log n
+R_{\max}n^{5/2}\log^4 n
+Lb\,n^{5/2}\log^2 n
\right)
=O(n^{7/2}\log^7 n).
\]
Counting, after initialization, takes expected time
\[
O\left((L+1)n^{5/2}\varepsilon^{-2}\log^2 n\right)
=O(n^{7/2}\varepsilon^{-2}\log^4 n).
\]
Maintain the number of nonedges under local moves and precompute
the possible annealing factors.
Updating prefix products and evaluating the whole product,
as well as finding checkpoints and computing medians, is
absorbed by these bounds.

The transition, request, and checkpoint caps hold on every
history, and both implementations have their stated expected
times for every positive weight table.
The expected time before the final cutoff is therefore at most
$C_0n^{7/2}\varepsilon^{-2}\log^9 n$, whether or not the
learned weights and prefix estimates are accurate.
Stop and return zero if the total time exceeds
$Cn^{7/2}\varepsilon^{-2}\log^9 n$.
Markov's inequality makes this additional failure probability
at most $1/16$.
Each capped execution consequently succeeds with probability
at least $3/4$.
Repeating both phases independently $C\log(2/\delta)$ times
and returning their median proves the theorem, with total time
\[
O\left(n^{7/2}\varepsilon^{-2}\log^9 n\log(2/\delta)\right).
\]
\end{proof}

\subsection{Extension to nonnegative matrices}
\label{sub:weighted-inputs}

\begin{proof}[Proof sketch of \cref{cor:weighted-main}]
The extension to nonnegative matrices is the same as done in previous works~\cite{JSV04,BSVV08}, and hence we omit the details in this preprint.

The preprocessing of~\cite[Theorem~3.1]{LSW98} computes a
positive diagonal scaling $A'=D_1AD_2$, followed by a column
permutation and row normalization, such that
\[
\sum_v a'_{uv}=1,
\qquad
a'_{uu}=\max_v a'_{uv}\geq1/n.
\]
It also gives the scaling factor $\alpha$ with
$\per{A'}=\alpha\per{A}$, in $O(n^5\log n)$ time.
The diagonal perfect matching $P^*$ has weight at least $n^{-n}$.

As before, first return zero if the support of the input matrix $A$ contains no perfect matching.  Then, after applying the scaling of \cite{LSW98} and the
standard regularization of zero and sufficiently small entries,
we apply the framework of \cref{thm:main-n4}, using
the weighted cooling schedule from
\cref{lem:general-cooling-schedule}.
Our HWS relaxation and frequency-estimation bounds hold for arbitrary positive edge activities, and $P^*$ plays the role of $M^*$ for initialization.
Using complete sampling tables and the corresponding bounded terminal correction, the remaining running time is as in \cref{thm:main-n4}.
Finally, divide the estimate of $\per{A'}$ by $\alpha$.
Adding the preprocessing cost gives the corollary.
\end{proof}

\section*{Acknowledgements}
Heng Guo would like to thank Weiming Feng and Yucheng Fu for suggesting transferring some techniques in the two-terminal reliability setting to the permanent setting.
Eric Vigoda would like to thank Daniel {\v{S}}tefankovi\v{c} for helpful conversations related to the permanent algorithm.

The main ideas in this paper resulted from many discussions with LLM. The final text and proofs were written independently by the human authors, 
with LLM utilized for editorial suggestions.  The authors independently verified all mathematical claims and take full responsibility for the final content.

\bibliographystyle{alpha}
\bibliography{refs}

\end{document}